\definecolor{zx_green}{rgb}{216,248,216}
\definecolor{zx_red}{rgb}{232,165,165}
\tikzstyle{dot}=[inner sep=0.3mm, minimum width=2mm, minimum height=2mm, draw, shape=circle, font={\footnotesize}, tikzit fill=magenta]
\tikzstyle{white dot}=[dot, fill=white, text depth=-0.2mm, tikzit category=ZH-pf]
\tikzstyle{gray dot}=[dot, fill=black, text depth=-0.2mm, tikzit category=ZH-pf]
\tikzstyle{gray phase dot}=[gray dot, fill=black, tikzit fill=magenta]
\tikzstyle{hadamard}=[fill=yellow, draw, inner sep=0.6mm, minimum height=1.5mm, minimum width=1.5mm, shape=rectangle, tikzit shape=rectangle, tikzit category=ZH-pf]
\tikzstyle{small hadamard}=[fill=yellow, draw, inner sep=0.6mm, minimum height=1.5mm, minimum width=1.5mm, tikzit shape=rectangle]
\tikzstyle{small gray hadamard}=[fill=black, draw, inner sep=0.6mm, minimum height=1.5mm, minimum width=1.5mm, tikzit shape=rectangle]
\tikzstyle{halfscalar}=[star, fill=black, draw=black, minimum size=6pt, inner sep=0pt]
\tikzstyle{scalar}=[regular polygon, fill=black, draw=black, minimum size=6pt, inner sep=0pt, regular polygon sides=3]
\tikzstyle{box}=[shape=rectangle, text height=1.5ex, text depth=0.25ex, yshift=0.2mm, fill=white, draw=black, minimum height=3mm, minimum width=5mm, font={\small}]
\tikzstyle{smallbox}=[draw, fill=white, inner sep=0.6mm, minimum height=1.5mm, minimum width=1.5mm, font={\scriptsize}, shape=rectangle]
\tikzstyle{multiply box}=[fill=white, draw, inner sep=0.6mm, minimum height=1ex, minimum width=2ex, text depth=0ex, text height=1.25ex, tikzit shape=rectangle, font={$\cdot$}]
\tikzstyle{Z dot}=[inner sep=0mm, minimum size=2mm, shape=circle, draw=black, fill={rgb,255: red,216; green,248; blue,216}, tikzit fill={rgb,255: red,216; green,248; blue,216}]
\tikzstyle{Z phase dot}=[minimum size=5mm, font={\footnotesize\boldmath}, shape=rectangle, rounded corners=2mm, inner sep=0.2mm, outer sep=-2mm, scale=0.8, tikzit shape=circle, draw=black, fill={zx_green}, tikzit fill={rgb,255: red,216; green,248; blue,216}, tikzit draw=blue]
\tikzstyle{X dot}=[Z dot, shape=circle, draw=black, fill={zx_red}]
\tikzstyle{X phase dot}=[Z phase dot, tikzit shape=circle, tikzit draw=blue, fill={zx_red}, font={\footnotesize\color{black}\boldmath}]
\tikzstyle{lbl}=[font={\scriptsize}]
\tikzstyle{monoid}=[shape=semicircle, fill=white, draw=black, inner sep=0.5mm, rotate=180]
\tikzstyle{gray monoid}=[shape=semicircle, fill=black, draw=black, inner sep=0.5mm, rotate=180]
\tikzstyle{red monoid}=[shape=semicircle, fill={zx_red}, draw=black, inner sep=0.5mm, rotate=180]
\tikzstyle{yellow monoid}=[shape=semicircle, fill=yellow, draw=black, inner sep=0.5mm, rotate=180]
\tikzstyle{comonoid}=[shape=semicircle, fill=white, draw=black, inner sep=0.5mm]
\tikzstyle{gray comonoid}=[shape=semicircle, fill=black, draw=black, inner sep=0.5mm]
\tikzstyle{red comonoid}=[shape=semicircle, fill={zx_red}, draw=black, inner sep=0.5mm]
\tikzstyle{yellow comonoid}=[shape=semicircle, fill=yellow, draw=black, inner sep=0.5mm]
\tikzstyle{green bra}=[fill={rgb,255: red,216; green,248; blue,216}, draw=black, regular polygon, regular polygon sides=3, tikzit fill={rgb,255: red,216; green,248; blue,216}, tikzit draw=black, inner sep=0 mm, outer sep=0 mm, shape border rotate=0, font={\tiny}, minimum height=9mm]
\tikzstyle{green ket}=[fill={rgb,255: red,216; green,248; blue,216}, draw=black, regular polygon, regular polygon sides=3, tikzit fill={rgb,255: red,216; green,248; blue,216}, tikzit draw=black, inner sep=0 mm, outer sep=0 mm, shape border rotate=180, font={\tiny}, minimum height=9mm]
\tikzstyle{ket}=[draw=black, regular polygon, regular polygon sides=3, tikzit draw=black, inner sep=0 mm, outer sep=0 mm, shape border rotate=180, font={\tiny}, minimum height=9mm]
\tikzstyle{gn}=[inner sep=0mm, minimum size=2mm, shape=circle, draw=black, fill={rgb,255: red,216; green,248; blue,216}, tikzit fill={rgb,255: red,216; green,248; blue,216}]
\tikzstyle{srn}=[Z dot, shape=circle, draw=black, fill={rgb,255: red,232; green,165; blue,165}, tikzit fill={rgb,255: red,232; green,165; blue,165}, minimum size=1mm]
\tikzstyle{rn}=[Z dot, shape=circle, draw=black, fill={rgb,255: red,232; green,165; blue,165}, tikzit fill={rgb,255: red,232; green,165; blue,165}]
\tikzstyle{pn}=[Z dot, fill={rgb,255: red,255; green,200; blue,240}, tikzit fill={rgb,255: red,255; green,200; blue,240}]
\tikzstyle{yn}=[Z dot, fill=yellow, tikzit fill=yellow]
\tikzstyle{H box}=[rectangle, fill=yellow, draw=black, xscale=1, yscale=1, font={\small}, inner sep=0.75pt, minimum width=0.15cm, minimum height=0.15cm, tikzit shape=rectangle]
\tikzstyle{yellow hadamard}=[fill=yellow, draw=black, shape=rectangle, inner sep=0.6mm, minimum height=1.5mm, minimum width=1.5mm]
\tikzstyle{ug}=[regular polygon, regular polygon sides=3, fill={zx_red}, draw=black, inner sep=0pt, minimum width=1em, tikzit draw=blue]
\tikzstyle{st}=[star, star points=5, fill=white, draw=black, inner sep=1.2pt, line width=1.2pt, tikzit fill=blue, tikzit draw=red, tikzit category=ZH-pf]
\tikzstyle{not}=[fill={rgb,255: red,180; green,180; blue,180}, draw=black, shape=circle, font={$\neg$}, dot]
\tikzstyle{bbindex}=[font={\color{blue}\footnotesize}]
\tikzstyle{wide point}=[fill=white, draw, shape=isosceles triangle, shape border rotate=-90, isosceles triangle stretches=true, inner sep=0pt, minimum width=1.5cm, minimum height=6.12mm, yshift=-0.0mm]
\tikzstyle{medium gray box}=[semilarge box, fill={rgb,255: red,180; green,180; blue,180}]
\tikzstyle{small box}=[rectangle, inline text, fill=white, draw, minimum height=5mm, yshift=-0.5mm, minimum width=5mm, font={\small}]
\tikzstyle{small gray box}=[small box, fill={rgb,255: red,180; green,180; blue,180}]
\tikzstyle{medium box}=[rectangle, inline text, fill=white, draw, minimum height=5mm, yshift=-0.5mm, minimum width=8mm, font={\small}]
\tikzstyle{wire label}=[font={\footnotesize}, tikzit fill=blue, anchor=west, shape=rectangle, inner sep=1pt, xshift=-1mm]
\tikzstyle{0 control}=[minimum size=1mm, shape=circle, draw=black, fill=white, font={\footnotesize\boldmath}, inner sep=0.5pt]
\tikzstyle{Xplus dot}=[Z dot, shape=circle, draw=black, fill=red]
\tikzstyle{neg0 control}=[0 control, inner sep=0 mm, fill=white, draw=black]
\tikzstyle{hz}=[small hadamard, fill={rgb,255: red,216; green,248; blue,216}, shape=rectangle, tikzit fill={rgb,255: red,216; green,248; blue,216}, minimum height=1.5 mm, minimum width=0.75 mm, tikzit draw=black, text width=0.1 mm, inner sep=0.1 mm]
\tikzstyle{hx}=[small hadamard, fill=red, shape=rectangle, tikzit fill=red, minimum height=1.5 mm, minimum width=0.75 mm, tikzit draw=black, text width=0.1 mm, inner sep=0.1 mm]
\tikzstyle{ctrl}=[inner sep=0mm, minimum size=1mm, shape=circle, draw=black, fill=black]
\tikzstyle{targ}=[draw=black, fill=white, font={\footnotesize\boldmath}, minimum size=0.5mm, inner sep=-0.5mm, shape=circle, tikzit shape=circle, tikzit draw=black]
\tikzstyle{XD dot}=[shape=XDdot, inner sep=2pt, draw=black, tikzit fill={rgb,255: red,255; green,191; blue,191}]
\tikzstyle{XD phase dot}=[shape=XDdotphase, minimum size=4.75mm, font={\footnotesize}, inner sep=.1mm, outer sep=0mm, scale=0.8, tikzit shape=circle, rounded corners=1.9mm, draw=black, tikzit fill={rgb,255: red,255; green,191; blue,191}, tikzit draw=blue]
\tikzstyle{gray}=[-, draw={blue!60!white}, tikzit draw=blue]
\tikzstyle{blue}=[-, draw={blue!60!white}, tikzit draw=blue]
\tikzstyle{brace edge}=[-, tikzit draw=blue, decorate, decoration={brace,amplitude=1mm,raise=-1mm}]
\tikzstyle{diredge}=[->]
\tikzstyle{not edge}=[-, dashed, dash pattern=on 2pt off 1.5pt, thick, draw={rgb,255: red,255; green,68; blue,68}]
\tikzstyle{thin}=[-, line width=0.10mm]
\newcommand{\id}{\text{id}}
\theoremstyle{definition}
\newtheorem{theorem}{Theorem}[section]
\newtheorem{corollary}[theorem]{Corollary}
\newtheorem{lemma}[theorem]{Lemma}
\newtheorem{proposition}[theorem]{Proposition}
\newtheorem{example*}[theorem]{Example*}
\newtheorem{examples*}[theorem]{Examples*}
\newtheorem{remark}[theorem]{Remark}
\newtheorem{remark*}[theorem]{Remark*}
\newcommand{\citehack}{}
\title{The Qudit ZH-Calculus: \\Generalised Toffoli+Hadamard and Universality}
\author{Patrick Roy\footnote{Work completed before the author joined Amazon UK, Inc.}
\institute{University of Oxford}
\email{patrick.roy@udo.edu} \and
John van de Wetering
\institute{University of Amsterdam}
\email{john@vdwetering.name} \and
Lia Yeh
\institute{University of Oxford}
\institute{Quantinuum, 17 Beaumont Street\\Oxford OX1 2NA, United Kingdom}
\email{lia.yeh@cs.ox.ac.uk} 
}
\begin{document}

\maketitle

\begin{abstract}
    We introduce the qudit ZH-calculus and show how to generalise the phase-free qubit rules to qudits. We prove that for prime dimensions $d$, the phase-free qudit ZH-calculus is universal for matrices over the ring $\mathbb{Z}[e^{2\pi i/d}]$. 
    For qubits, there is a strong connection between phase-free ZH-diagrams and Toffoli+Hadamard circuits, a computationally universal fragment of quantum circuits. 
    We generalise this connection to qudits, by finding that the two-qudit $\ket{0}$-controlled $X$ gate can be used to construct all classical reversible qudit logic circuits in any odd qudit dimension, which for qubits requires the three-qubit Toffoli gate. We prove that our construction is asymptotically optimal up to a logarithmic term.
    Twenty years after the celebrated result by Shi proving universality of Toffoli+Hadamard for qubits, we prove that circuits of $\ket{0}$-controlled $X$ and Hadamard gates are approximately universal for qudit quantum computing for any odd prime $d$, and moreover that phase-free ZH-diagrams correspond precisely to such circuits allowing postselections.
\end{abstract}

\section{Introduction}

For qubits there are essentially three different graphical calculi: ZX, ZW and ZH~\cite{carette2020}. Each of these is suitable for reasoning about different types of structures and quantum gates. The ZX-calculus~\cite{CD1,CoeckeB2011interacting} is the most well-studied of these, and can naturally reason about the Clifford+Phases gate set (containing CNOT, Hadamard, $S$ as well as arbitrary $Z$ phase gates) and the useful primitives of phase gadgets and Pauli gadgets~\cite{phaseGadgetSynth,wetering2020}. Its \emph{phase-free} fragment, where the spiders cannot be labelled by a non-trivial phase, corresponds to CNOT circuits (together with ancillae and postselection) and can alternatively be interpreted into a category of linear relations~\cite{kissingerCSS}.
The ZW-calculus~\cite{hadzihasanovic2017, wang2021} instead can reason about photonic and fermionic computations~\cite{hadzihasanovic2018diagrammatic}. The W-spider helps to easily represent sums of linear maps~\cite{Koch2022Masters,shaikh2022sum,wang2022differentiating}. Its phase-free fragment is universal and complete for matrices over $\mathbb{Z}$, and here again the W-spider is used to sum up numbers.

The calculus we will be interested in here is the ZH-calculus~\cite{Backens2019ZH,backens2021}. Its H-box generator allows for easy representation of gates involving multilinear logic, like the Toffoli or other many-controlled gates. It can represent hyper-graph states~\cite{KupperThesis}, the path-sum formalism~\cite{lemonnier2020,pathsRenaud,Vilmart2022Completeness}, quantum binary decision diagrams~\cite{vilmart2021quantum} and more~\cite{d.p.east2021spinnetworks,east2022}. Its phase-free fragment represents the Toffoli+Hadamard gate set and is universal for matrices over $\mathbb{Z}$~\cite{backens2021}. The H-box here allows for representing the AND operation $\ket{x}\otimes \ket{y}\mapsto \ket{x\wedge y}$.

The last few years have seen a push towards generalising graphical calculi to work for higher-dimensional qudits.
For ZX there is now work on qutrits~\cite{wetering2022,wang2018,wang2014qutrit}, the prime-dimensional qudit stabiliser fragment~\cite{booth2022,poor2023qupitstabs}, and the universal algebraic qudit ZX-calculus~\cite{wang2019,wang2021qufinite}.
For ZW there are several different proposals for qudit generalisations~\cite{poor2023completeness,wang2021}.
Missing from these proposals is a generalisation for the ZH-calculus.

In this paper we present for the first time a qudit generalisation of the ZH-calculus. We base this translation on extending the representation of Boolean logic in the qubit ZH generators of~\cite{backens2021} to arithmetic over $\mathbb{Z}_d$. Then the Z- and X-spiders represent respectively the copy $x\mapsto (x,x)$ and addition/negation $(x,y)\mapsto -_d(x+_dy)$, while the H-box represents (up to some Hadamards) the multiplication $(x,y)\mapsto x\cdot_d y$, where the subscript $d$ denotes an operation modulo $d$. This correspondence makes it easy to represent qudit generalisations of Toffoli-like gates. 

In order to motivate this connection, we will first study the qudit generalisation of the Toffoli+Hadamard gate set, which for qubits is known to be computationally universal for quantum circuits~\cite{shi2003}. First, we show that whereas the Toffoli suffices to construct all classical reversible qubit logic circuits, for odd-dimensional qudits we can do the same with the $\ket{0}$-controlled $X$ gate. We find that our construction for these qudit classical reversible circuits from the $\ket{0}$-controlled $X$ gate is asymptotically optimal up to a logarithmic factor. Second, we show that the gate set consisting of the $\ket{0}$-controlled $X$ and Hadamard\footnote{Technically in mathematics a Hadamard matrix is a $\pm1$ matrix of maximum possible determinant, named after Hadamard's 1893 article on the matter \cite{hadamard1893}.  However, we follow the convention of other qudit graphical calculi to refer to the $d$-dimensional Discrete Fourier Transform as the Hadamard \cite{booth2022}.} gates is approximately universal for quantum computing in all odd prime dimensions. Third, we find that phase-free qudit ZH-diagrams represent precisely postselected circuits over this Hadamard+$\ket{0}$-controlled $X$ gate set.

A considerable part of the paper is devoted to proving that the phase-free ZH-calculus for prime-dimensional qudits is universal for matrices over $\mathbb{Z}[\omega]$ where $\omega = e^{2\pi i/d}$ is a $d$th root of unity. While proving universality for qubit ZH is straightforward, the qudit case brings several difficulties, since the structure of the matrix of the H-box is a lot more complicated. Our proof involves an encoding of propositional formulae over $\mathbb{Z}_d$ into polynomials and a construction of Pascal's triangle into a matrix.

In Section~\ref{sec:tof-had} we present our results regarding classical reversible dit logic and the $\ket{0}$-controlled $X$ gate. Then in Section~\ref{qudit-zh} we introduce the phase-free qudit ZH-calculus and show its connection to the previously introduced gates. In Section~\ref{universality-1} we extend the calculus to allow labels over arbitrary rings and prove its universality over this ring. Then in Section~\ref{universality-2} we tackle the harder problem of proving universality of the phase-free ZH-calculus.

\section{The qudit Toffoli+Hadamard gate set}\label{sec:tof-had}

In this paper, we let $d$ denote the dimension of our qudits, so that a single wire in a (circuit) diagram corresponds to $\mathbb{C}^d$. Note that many of our results only work if $d$ is an odd prime. We let $\omega:= e^{2\pi i/d}$ denote a $d$th root of unity. Then the qudit Paulis correspond to $Z\ket{a} = \omega^a \ket{a}$ and $X\ket{a} = \ket{a+_d1}$, where we use subscripts on operators like $+_d$ to denote operations modulo $d$. The controlled $X$ gate (CX) then becomes $\ket{x,y} \mapsto \ket{x,x+_dy}$.
The qudit Hadamard acts as $H\ket{x} = \frac{1}{\sqrt{d}}\sum_y \omega^{x\cdot y} \ket{y}$. 
For qubits, we can write the action of the Toffoli as $\ket{x,y,z} \mapsto \ket{x,y, (x\cdot_2 y)+_2z}$. This definition extends straightforwardly to the qudit setting, where we just take the multiplication and addition to be modulo $d$ instead of modulo $2$.
When allowing \emph{zeroed} ancillae, i.e.~qubits prepared in the $\ket{0}$ state, the Toffoli together with the $X$ gate (which acts as the NOT gate) suffice to construct an arbitrary classical reversible logic circuit.
It turns out however that for certain qudit dimensions, just a two-qudit gate suffices to achieve the analogous result.

We define the \emph{$\ket{0}$-controlled $X$ gate} as acting on the computational basis as follows:
\begin{equation}
    \ket{c, t} \ \mapsto\ 
     \begin{cases}
        \ket{c,t+_d1}, \ &\text{if}\ c = 0 \\
        \ket{c,t}, \ &\text{else}
    \end{cases}
\end{equation}
i.e. by applying an $X$ gate to the target iff the control is $\ket{0}$.

Note that the $\ket{0}$-controlled X gate is not Clifford for any prime qudit dimensions except for the qubit case (for which it is a CNOT gate conjugated by NOTs on the control).

\begin{theorem}\label{thm:daryrev}
    For any odd qudit dimension $d$, any $d$-ary classical reversible function $f:\mathbb{Z}_d^n\to \mathbb{Z}_d^n$ on $n$ dits can be constructed by a circuit of $O(d^n n)$ many $\ket{0}$-controlled $X$ gates and $O(n)$ ancillae prepared in the $\ket{0}$ state.
\end{theorem}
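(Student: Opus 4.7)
The plan is to follow the standard classical reversible logic strategy: decompose $f$ into a product of at most $d^n - 1$ transpositions of basis states of $\mathbb{Z}_d^n$, and then show that each transposition can be implemented with $O(n)$ $\ket{0}$-controlled $X$ gates drawing from a pool of $O(n)$ reusable ancillae. Multiplying gives the claimed $O(d^n n)$ gate count, and since ancillae are returned to $\ket{0}$ after each transposition they can be shared across the whole construction.

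First I would establish some elementary building blocks. Applying a $\ket{0}$-controlled $X$ with a fresh $\ket{0}$ ancilla as its control realises an unconditional $X$ on the target (with the ancilla remaining in $\ket{0}$), and hence an $X^k$ for any $k$ by iteration. Conjugating a $\ket{0}$-controlled $X^m$ by $X^{-k}$ and $X^k$ on its control then gives a $\ket{k}$-controlled $X^m$ for any $k, m \in \mathbb{Z}_d$. Each of these gadgets uses $O(d) = O(1)$ gates in $n$ and $O(1)$ ancillae.

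The heart of the argument is implementing a single transposition $(v,w)$ of basis states. Using unconditional $X^{-v_i}$ shifts on each qudit this reduces to transposing $\ket{0^n}$ with $\ket{u}$ for $u = w - v \neq 0$. For that I would build, inductively on $n$, a multi-controlled gate that increments $t$ by $1$ precisely when $c_1 = \cdots = c_n = 0$ at cost $O(n)$ gates with $O(n)$ ancillae: introduce auxiliary qudits $a_1, \ldots, a_{n-1}$ and arrange, via a constant number of $\ket{0}$-controlled $X$ and $X$ gadgets per step, that $a_i = 0$ precisely when $c_1 = \cdots = c_{i+1} = 0$. The transposition of $\ket{0^n}$ and $\ket{u}$ is then realised by using the final ancilla as a flag to control a short subroutine on some coordinate where $u$ is nonzero, after which the AND-ladder is run in reverse to uncompute all ancillae.

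The main obstacle is this recursive AND construction. For $d = 2$ the indicator $c \mapsto [c = 0]$ equals $1 + c$, so each $\ket{0}$-controlled $X$ is an \emph{affine} operation over $\mathbb{F}_2$; compositions of affine maps with $\ket{0}$-ancillae remain affine and therefore cannot produce a nonlinear AND, which is exactly why the qubit case requires a genuinely three-qubit Toffoli. For odd prime $d$, however, $[c = 0] = 1 - c^{d-1}$ is a polynomial of positive degree $d - 1$ over $\mathbb{Z}_d$, so composing $\ket{0}$-controlled $X$ gates does produce maps of higher polynomial degree; this is what lets one express the AND of two indicator conditions by a short sequence of 2-qudit gates acting on an ancilla, which can then be iterated to obtain the $n$-fold AND. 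The delicate part of the proof is arranging the bookkeeping so that each rung of the ladder adds only $O(1)$ gates and all ancillae are properly uncomputed, yielding the final $O(n)$ gates per transposition and $O(n)$ ancillae overall.
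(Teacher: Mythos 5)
Your high-level strategy is exactly the paper's: write $f$ as a product of at most $d^n-1$ transpositions of basis states and implement each transposition with $O(n)$ gates and $O(n)$ reusable ancillae. However, two essential pieces are missing, and one of them is wrong as stated. First, your reduction of a general transposition to swapping $\ket{0^n}$ with $\ket{u}$ does not remove the real difficulty: $u$ may have many nonzero coordinates, and a flag that fires only on the input $\ket{0^n}$ controlling ``a short subroutine on some coordinate where $u$ is nonzero'' can change only one coordinate and, worse, does not act symmetrically on $\ket{0^n}$ and $\ket{u}$, so it cannot be the involution $(\ket{0^n}\,\ket{u})$. The paper's Proposition~\ref{prop:anytwocycle} deals with this by conjugating a single many-controlled $X_{a_n,b_n}$ on the last coordinate by $b_n$-controlled $X_{a_j,b_j}$ gates on the other $n-1$ coordinates (Steps 1 and 3 of its circuit), which is what aligns the two strings so that only one coordinate needs to be toggled under a multi-control; some version of this conjugation step is unavoidable in your argument too.

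Second, the claim that each rung of your AND-ladder costs $O(1)$ two-qudit gates is precisely the technical heart of the theorem, and your justification for it is only a non-obstruction argument: you observe that $[c=0]=1-c^{d-1}$ has positive degree for odd prime $d$, so the affine-closure obstruction that kills the qubit case does not apply. That explains why a construction might exist but does not produce one. The paper's actual content here is the explicit four-gate $3$-cycle on the $\{\ket{0},\ket{1}\}$ subspace, the doubly-controlled $X_{01}$ of Lemma~\ref{lem:qudit-zzcx01}, and the Gidney-style borrowed-ancilla ladder of Lemma~\ref{lem:qudit-linear-trick-gen}, which together give the $O(n)$-gate $n$-fold-controlled gate; deferring all of this as ``bookkeeping'' leaves the proof incomplete. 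A minor further point: your degree argument invokes Fermat's little theorem and hence primality of $d$, whereas the statement is for all odd $d$; the gate itself is defined by cases, so the constructions must be verified without relying on $c^{d-1}$ being the indicator of $c\neq 0$.
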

\begin{proposition}\label{prop:revatleast}
    For any qudit dimension $d$, there exist $d$-ary classical reversible functions $f:\mathbb{Z}_d^n\to \mathbb{Z}_d^n$ that require at least $O(n d^n/\log n)$ single-qudit and two-qudit gates to construct, even when allowed $\Omega (n)$ ancillae.
\end{proposition}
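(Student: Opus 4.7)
The plan is to apply a counting argument in the style of Shannon's lower bound for Boolean circuits, interpreting the displayed $O(\cdot)$ as the intended $\Omega(\cdot)$. First, count the target class: a classical reversible function $f:\mathbb{Z}_d^n\to\mathbb{Z}_d^n$ is a permutation of a set of size $d^n$, so there are exactly $(d^n)!$ such functions. Stirling gives
$$\log_2\bigl((d^n)!\bigr) \;=\; \Theta(n\, d^n \log d).$$

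Second, upper-bound the number of distinct circuits of length at most $k$. Fix once and for all a finite universal set of single-qudit and two-qudit gates (any finite choice suffices, and is the standard convention for circuit-size lower bounds; with an uncountable family of two-qudit unitaries the statement would have to be rephrased in terms of a gate set anyway). On $n + m$ wires with $m = O(n)$, each gate is specified by a gate type (constantly many choices) together with an ordered tuple of at most two wires, which gives at most $(n+m)^2 = O(n^2)$ placement choices. Hence the number of distinct circuits with at most $k$ gates is bounded by $\bigl(C(n+m)^2\bigr)^k = 2^{O(k\log n)}$ for some constant $C$ depending only on the chosen gate set.

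Third, apply pigeonhole: if every reversible function on $n$ dits were realizable by such a circuit of size at most $k$, we would need
$$2^{O(k\log n)} \;\geq\; (d^n)!.$$
Taking $\log_2$ yields $k\cdot O(\log n) \geq \Omega(n\, d^n \log d)$, so $k = \Omega\!\bigl(n\, d^n / \log n\bigr)$. In particular, at least one (indeed almost every) classical reversible function requires this many gates, yielding the claimed existence statement.

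The main subtlety is committing to a finite gate set, without which ``number of gates'' is not a well-defined combinatorial quantity to count. A secondary point one might worry about is whether intermediate quantum superpositions can be exploited to realize a classical permutation more cheaply; but since our count is an absolute upper bound on the number of end-to-end input/output maps realized by circuits of length $k$, over any gate set, the counting obstruction applies uniformly. Everything else is Stirling and pigeonhole, with no further ingredients needed.
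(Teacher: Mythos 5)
Your proposal is correct and follows essentially the same route as the paper: both fix a finite single- and two-qudit gate set, bound the number of size-$k$ circuits by $(cn^2)^k$ via the $O(n^2)$ placement count, compare against the $(d^n)!$ reversible functions using a Stirling-type bound on $\log(k!)$, and conclude $k=\Omega(n d^n/\log n)$ by pigeonhole. Your explicit remarks on the necessity of a finite gate set and on the $O$-versus-$\Omega$ reading of the statement match the paper's implicit conventions.
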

\noindent We present the proofs of Theorem~\ref{thm:daryrev} and Proposition~\ref{prop:revatleast} in the appendix.

Interestingly, we only need a two-qudit gate---the $\ket{0}$-controlled $X$ gate---to construct any $d$-ary classical reversible gate (i.e. bijective maps of the form $f:\mathbb{Z}_d^n\to \mathbb{Z}_d^n$) with the help of $\ket{0}$ ancillae. Hence, the $\ket{0}$-controlled $X$ gate is universal for all classical reversible logic---generalising to all odd $d$ what the three-qubit Toffoli gate does for $d=2$.
Hence, it makes sense to consider the generalization of the qubit Toffoli+H gate set to be the qudit gate set containing $\ket 0$-controlled $X$ and Hadamard, which by Theorem~\ref{thm:daryrev} generates all possible qudit generalized Toffoli gates (since they are all classically reversible). 

For qubits, adding the Hadamard gate to all the classical reversible gates (which is generated by the Toffoli gate and zeroed ancillae) suffices for approximately universal quantum computation~\cite{shi2003}.
By combining Theorems~\ref{thm:daryrev} and~\ref{thm:XH-universality} we find that this is in fact true in any prime qudit dimension.
\begin{theorem}\label{thm:XH-universality}
    The $\ket{0}$-controlled X gate and the H gate form an approximately universal gate set for qudits of any odd prime dimension.  In other words, permitting the help of ancillae, this gate set can deterministically approximate any qudit computation up to arbitrarily small error.
\end{theorem}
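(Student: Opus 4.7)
My strategy is to combine Theorem~\ref{thm:daryrev} with a qudit generalisation of Shi's density argument, which for odd prime $d$ can be packaged through the statement that Clifford together with any non-Clifford gate is dense in $PU(d^n)$. The argument proceeds in three stages.

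\emph{Stage 1 (classical reversibility).} By Theorem~\ref{thm:daryrev}, $\ket{0}$-controlled $X$ with zeroed ancillae realises every classical reversible function $\mathbb{Z}_d^n \to \mathbb{Z}_d^n$. In particular we obtain the additive $\mathrm{CX}:\ket{x,y}\mapsto\ket{x,x+_d y}$, every single-qudit permutation gate, and the qudit Toffoli $\ket{x,y,z}\mapsto\ket{x,y,z+_d xy}$.

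\emph{Stage 2 (diagonal and Clifford gates).} Conjugating classical reversible gates by $H$ yields their diagonal counterparts: $H X H^{-1} = Z$, $(I\otimes H)\mathrm{CX}(I\otimes H^{-1}) = \mathrm{CZ}$, and, by the same calculation on the target qudit of the qudit Toffoli, the $\mathrm{CCZ}$ gate $\ket{x,y,z}\mapsto\omega^{xyz}\ket{x,y,z}$. To generate the full Clifford group for odd prime $d$, it suffices to additionally produce the quadratic phase $S:\ket{a}\mapsto\omega^{a(a-1)\cdot 2^{-1}}\ket{a}$, where $2^{-1}$ is taken mod $d$. I plan to obtain $S$ from $\mathrm{CCZ}$ by a qudit magic-state-style injection: prepare two ancillae in the state $H\ket{0} = d^{-1/2}\sum_a \ket{a}$, apply $\mathrm{CCZ}$ between the data qudit and the two ancillae, and measure the ancillae in the Fourier basis; a direct computation shows the residual action on the data qudit is $S$ up to a Clifford byproduct determined by the measurement outcomes, which is correctable using the Clifford resources already in hand.

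\emph{Stage 3 (density and Solovay--Kitaev).} Since $\ket{0}$-controlled $X$ is not Clifford for $d \geq 3$ (as noted in the paper, its only Clifford instance is $d=2$), the gate set generates Clifford together with at least one non-Clifford gate. The qudit version of the Clifford-plus-non-Clifford density theorem for odd prime $d$ then gives density in $PU(d^n)$ for every $n$, and the qudit Solovay--Kitaev theorem upgrades density to efficient approximation of any target unitary, deterministically given the ancilla pool. The main technical obstacle is Stage 2: the ancilla protocol for $S$ requires careful bookkeeping of measurement-outcome-dependent phases and a verification that all correction operations lie inside the gate set constructed so far. A self-contained alternative to the density invocation in Stage 3 would be to exhibit directly, within the generated group, an explicit element of infinite order whose action is irreducible on $(\mathbb{C}^d)^{\otimes n}$, and conclude density via the standard compact Lie-group closure argument.
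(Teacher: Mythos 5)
Your Stages 1 and 3 are sound in outline, but Stage 2 contains a genuine gap, and it is exactly the step on which the whole argument turns. The injection protocol you describe does not produce the quadratic phase gate $S$. Applying $\mathrm{CCZ}$ to $\ket{x}\otimes H\ket{0}\otimes H\ket{0}$ gives $d^{-1}\sum_{a,b}\omega^{xab}\ket{x,a,b}$, and projecting the two ancillae onto Fourier-basis effects indexed by $(s,t)$ leaves the data qudit with amplitude proportional to $\omega^{-st\,x^{-1}}$ for $x\neq 0$, and a different, outcome-dependent value at $x=0$ (which even vanishes for some outcomes). The residual phase is a function of the multiplicative inverse of $x$, not of $x^{2}$, so it is not $S$ up to a Pauli or Clifford byproduct; for most outcomes the "residual action" is not even a unitary on the data qudit. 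The claim that a direct computation yields $S$ therefore fails for the protocol as stated. A second, related problem: the theorem asks for \emph{deterministic} approximation by a circuit of $\ket{0}$-controlled $X$ and $H$ gates with ancillae; a measure-and-correct protocol is not such a circuit unless you additionally show that the corrections can be made coherent and the ancillae disentangled using only gates already constructed, which you have not done.

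The paper sidesteps both problems: it constructs, with an explicit small unitary circuit of $\ket{0}$-controlled $X$, $H$ and one zeroed ancilla (no measurement), the gate $Q[0]=\diag(\omega,1,\dots,1)$ deterministically, conjugates by $X$ to obtain all $Q[i]$, and then cites the result that $\mathrm{CX}$, $H$ and the $Q[i]$ are universal for prime $d>3$, treating $d=3$ separately via an explicit metaplectic-gate construction. To salvage your route you would need either an explicit unitary gadget for $S$ (or for a complete set of diagonal gates such as the $Q[i]$) built from the gate set, or a corrected injection protocol together with a coherentization argument. Your Stage 3 density invocation is fine once the full Clifford group is actually in hand (maximality of the prime-dimensional Clifford group covers a two-qudit non-Clifford element, not just a single-qudit one), and it would have the minor advantage of handling $d=3$ uniformly rather than by a separate construction; but as written the Clifford group is never obtained.
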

\begin{proof}
    The proof below suffices for the case where the qudit dimension $d$ is a prime $d > 3$. The proof for the $d = 3$ case consists of constructing all the Cliffords as follows, and the metaplectic gate (a single-qutrit non-Clifford gate) which we construct in Appendix~\ref{app:qutrituniversality} similarly to our construction in Ref.~\cite[Section 3]{Glaudell2022qutritmetaplectic}.

    Define the single-qudit gates $Q[i]$ by $Q[i] \ket{j} = \omega^{\delta_{ij}} \ket{j}$ where $\delta_{ij} = 1$ iff $i = j$.
    In~\cite{wang2020-3} it is shown that CX, H, and the $Q[i]$ gates are universal for quantum computing for prime $d>3$; for $d=3$ this generates the Clifford group. It hence suffices to show that our gate set generates these gates.
    Clearly, inputting a zeroed ancilla to the control of the $\ket{0}$-controlled X gate yields the X gate.  From here, the CX gate is easy to build from X and $\ket{0}$-controlled X gates. 
    We can also exactly synthesize the $Q[0]$ gate deterministically (up to an irrelevant global phase) with just $\ket{0}$-controlled X gates, H gates and a zeroed ancilla:
    \begin{equation}
        \scalebox{1.6}{\tikzfig{zhtozx/q0}}
    \end{equation}
    Conjugating by X gates then yields all $Q[i]$ gates.
\end{proof}

\begin{remark}
    Theorem~\ref{thm:daryrev} and Proposition~\ref{prop:revatleast} build upon previous work of some of the authors~\cite{yeh2022}, which showed for qutrits how to explicitly construct any ternary classical reversible gate using $O(3^n)$ $\ket{0}^{\otimes n}$-controlled $X_{01}$ gates\footnote{$X_{01}$ maps $\ket{0}$ and $\ket{1}$ to each other and is identity on all other basis states.} each with gate count polynomial in $n$, and that there exist ternary classical reversible gates requiring at least $O(n 3^n/\log n)$ gates to construct.
    In this work, we generalise these results to any odd qudit dimension $d$ and we additionally find a construction of the $\ket{0}^{\otimes n}$-controlled $X_{01}$ gate using $O(n)$ gates. Combining these results gives us the $O(n d^n)$ gate count construction of $d$-ary classical reversible gates which is hence near asymptotically optimal in gate count up to a $\log n$ factor.
\end{remark}

\begin{remark}
    A recent preprint~\cite{ZiW2023optsynthmultictrlqudit} appearing after submission of this paper independently discovered a version of Lemmas~\ref{lem:qudit-zzcx01}~and~\ref{lem:qudit-linear-trick-gen} for any odd qudit dimension.  They additionally provide a separate $O(n)$ gate count $\ket{0}^{\otimes n}$-controlled $X_{01}$ gate construction applicable to any even qudit dimension.  By generalisation of Ref.~\cite{yeh2022}, they independently derived our Proposition~\ref{prop:revatleast} and a version of our Theorem~\ref{thm:daryrev} which uses more types of gates than just the $\ket{0}$-controlled $X$, but which does work for all qudit dimensions.
\end{remark}

\section{The qudit ZH-Calculus}\label{qudit-zh}

Now let us introduce the qudit ZH-calculus, which allows for graphical reasoning about qudit Toffoli-like gates.
Diagrams will flow from inputs at the bottom, to outputs on the top (but because our generators will be flexsymmetric~\cite{carette2021,carette2021thesis} the orientation of diagrams in this paper will not matter much).

As is the case for the qubit ZH-calculus, the qudit ZH-calculus will consist of string diagrams built out of two types of generators: Z-spiders and H-boxes.
We define these as follows:
\[
        \tikzfig{z-spider} \ :=\  \sum_{i=0}^{d-1} |i\rangle^{\otimes n}\langle i|^{\otimes m},\qquad\qquad
        \tikzfig{qudit-h-box} \ :=\  \frac{1}{\sqrt d}\sum_{i_1,...,i_m,j_1,...,j_n \in \mathbb{Z}_d}\omega^{i_1\cdot...\cdot i_m\cdot j_1\cdot...\cdot j_n} |j_1...j_n\rangle\langle i_1...i_m|.
\]
This matches the qubit-ZH definitions of~\cite{Backens2019ZH}, except that now the sums go from $0$ to $d-1$ instead of from $0$ to $1$, and we use the $d$th root of unity $\omega = e^{2\pi i / d}$ instead of $-1$. Additionally, we have included a normalization factor of $1/\sqrt{d}$ in the definition of the H-box that will prevent some tedious constants from appearing everywhere~\cite[Ap. E]{deBeaudrap2020}.
As a consequence of this choice of normalisation, the 1-input, 1-output phase-free H-box corresponds exactly to the qudit Hadamard $\ket{x}\mapsto \frac{1}{\sqrt{d}}\sum_y \omega^{x\cdot y} \ket{y}$. 
Note that while the matrix of the qubit H-box consists of just 1's, with a single entry equal to $-1$, for qudits the matrix has a more complicated structure, with different powers of $\omega$ appearing throughout the matrix.
In the next section we will also introduce labelled H-boxes, so we will sometimes refer to diagrams containing just the above generators as \emph{phase-free} ZH-diagrams, following~\cite{backens2021}.

Apart from these generators we have the standard structural generators---identity, swap, cup and cap---needed to make a compact-closed PROP.
Note that the qudit Z-spider and H-box satisfy the same symmetries as their qubit counterparts, meaning we get a flexsymmetric PROP~\cite{carette2021,carette2021thesis}:
\ctikzfig{generator-symmetries-phasefree}

\begin{remark}
    Note that the actual choice of $d$th root of unity $\omega=e^{2\pi i/d}$ is not important. We can choose any primitive $d$th root of unity (i.e.~a complex number $\omega$ satisfying $\omega^d=1$ while $\omega^k\neq 1$ for any $0\leq k<d$), and the rest of our results will also go through.
\end{remark}


There are a couple of useful derived generators we will need:
\begin{equation}\label{eq:derived-gen}
    \tikzfig{qudit-x-derived}\qquad\qquad
    \tikzfig{qudit-pauli-x}\qquad\qquad
    \tikzfig{zh-scalars}
\end{equation}
The first of these is the well-known $X$-spider. The second realizes the Pauli $X$ gate, e.g. the map $X\ket{i} =\ket{i+_d 1}$. The last two generators represent the scalars $\sqrt{d}$ and $1/\sqrt{d}$ respectively.

The (derived) generators of the qubit ZH-calculus can be motivated by a correspondence to Boolean logic~\cite[Eq. 5]{backens2021}.
Similarly, our generators turn out to correspond with arithmetic operations over $\mathbb{Z}_d$:
\begin{equation}\label{eq:qudit-correspondence}
    \tikzfig{qudit-correspondence}
\end{equation}
Note that here for multiplication we have a sequence of three Hadamard instead of just the one in the qubit version. This is because for qudits $H^4=\text{id}$, but not $H^2=\text{id}$. Instead we have $H^2\ket{i} = \ket{-_di}$. This map is sometimes called the \emph{antipode} or \emph{dualiser}~\cite{carette2020}, and we will use it throughout the diagrams in this paper. It turns out to also be equal to a single-input, single-output X-spider.

This interpretation gives a straightforward way to represent the Toffoli and the $\ket{0}$-controlled $X$ gate (writing our diagrams here from left-to-right to match circuit notation):

\begin{equation}\label{eq:zcxinzh}
\tikzfig{zh-tof} \qquad \qquad \quad \tikzfig{zhtozx/zcx-in-zh}
\end{equation}
The correctness of the Toffoli construction follows easily from the interpretation given in Eq.~\eqref{eq:qudit-correspondence}. For the other, note that in the first step we use the trick that a gate controlled on some value, followed by its adjoint, is the same thing as controlling the adjoint on all the other values.
Then the correctness of the ZH-diagram follows from Fermat's little theorem: for all $x \in \mathbb{Z}_d$ for $d$ prime, $x^{d-1} = 0$ if $x = 0$ and $x^{d-1} = 1$ otherwise. The full diagram hence adds $1$ if the control is not $0$.

Many of the rules of the qubit ZH-calculus generalise to qudits; see Figure~\ref{fig:ZH-rules}. For their soundness we refer to Appendix~\ref{soundness}.

\begin{figure}[tb]
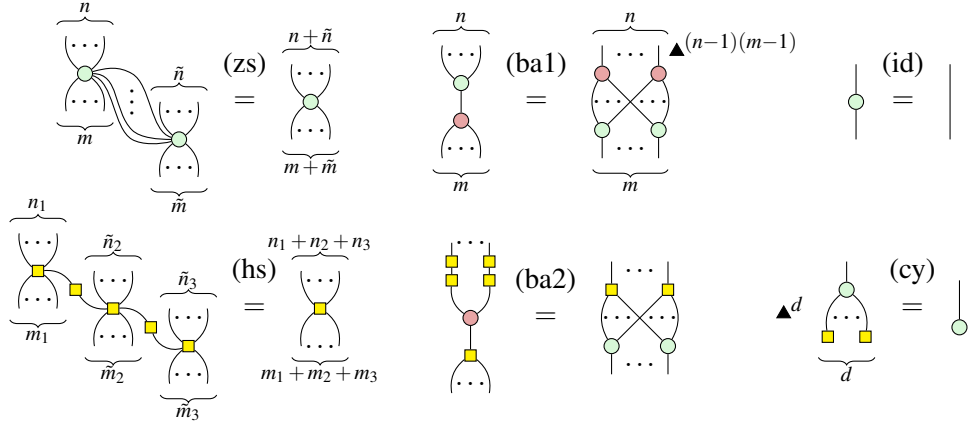

    \ctikzfig{ZH-rules}
    \vspace{-0.25cm}
    \caption{Basic rules of the phase-free qudit ZH-calculus. Some additional (derived) rules are presented in Appendices~\ref{app:further} and~\ref{app:derived}. The rules hold for all $n$ and $m$. Here $d$ is the dimension of the qudit.}
    \label{fig:ZH-rules}
    \vspace{-0.25cm}
\end{figure}

The $Z$-spider fusion rule generalises as expected, but the $H$-box fusion rule generalizes into something that allows \emph{contraction} of odd-length sequences of $H$-boxes interspersed by Hadamards.
For the bialgebra rules, the $Z/X$ version generalises up to global scalars, while the $Z/H$ bialgebra needs some additional Hadamards which would cancel in the qubit case (furthermore for \StrongCompRule, if $(n-1)(m-1) < 0$, introduce $\sd^{-(n-1)(m-1)}$ to the LHS instead).
Lastly, we have the generalization of the \emph{identity} and \emph{multiply} rules. We rename the latter \emph{cyclic} (cy) because what it really captures it the cyclic structure of the group $\mathbb Z_d$.

Note how the above ruleset neither contains a rule stating that $H^4 = \text{id}$, nor an inverted color change rule. That is because we can derive them from the rules presented above:

\begin{equation}
    \label{ft}\tag{h4}\tikzfig{qudit-ft-proof-2}
\end{equation}
\begin{equation}
    \label{h}\tag{h}\tikzfig{qudit-color-change-h-proof}
\end{equation}

Note that in both of these proofs, the application of the bialgebra rule \StrongCompRule does not introduce scalars, as the number of inputs or output of the subdiagram we apply the rule to is always 1. 
We use the name \eqref{h} for the color change rule to keep in line with the notation of~\cite[Fig. 1]{wetering2022}.

Since these derivations hold for arbitrary dimension $d$, they particularly hold for $d=2$. This means that due to the qu\emph{bit} $H$-box fusion rule, \eqref{ft} actually implies the self-inverseness of Hadamard gates, making the (hh) rules of Backens et al's ruleset redundant \cite[Tab. 1]{backens2021}.

In Appendix~\ref{app:further}, we also present a generalisation of the ortho rule from the phase-free qubit ZH-calculus~\cite{backens2021}. Hence, we have a prime-dimensional qudit generalisation of all the phase-free qubit ZH-calculus rewrite rules~\cite{backens2021}.
While those rules are complete for the qubit phase-free calculus, it is not clear whether this continues to hold for qudits. We leave this question for future work, for instance building upon the recent completeness for all qudit dimensions in the ZXW-calculus~\cite{poor2023completeness}.

\subsection{Translating ZH-diagrams to ZX-diagrams}

The qudit ZX-calculus is universal, and hence can represent any linear map between qudits~\cite{wang2014qutrit}.  
So in particular, there must be some way to interpret ZH-diagrams into ZX-diagrams.
As the only generator of ZH-diagrams that is different from the ZX-calculus is the H-box, this is the only one we will have to translate. In fact, we only need to translate the three-legged and one-legged H-box, as the two-legged H-box is just the Hadamard gate.  We can then obtain diagrams for higher-arity H-boxes by unfusing them into three-legged H-boxes.  However, we will also introduce a direct construction for $n$-legged H-boxes, which arises from the asymptotically efficient circuit constructions for any multiple-controlled prime-dimensional qudit Toffoli gate presented in the Appendix.  

First, note that there is a close correspondence between an H-box and the qudit CCZ gate, which acts like $\ket{x,y,z} \mapsto \omega^{x\cdot y\cdot z}\ket{x,y,z}$:
\begin{equation}
\tikzfig{zh-ccz}
\end{equation}
Hence, in particular the three-legged H-box is equal to one copy of the qudit CCZ gate acting on $\ket{\mathnormal +\mathnormal + \mathnormal +}$:
\begin{equation}\label{eq:three-legged-H-box}
    \tikzfig{zhtozx/three-legged-H-box}
\end{equation}
Since a CCZ gate is just the Toffoli from Eq.~\eqref{eq:zcxinzh} with the target qudit conjugated by Hadamards, to construct an H-box in the ZX-calculus it then suffices to show how to construct the qudit Toffoli in the ZX-calculus. But by Theorem~\ref{thm:daryrev} we can construct the Toffoli from the $\ket{0}$-controlled $X$ gate, so that it remains to show how this gate is constructed as a ZX-diagram.

We will write phases on Z-spiders in the ZX-calculus, as vectors $\vec{\alpha}$ of length $d\texttt{-}1$:
\begin{equation}
    \tikzfig{zhtozx/Zsp-phase} \ = \ \ketbra{0\cdots 0}{0\cdots 0} \ +\  e^{i \alpha_1|} \ketbra{1\cdots 1}{1\cdots 1} \ +\  ... \ +\  e^{i \alpha_{d\texttt{-}1}} \ketbra{(d\texttt{-}1)\cdots (d\texttt{-}1)}{(d\texttt{-}1)\cdots (d\texttt{-}1)}
\end{equation}

\begin{lemma}[\cite{yeh2023}]
    The prime-dimensional qudit $\ket{0}$-controlled X gate can be decomposed into the Clifford+Phases gate set (decomposing $H$ as phase gates~\cite[Remark 2.3]{wang2021qufinite}), and written as a ZX-diagram:
    \begin{equation}\label{eq:zx-ket0X}
        \tikzfig{zhtozx/zcx}
    \end{equation}
    where $\vec p = \left(\omega^{\frac{-(d-1)}{2}},\omega^{\frac{-(d-1)}{2}},...,\omega^{\frac{-(d-1)}{2}}\right)$ and
    $\vec r = \left(\omega^{\frac{1}{d}},\omega^{\frac{2}{d}},...,\omega^{\frac{d-1}{d}}\right)$ represents the $d$th root of $Z$ gate from Ref.~\cite{yeh2023}.
\end{lemma}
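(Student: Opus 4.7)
The plan is to verify the stated decomposition by direct matrix computation. I would expand the right-hand side of Eq.~\eqref{eq:zx-ket0X} using the semantics of each ZX generator and check that the resulting linear map acts as $\ket{c,t}\mapsto\ket{c, t+_d\delta_{c,0}}$ on every computational basis state $\ket{c,t}\in\mathbb C^d\otimes\mathbb C^d$.

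The key conceptual step is the same trick already used in the ZH construction of Eq.~\eqref{eq:zcxinzh}: a gate controlled on $\ket 0$ equals the inverse of that gate controlled on every nonzero value, and such multi-value control can be produced by accumulating fractional phases on the target. The phase gate $\vec r$ is by definition a $d$-th root of $Z$, so interleaving $\vec r$ with conjugating CX gates between the control and target accumulates on the target a fractional phase whose value is a linear function of $c$ modulo $d$. By Fermat's little theorem this accumulated phase distinguishes $c=0$ from $c\neq 0$, and sandwiching the phasing by Hadamards on the target then converts the conditional diagonal phase into the desired conditional shift.

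To finish the calculation, I would evaluate the Gauss-type sum over $\mathbb Z_d$ that appears when the Hadamard is applied after the fractional phasing of the target. For $d$ an odd prime this sum has a known closed form, and the role of $\vec p = (\omega^{-(d-1)/2},\ldots,\omega^{-(d-1)/2})$ is precisely to absorb the residual global phase produced by this sum, so that the overall coefficient on the surviving basis term is exactly $1$ rather than merely a $2d$-th root of unity; the constant $\omega^{-(d-1)/2}$ is essentially the inverse of the phase $\prod_{k=1}^{d-1}\omega^{k/d}=\omega^{(d-1)/2}$ that one picks up from combining the fractional-root contributions across the $d-1$ nonzero control values.

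The main obstacle is careful bookkeeping of $d$-th root phases and of the $1/\sqrt d$ normalisations from each Hadamard: the $\vec r$ gate introduces $d^2$-th roots of unity that must ultimately cancel into genuine powers of $\omega$, and it is precisely this cancellation requirement that forces the specific choice of $\vec p$. Since the lemma is attributed to \cite{yeh2023}, I would rely on the derivation there for the full verification and present here only the structural outline above, together with a sanity check on a small case such as $d=3$ to confirm the conventions for $\vec p$ and $\vec r$ are correct.
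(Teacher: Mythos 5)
The paper does not actually prove this lemma: it is imported wholesale from \cite{yeh2023}, so your decision to defer the full verification to that reference mirrors the paper's own treatment, and the overall plan (expand the diagram, check basis states, sanity-check at $d=3$) is reasonable. However, the structural outline you supply misidentifies the mechanism in several places, and as written it would not survive being turned into an actual proof. First, the phase accumulated on the target by interleaving $\vec r$ with CX gates is emphatically \emph{not} ``a linear function of $c$ modulo $d$'' --- if it were, the whole circuit would be Clifford, whereas the paper stresses that the $\ket{0}$-controlled $X$ is non-Clifford for every prime $d>2$. The non-linearity comes from the fact that $\vec r$ assigns the phase $\omega^{j/d}$ to the canonical representative $j\in\{0,\dots,d-1\}$, so conjugating by shift operators picks up a ``carry'' precisely when the representative wraps around modulo $d$; it is this wrap-around, not Fermat's little theorem, that manufactures the $\delta_{c,0}$ behaviour. (Fermat's little theorem, via $x^{d-1}$, is the mechanism behind the \emph{ZH} construction of the same gate in Eq.~\eqref{eq:zcxinzh}, which is a genuinely different decomposition.) Second, $\vec p$ cannot be ``absorbing a residual global phase'': under the paper's phase-vector convention a spider labelled $\vec p$ acts as the identity on $\ket{0}$ and as $\omega^{-(d-1)/2}$ on every other basis state, so it corrects a \emph{relative} phase between the $c=0$ and $c\neq 0$ branches --- your bookkeeping identity $\prod_{k=1}^{d-1}\omega^{k/d}=\omega^{(d-1)/2}$ is the right quantity, but it is being applied to the wrong kind of correction. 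Third, a direct basis-state verification produces only geometric sums of the form $\sum_j\omega^{j(t'-t)}$; quadratic Gauss sums would enter only if you additionally expanded $H$ into its Euler decomposition of phase gates, which is unnecessary for checking the stated identity.
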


\begin{theorem}\label{thm:ZH-to-ZX}
    Any prime-dimensional qudit ZH-diagram composed of $m$ Z spiders and $n$ H-boxes each with no more than $g$ legs, can be written as a composition of those $m$ Z spiders, and $O(ng)$ of either $\ket{0}$-controlled $X$, Hadamard or $\ket{0}$.
\end{theorem}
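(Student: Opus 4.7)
The plan is to translate each generator of the ZH-diagram independently into the target language. Z-spiders are shared generators of ZX and ZH, so the $m$ Z-spiders pass through unchanged, as do the structural pieces (identity wires, cups, caps, swaps) which are native to both calculi. Thus the entire content of the theorem reduces to giving, for each H-box of arity $k\le g$, a translation that uses only $O(k)$ gates drawn from $\{\ket{0}\text{-controlled }X,\, H,\, \ket{0}\}$; summing over the $n$ H-boxes then yields the $O(ng)$ bound.

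First I would dispatch the low-arity H-boxes directly. The 2-legged H-box is by definition the single-qudit Hadamard. The 1-legged H-box, using the empty-product convention in the H-box definition, is proportional to $\sum_j \omega^j \ket{j}=H X\ket{0}$, which is built from one $\ket 0$ ancilla, one $\ket 0$-controlled $X$ (to realize $X$, exactly as in the proof of Theorem~\ref{thm:XH-universality}), and one Hadamard. For the 3-legged H-box I would invoke Eq.~\eqref{eq:three-legged-H-box}, which equates it with a qudit CCZ gate applied to $\ket{\mathnormal+\mathnormal+\mathnormal+}$. Since CCZ is the qudit Toffoli conjugated by Hadamards on the target, and the Toffoli is a classical reversible 3-qudit map, Theorem~\ref{thm:daryrev} at $n=3$ realizes it in $O(1)$ $\ket 0$-controlled $X$ gates with $O(1)$ zeroed ancillae. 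Each arity-$\le 3$ H-box therefore costs $O(1)$ gates.

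For a general H-box of arity $k\le g$, one natural route is to repeatedly unfuse via (the contraction form of) \HFuseRule{} into $O(k)$ three-legged H-boxes connected along Hadamards, and then apply the 3-leg case; this already gives an $O(k)$ per-H-box cost. The more direct route, and the one I would foreground, is to recognize the $k$-leg H-box as a $k$-qudit multiply-controlled $Z$-type gate acting on $\ket{\mathnormal+}^{\otimes k}$, and construct it using the asymptotically efficient $O(k)$-gate $\ket{0}^{\otimes k}$-controlled $X_{01}$ construction mentioned in the Remark following Proposition~\ref{prop:revatleast} (and detailed in the appendix), combined with conjugations by $H$ gates. This contributes $O(k)\le O(g)$ gates per H-box, and at most $O(k)$ $\ket 0$ ancillae, with no additional Z-spiders beyond the original $m$.

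The main obstacle is precisely this general case: applying Theorem~\ref{thm:daryrev} naively to a $k$-qudit Toffoli would give an $O(d^k k)$ count, exponential in the arity and hence ruinous for the linear-in-$g$ budget. The theorem therefore relies essentially on the specialized $O(k)$-gate construction of the $k$-controlled qudit Toffoli, which is the genuine technical input hidden in the appendix; once that ingredient is in place, everything else is an application-by-generator bookkeeping argument, and summing $O(g)$ gates over the $n$ H-boxes gives $O(ng)$ as claimed.
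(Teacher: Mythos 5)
Your proposal is correct and, in its first branch, coincides with the paper's own proof: the paper likewise keeps the $m$ Z-spiders untouched, identifies the two-legged H-box with the Hadamard and the one-legged H-box with a phase state, realises the three-legged H-box as a CCZ acting on $\ket{+}^{\otimes 3}$ via Eq.~\eqref{eq:three-legged-H-box} together with Theorem~\ref{thm:daryrev}, and handles arity $k>3$ by unfusing into $O(k)$ three-legged H-boxes using \HPhaseRule. (Your rendering of the one-legged H-box as $HX\ket{0}$ is in fact more faithful to the target gate set than the paper's, which leaves it as a ZX phase state $Z\ket{+}$.) Where you diverge is in foregrounding the direct $O(k)$-gate multi-controlled construction of Lemma~\ref{lem:qudit-linear-trick-gen} and declaring it ``the genuine technical input'' on which the theorem ``relies essentially.'' That characterisation is off: the paper's proof does not use the linear-cost $k$-controlled Toffoli at all, because the unfusing route never produces a Toffoli on more than three qudits --- each three-legged piece costs $O(d^3\cdot 3)=O(1)$ for fixed $d$, so the $O(d^k k)$ blow-up you worry about simply never arises. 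The direct route is a legitimate alternative (and the paper advertises it in the lead-in to the theorem) that buys a cleaner circuit with fewer intermediate $H^3$ connectors, but it is an optimisation, not a necessity; your own first route already closes the proof.
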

\begin{proof}
    Up to cups and caps, an H-box with two legs is the Hadamard gate, while an H-box with one leg is $Z\ket{\mathnormal +}$:
        $
        \tikzfig{zhtozx/one-legged-H-box} \quad\text{ where } \vec{Z} = \left(\omega,\omega^2,...,\omega^{d\texttt{-}1}\right) \text{ indicates the Z gate.}$

    Eq~\eqref{eq:three-legged-H-box} shows how to relate the three-legged H-box to the CCZ gate. We can then invoke Theorem~\ref{thm:daryrev} to build the Toffoli gate from $\ket{0}$-controlled $X$ gates, which is related to the CCZ gate by conjugating the target by Hadamards. Eq.~\eqref{eq:zx-ket0X} shows how to construct this gate in the ZX-calculus.
    Any H-box with $n > 3$ legs can then be built from unfusing to one- to three-legged H-boxes by applying rule \HPhaseRule. 
\end{proof}

Note that phase-free Z-spiders are just GHZ states $\ket{0...0}+\ket{1...1}+...+\ket{(d\texttt{-}1)...(d\texttt{-}1)}$, up to cups and caps.
Hence, using a typical decomposition of any size qudit GHZ state into a CX circuit on $\ket{\mathnormal +0...0}$,
\begin{equation}
    \tikzfig{zhtozx/ghz}
\end{equation}
we can further decompose the Z-spiders in Theorem~\ref{thm:ZH-to-ZX} into $\{\ket{0}\text{-controlled X},\text{H},\ket{0}\}$ as $\ket{\mathnormal +} = H \ket{0}$.
This then gives us a way to write any phase-free ZH diagram as a $\{\ket{0}\text{-controlled X},\text{H}\}$ circuit where ancillae and postselections on $\ket{0}$ and $\bra{0}$ are allowed.  By Eq.~\eqref{eq:zcxinzh}, we can also write any circuit composed of $\{\ket{0}\text{-controlled X},\text{H},\ket{0},\bra{0}\}$ as a phase-free ZH diagram.  Therefore, prime-dimensional phase-free ZH-diagrams correspond to $\{\ket{0}\text{-controlled X},\text{H}\}$ circuits with ancillae and postselections.
This thus generalizes the same correspondence which holds in the qubit case, between the qubit phase-free ZH-calculus and the Toffoli+Hadamard gate set.

\section{Universality of ZH over arbitrary rings}\label{universality-1}
We will now work towards proving universality of the qudit ZH-calculus for prime dimensions over the ring $\mathbb Z[\omega]$. To do so, it will be helpful to first consider an extended ZH-calculus, where we allow H-boxes labelled by elements of a ring.

So let $R \supset\mathbb Z[\omega, \frac{1}{\sqrt d}]$ be a commutative ring. We now introduce the following additional generators, \emph{labelled} H-boxes:
\begin{equation}
    \label{h-label}\tikzfig{qudit-labeled-h-box}
\end{equation}
Here $r$ is an arbitrary element of $R$. Note that the unlabeled $H$-box corresponds to the $\omega$-labelled one. In writing, we refer to the $\sd$-scaled (1-ary) $r$-labeled $H$-state as $H(r) = (1, r, r^2,...,r^{d-1})^T$. Keeping in line with the notation of Backens et al., we call this calculus ZH$_R$ \cite[Sec. 7]{backens2021}.

The basic idea behind the universality proof is to create a big \emph{Schur product} of simpler matrices. Recall that the Schur product of two matrices $A$ and $B$ of equal dimension is the entrywise product $(A\star B)_{ij} = A_{ij}B_{ij}$.
The Schur product is easily represented in qudit ZH (in the same way as it is for qubits~\cite[p.~27]{backens2021}):
\ctikzfig{zh-schur-product}
We can express an arbitrary $R$-valued matrix $M = (m_{ij})$ as a Schur-product of \emph{$r, 1$-pseudobinary} matrices. These are matrices where every entry of the matrix is either $r$ or $1$. 
Namely, let $\mathcal{R}\subseteq R$ be the, necessarily finite, set of $r\in R$ that appear as entries in $M$. 
Then for $r \in \mathcal{R}$, let $M_r = (m^{(r)}_{ij})$ be the matrix such that $m^{(r)}_{ij} = r$ if $m_{ij}=r$, and $m^{(r)}_{ij}=1$ otherwise. 
Then $M = \bigstar_{r\in\mathcal{R}} M_r$ is the Schur-product of these pseudobinary matrices. To prove universality over a ring $R$ it hence suffices to show that the qudit ZH-calculus can represent arbitrary $r, 1$-pseudobinary matrices for $r\in R$.

In this section we thus introduce the foundational building block of our universality proof: an algorithm for constructing ZH$_R$-diagrams of $r,1$-pseudobinary matrices. For this, we perform two intermediary steps: (1) Describe the location of the ones in a $r,1$-pseudobinary matrix using a logical formula $\varphi$, and (2) convert the formula into a polynomial whose roots are exactly the fulfilling assignments of $\varphi$.

Since we know how to express addition and multiplication as ZH$_R$-diagrams, turning a polynomial into a diagram is then rather straight-forward. The following diagrammatic gadgets, together with those of~\eqref{eq:qudit-correspondence} will prove useful:

\begin{equation}
    \label{useful}\tikzfig{mystery-tools-that-will-help-us-later}
\end{equation}

Consider a linear map $L: (\mathbb C^d)^{\otimes n} \rightarrow (\mathbb C^d)^{\otimes m}$ whose matrix is $r,1$-pseudobinary: for every $\vec x \in \{0,...,d-1\}^n$ we have $L(|\vec x\rangle) = \sum_{\vec y \in \{0,...,d-1\}^m} \lambda_{\vec x, \vec y} |\vec y\rangle$ where all $\lambda_{\vec x, \vec y} \in \{r,1\}$. 
We can describe the location of the $1$s in that matrix using a logical formula $\varphi_L$ in $n+m$ free variables such that $\varphi_L(x_1,...,x_n, y_1,...,y_m)$ is true iff $\lambda_{|x_1...x_n\rangle,|y_1...y_n\rangle} = 1$:
\begin{equation}\label{general-formula}
    \varphi_L(x_1,...,x_n, y_1,...,y_m) = \bigvee_{\substack{i_1,...,i_n,j_1,...,j_m \\\in \{0,...,d-1\}\\\lambda_{i_1...i_n,j_1...j_m}=1}} \bigwedge_{k=1}^n (x_k=i_k) \wedge \bigwedge_{\ell=1}^m (y_\ell = j_\ell).
\end{equation}

Logical formulae unfortunately do not correspond to something we can easily directly express in ZH$_R$. However, we can translate these formulae into polynomials, which we \emph{can} represent in ZH$_R$.

\begin{proposition}\label{formula-to-polynomial}
    If $d$ is prime, then for every propositional formula $\varphi$ over $(\mathbb{Z}_d, -, +, \cdot , =)$ in $n$ free variables there exists a polynomial $p_\varphi \in (\mathbb{Z}_d)[X_1,...,X_n]$ such that 
    $
        p_\varphi(x_1,...,x_n) = 0 \iff \varphi(x_1,...,x_n).
    $
\end{proposition}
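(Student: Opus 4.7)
The plan is to proceed by structural induction on $\varphi$. Rather than construct $p_\varphi$ directly, I would first build an \emph{indicator polynomial} $I_\varphi \in (\mathbb{Z}_d)[X_1,\dots,X_n]$ with the property that $I_\varphi(\vec{x}) = 1$ whenever $\varphi(\vec{x})$ holds and $I_\varphi(\vec{x}) = 0$ otherwise; the desired conclusion then follows immediately by taking $p_\varphi := 1 - I_\varphi$.

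For the base case, the crucial ingredient is Fermat's little theorem, which is exactly where primality of $d$ enters: for any $x \in \mathbb{Z}_d$ one has $x^{d-1} = 1$ if $x \neq 0$ and $x^{d-1} = 0$ if $x = 0$, so $1 - X^{d-1}$ acts as the characteristic function of $\{0\}$ on $\mathbb{Z}_d$. An atomic formula $t_1 = t_2$---where $t_1,t_2$ are polynomial terms in the free variables built from $+,-,\cdot$---is equivalent to $t_1 - t_2 = 0$, so I would set $I_{t_1 = t_2} := 1 - (t_1 - t_2)^{d-1}$.

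For the inductive step, once $I_\psi$ and $I_\chi$ are known to take only the values $0$ and $1$ on $\mathbb{Z}_d^n$, the standard boolean-to-arithmetic encodings yield indicator polynomials for the composed formulas: $I_{\psi \wedge \chi} := I_\psi \cdot I_\chi$, $I_{\psi \vee \chi} := I_\psi + I_\chi - I_\psi I_\chi$, and (if negation appears in the language) $I_{\neg \psi} := 1 - I_\psi$. A direct case analysis on the truth values of $\psi$ and $\chi$ shows that each of these evaluates correctly on every input and is itself $\{0,1\}$-valued, so the inductive hypothesis is preserved and the induction closes.

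The only real obstacle is the base case, and this is precisely where primality of $d$ is essential: for composite $d$ the ring $\mathbb{Z}_d$ has zero divisors (in fact $(d-1)!$ is itself a zero divisor for $d > 4$), so neither Fermat's little theorem nor Lagrange interpolation at the points $0, 1, \dots, d-1$ is available to furnish a polynomial vanishing exactly on $\{0\}$. Everything past the base case is purely formal and would go through over any commutative ring once $\{0,1\}$-valuedness of the indicators is secured.
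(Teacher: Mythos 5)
Your proof is correct, but it takes a slightly different route from the paper's. The paper also proceeds by structural induction, but its invariant is weaker: it maintains only that $p_\varphi$ \emph{vanishes exactly on the satisfying assignments}, with no control on the nonzero values. Atomic equalities $p_1 = p_2$ become $p_1 - p_2$ with no exponentiation, negation is handled by the Fermat trick $1 - (p_{\varphi'})^{d-1}$, and disjunction is handled by the plain product $p_{\varphi_1}\cdot p_{\varphi_2}$, which is where the paper invokes the \emph{absence of zero divisors} in $\mathbb{Z}_d$ --- so primality enters twice, once for Fermat and once for the field property. Your version instead normalises to $\{0,1\}$-valued indicator polynomials at the atoms and then uses the standard Boolean arithmetisation $I_\psi I_\chi$, $I_\psi + I_\chi - I_\psi I_\chi$, $1 - I_\psi$ for the connectives; this isolates primality entirely in the base case and makes the inductive step work over any commutative ring, which is conceptually cleaner. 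The trade-off is degree: because you force every atom through the $(d-1)$-th power, a simple disjunction such as $(x=0)\vee(x=1)$ costs you degree $2(d-1)$, whereas the paper's encoding gives $x(x-1)$ of degree $2$; conversely, for conjunction-heavy formulas the paper's reduction of $\wedge$ to $\neg(\neg\,\vee\,\neg)$ stacks an extra $(d-1)$-th power that your direct product avoids. Both arguments are sound, and your observation that no polynomial can act as the characteristic function of $\{0\}$ over composite $\mathbb{Z}_d$ correctly explains why the hypothesis cannot be dropped.
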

\begin{proof}
    Let $\varphi$ be a formula over $(\{1,...,d\}, -,+,\cdot, =)$ in $n$ free variables. We describe our polynomial $p$ inductively. Note that every arithmetic expression in our formula is already a polynomial, since we only allow addition, negation and multiplication in our signature. Thus, we only have to deal with equality, negation and disjunction%
    \footnote{We do not need to deal with conjunction, since $\neg$ and $\vee$ are functionally complete.}.
    \begin{enumerate}[label=\arabic*)]
        \item When $\varphi = (p_1(x_1,...,x_n) = p_2(x_1,...,x_n))$ for $p_1, p_2 \in (\mathbb Z_d)[X_1,...,X_n]$, set $p_\varphi = p_1-p_2$.
        \item When $\varphi = \neg \varphi'$, set $p_\varphi = 1 - (p_{\varphi'})^{d-1}$.
        \item When $\varphi = \varphi_1 \vee \varphi_2$, set $p_\varphi = p_{\varphi_1} \cdot p_{\varphi_2}$.
    \end{enumerate}
    The only non-obvious part of the construction is the construction for negation. This step follows from the fact that for $d$ prime, exponentiating with $d-1$ in $\mathbb Z_d$ maps $0$ to $0$ and everything else to $1$. Lastly, note that the construction in 3) makes use of the absence of zero-divisors in fields. 
\end{proof}

\begin{lemma}\label{lem:polynomial}
    Assume $d$ is prime. Given a polynomial $p \in (\mathbb{Z}_d)[X_1,...,X_n]$ in $n$ variables, we can construct an $n$-input $0$-output $ZH_R$-diagram that evaluates to $1$ on states $|b_1...b_n\rangle$ such that $p(b_1,...,b_n)= 0$, and to $r$ on all other states.
\end{lemma}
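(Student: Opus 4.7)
The plan is to reduce the problem to evaluating a single scalar polynomial by first invoking Fermat's little theorem, and then assembling the required polynomial diagrammatically using the arithmetic gadgets of~\eqref{eq:qudit-correspondence}.

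First, I would form the auxiliary polynomial
\begin{equation*}
    q(X_1,\dots,X_n) \coloneqq p(X_1,\dots,X_n)^{d-1} \in (\mathbb{Z}_d)[X_1,\dots,X_n].
\end{equation*}
Since $d$ is prime, Fermat's little theorem gives $q(\vec b) = 0$ when $p(\vec b) = 0$, and $q(\vec b) = 1$ otherwise, for every $\vec b \in \mathbb{Z}_d^n$. So evaluating $q$ and then applying $r^{(-)}$ produces exactly the desired scalar on each basis state.

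Second, I would build inductively a ZH$_R$-diagram $D_q$ with $n$ inputs and one output such that, on every computational basis state, $D_q \ket{b_1\dots b_n}$ is proportional to $\ket{q(b_1,\dots,b_n)}$. The recursion follows the structure of $q$ written as a sum of monomials: use Z-spiders to duplicate each variable $x_i$ as many times as it appears, use the multiplication gadget of~\eqref{eq:qudit-correspondence} to form each product (in particular the $(d-1)$-fold self-product giving $p(\vec x)^{d-1}$), use the X-spider addition gadget to sum the monomials, and realise any constant coefficients in $\mathbb{Z}_d$ by preparing computational basis states using the Pauli $X$ generator of~\eqref{eq:derived-gen} on a $\ket{0}$ state obtained from the gadgets of~\eqref{useful}.

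Third, I would cap the single output wire of $D_q$ with a $1$-leg $r$-labelled H-box, which up to normalisation is the state $H(r) = (1, r, r^2, \dots, r^{d-1})^T$. On input $\ket{b_1\dots b_n}$, the composite then evaluates to $r^{q(\vec b)}$, which equals $1$ when $p(\vec b) = 0$ and $r$ otherwise, as required.

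The hard part, and the main thing that has to be done carefully, is the bookkeeping of global scalars: each multiplication and copy gadget picks up a $\sqrt d$ factor from the H-box normalisation, and the final labelled H-state contributes a $1/\sqrt d$ factor. These are counted and compensated by inserting the scalar generators from~\eqref{eq:derived-gen} so that the overall prefactor is exactly $1$ on all inputs. Beyond scalar bookkeeping, checking that the arithmetic gadgets act on basis states as the claimed $\mathbb Z_d$ operations is routine from the definitions, and the inductive step of the polynomial construction then goes through immediately.
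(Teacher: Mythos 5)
Your proposal is correct and follows essentially the same route as the paper: build a diagram computing $\ket{\vec b}\mapsto\ket{p(\vec b)^{d-1}}$ from the arithmetic gadgets of~\eqref{eq:qudit-correspondence} and~\eqref{useful}, then cap with the effect $H(r)^T=(1,r,\dots,r^{d-1})$ so that Fermat's little theorem yields $1$ or $r$. The only cosmetic differences are that the paper applies the $x\mapsto x^{d-1}$ gadget after evaluating $p$ rather than folding it into the polynomial, and organises the induction over the last variable (writing $p\in(\mathbb{Z}_d[X_1,\dots,X_{n-1}])[X_n]$) rather than expanding into monomials.
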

\begin{proof}
    First suppose that we had a diagram implementing the map $|b_1...b_n\rangle \mapsto |p(b_1,...,b_n)\rangle$. 
    For $d$ prime, the map $x\mapsto x^{d-1}$ in $\mathbb{Z}_d$ maps $0$ to $0$, and everything else to $1$. By~\eqref{useful}, we know how to realize this operation as a ZH-diagram. Apply this operation to the output of the diagram implementing the polynomial, and postselect with the effect $H(r)^T = (1, r, r^2, ..., r^{d-1})$. This gives the desired map.
    So let's see how to implement the map $|b_1...b_n\rangle \mapsto |p(b_1,...,b_n)\rangle$. 
    
    We do this by induction on the number of variables $n$.
    If $n=0$, then $p \in \mathbb Z_d$ is a constant, which we know how to realize using~\eqref{useful}.
    Now suppose we know how to construct diagrams for polynomials with $n-1$ variables. By definition of polynomial rings we can abuse notation slightly to write $p \in (\mathbb{Z}_d[X_1,...,X_{n-1}])[X_n]$, e.g. 
        $p = \sum_{i=0}^{k} p_i X_n^i\text{ for }p_0,...,p_{k} \in \mathbb{Z}_d[X_1,...,X_{n-1}].$
    By induction, we have ZH-diagrams realizing $p_0,...,p_{k}$, which we denote by boxes labeled ``$p_0$'' through ``$p_{k}$''. Then a diagram for the desired map can be constructed as follows (using the correspondence to algebraic operations of~\eqref{eq:qudit-correspondence}):
    \begin{equation*}
    \scalebox{0.8}{
    \tikzfig{zh-polynomial} 
    } \qedhere
    \end{equation*}
\end{proof}

\sloppy In light of~\eqref{general-formula}, this means that using a polynomial $p \in (\mathbb Z_d)[X_1,...,X_m, Y_1,...,Y_n]$ such that $p(x_1,...,x_m,y_1,...,y_n) = 0 \iff \varphi(x_1,...,x_m, y_1,...y_n)$, we can use Lemma~\ref{lem:polynomial} and map-state-duality to construct arbitrary $r,1$-pseudobinary linear maps as ZH$_R$-diagrams. 

\begin{corollary}\label{prop:r1pseudobinary}
    For prime $d$, every $r,1$-pseudobinary linear map $L: (\mathbb C^d)^{\otimes n}\rightarrow (\mathbb C^d)^{\otimes m}$ has a qudit ZH$_R$-diagram realizing $L$. 
\end{corollary}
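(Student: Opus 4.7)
The plan is to compose the three ingredients just established---equation~\eqref{general-formula}, Proposition~\ref{formula-to-polynomial}, and Lemma~\ref{lem:polynomial}---and then close off with map-state duality. Given an $r,1$-pseudobinary $L$, I would first write down the propositional formula $\varphi_L(x_1,\ldots,x_n,y_1,\ldots,y_m)$ prescribed by~\eqref{general-formula}, whose satisfying assignments are exactly the index pairs $(\vec{x},\vec{y})$ with matrix entry $\lambda_{\vec{x},\vec{y}}=1$. Since this is a finite disjunction of conjunctions of equalities between variables and constants, it lies in the fragment handled by Proposition~\ref{formula-to-polynomial}, which then produces a polynomial $p_{\varphi_L}\in(\mathbb{Z}_d)[X_1,\ldots,X_n,Y_1,\ldots,Y_m]$ whose zero set coincides with the $1$-entries of $L$.

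Next, I would feed $p_{\varphi_L}$ into Lemma~\ref{lem:polynomial}, yielding an $(n+m)$-input, $0$-output ZH$_R$-diagram $D$ that evaluates to $1$ on basis elements $\ket{\vec{x}\vec{y}}$ with $p_{\varphi_L}(\vec{x},\vec{y})=0$ and to $r$ otherwise. Read as a covector, $D$ is precisely $\sum_{\vec{x},\vec{y}}\lambda_{\vec{x},\vec{y}}\,\bra{\vec{x}\vec{y}}$. To finish, I would apply map-state duality: bend the last $m$ input wires of $D$ through cups so that they become outputs. The resulting $n$-input, $m$-output ZH$_R$-diagram is $\sum_{\vec{x},\vec{y}}\lambda_{\vec{x},\vec{y}}\ket{\vec{y}}\bra{\vec{x}}=L$, as required.

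Since the heavy lifting is already packaged inside the preceding proposition and lemma, I do not expect a genuine obstacle; the argument is really just an assembly of the pieces. The only things worth verifying are that the formula in~\eqref{general-formula} is indeed of the form permitted by Proposition~\ref{formula-to-polynomial}---it is, being a finite disjunction of conjunctions of equalities between free variables and constants in $\mathbb{Z}_d$---and that the cups used for map-state duality are exactly the compact-closed cups of the underlying PROP, so that no extraneous scalar sneaks in to distort the entries of $L$. Both checks are immediate from the setup at the start of Section~\ref{qudit-zh}.
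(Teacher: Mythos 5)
Your proof is correct and follows exactly the same route as the paper's: build $\varphi_L$ via~\eqref{general-formula}, convert it to a polynomial with Proposition~\ref{formula-to-polynomial}, realise the resulting $(n+m)$-input, $0$-output diagram with Lemma~\ref{lem:polynomial}, and bend up the last $m$ wires. The extra checks you note (that the formula lies in the permitted fragment and that the cups are the compact-closed structure of the PROP) are implicit in the paper's version but harmless to make explicit.
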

\begin{proof}
    Use~\eqref{general-formula} to construct a formula $\phi(\vec x,\vec y)$ that is true when $\bra{\vec y} L \ket{\vec x} = 1$ and $r$ otherwise. Then use Proposition~\ref{formula-to-polynomial} to transform $\phi$ into a polynomial $p$ that is $0$ when $\phi$ is true, and finally use Lemma~\ref{lem:polynomial} to construct a diagram with $n+m$ inputs that evaluates to $1$ when you input $\ket{\vec x}\otimes \ket{\vec y}$ with $p(\vec x,\vec y) = 0$ and to $r$ on other inputs. Bending the last $m$ wires up to be outputs gives a diagram that is exactly equal to $L$.
\end{proof}

We give a worked out example of this entire procedure in Appendix~\ref{appendix:construct}.

\begin{theorem}\label{thm:universality-ring}
    Let $R \supset\mathbb Z[\omega, \frac{1}{\sqrt d}]$ be a commutative ring. Then ZH$_R$ is universal for matrices over $R$.
\end{theorem}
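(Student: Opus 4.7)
The plan is to combine the Schur product decomposition sketched earlier in the section with Corollary~\ref{prop:r1pseudobinary}, which already handles the $r,1$-pseudobinary case. Let $M = (m_{ij})$ be an arbitrary matrix with entries in $R$ representing a linear map $(\mathbb{C}^d)^{\otimes n} \to (\mathbb{C}^d)^{\otimes m}$. Let $\mathcal{R} \subseteq R$ be the (necessarily finite) set of distinct values appearing as entries of $M$. For each $r \in \mathcal{R}$, define the $r,1$-pseudobinary matrix $M_r$ by $(M_r)_{ij} = r$ when $m_{ij} = r$ and $(M_r)_{ij} = 1$ otherwise. A direct entry-wise check gives $M = \bigstar_{r \in \mathcal{R}} M_r$: at position $(i,j)$ with $m_{ij} = r_0$, exactly one factor contributes $r_0$ and every other factor contributes $1$, so the product of entries is $r_0 = m_{ij}$.

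Next, invoke Corollary~\ref{prop:r1pseudobinary} on each $M_r$ to obtain a ZH$_R$-diagram $D_r$ realising $M_r$. Combine these using the diagrammatic Schur-product construction displayed earlier in this section (pairs of wires identified through Z-spiders at the inputs and outputs), extended from two to finitely many factors by a trivial induction: the Schur product is associative and commutative on entries, so we may simply nest the two-factor construction $|\mathcal{R}|-1$ times, or equivalently merge all corresponding input/output wires through a single $|\mathcal{R}|$-legged Z-spider per index. The resulting diagram realises $M$ on the nose.

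The main obstacle is already behind us: the encoding of arbitrary matrix support via propositional formulas over $\mathbb{Z}_d$ (Equation~\eqref{general-formula}), their conversion into polynomials (Proposition~\ref{formula-to-polynomial}), and the polynomial-evaluation gadget (Lemma~\ref{lem:polynomial}) together yield Corollary~\ref{prop:r1pseudobinary}, which is where all the real work happens. What remains here is essentially bookkeeping: note that $\mathcal{R}$ is finite because $M$ has finitely many entries, observe that degenerate cases (e.g.\ $M$ being the all-ones matrix, where $\mathcal{R} = \{1\}$ and $M_1 = M$ is trivially pseudobinary, or $M$ being the zero matrix, where $\mathcal{R} = \{0\}$) pose no issue, and recall that any $1/\sqrt{d}$ normalisation factors picked up from the H-box definition can be absorbed using the scalar generators $\sd$ and $\sdi$ of \eqref{eq:derived-gen}, which are available since $\frac{1}{\sqrt{d}} \in R$.
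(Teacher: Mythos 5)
Your proposal is correct and follows essentially the same route as the paper: decompose $M$ into $r,1$-pseudobinary factors $M_r$, realise each via Corollary~\ref{prop:r1pseudobinary}, and recombine with the diagrammatic Schur product. The extra bookkeeping you include (finiteness of $\mathcal{R}$, degenerate cases, absorbing $1/\sqrt{d}$ scalars) is consistent with, and slightly more explicit than, the paper's two-line argument.
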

\begin{proof}
    By Proposition~\ref{prop:r1pseudobinary} we can construct ZH$_R$-diagrams for arbitrary $r,1$-pseudobinary matrices for $r\in R$. By taking Schur products of these matrices, any matrix over $R$ can be realised.
\end{proof}

\section{Universality of the phase-free ZH-calculus}\label{universality-2}

We now set our sights on establishing the universality of the phase-free ZH-calculus, where we are only allowed $\omega$-labelled (i.e.~phase-free) H-boxes, for matrices over the ring $\mathbb{Z}[\omega]$.
We will use the structure of the previous proof, reducing the problem to the ability to construct diagrams for $r,1$-pseudobinary matrices, where now $r\in R=\mathbb{Z}[\omega]$. The only obstacle to using this approach is that in the proof of Lemma~\ref{lem:polynomial} we require a postselection to the state $H(r)$, which we don't a priori have access to. To prove universality of the phase-free ZH-calculus we hence need to show that we can construct diagrams for states of the form $H(r) = (1,r,r^2,\ldots, r^{d-1})^T$ where $r=a_1 + a_2\omega +\ldots + a_{d-1}\omega^{d-1} \in \mathbb{Z}[\omega]$.

Backens \emph{et al.}~\cite{backens2021} established the analogous results in the qubit case: that ZH is universal for integer-valued matrices even without introducing labeled $H$-boxes as new generators. To show this, they construct an equivalent to all the integer labelled $H$-boxes: there is a simple expression with the same linear map as the $H(0)$-box, and there is a successor gadget that increments the label of an arbitrary $H$-box by 1. Construction of negative integers is done by using a negation gadget. We will follow a similar path.

First, we already have a representation of $H(0) = \ket{0}$ (see Eq.~\eqref{useful} and take $k=0$). Our immediate goal is then to construct a successor gadget to increment $H$-box labels. This will give us H-boxes with natural numbers as labels. The other possible labelled H-boxes are then straightforward to construct.

\subsection{The qudit successor gadget}

A successor gadget $S = (s_{ij})_{0\le i,j<d}$ that increments the label of an $H$-box by $1$ has to satisfy the equation $SH(a) = H(a+1)$ for any $a$.
Looking at the definition of the qudit $H$-box, this means the coefficients $s_{ij}$ of $S$ have to satisfy the equations $(a+1)^i = \sum_{j=0}^{d-1}s_{ij}a^j$. To solve this, we recall the binomial theorem, which states that $(a+1)^i = \sum_{j=0}^i \binom ij a^j$. Hence, we must have $s_{ij}=\binom ij$, with the convention $\binom ij = 0$ for $j>i$. This means that the matrix $S$ encodes \emph{Pascal's triangle} in the form of a lower triangular matrix. 

Note that because we already have a representation of $H(0)$, that we can use Lemma~\ref{lem:polynomial} to construct a ZH-diagram for any \emph{binary} matrix: a matrix whose entries are only $0$'s and $1$'s.
Our task then is to construct a ZH-diagram for $S$ using only binary matrices. We achieve this by constructing each row of $S$ individually and then \emph{multiplexing} between them.
To see how this works, first consider the linear map $R: \mathbb C^d \rightarrow \mathbb C^d, |i\rangle \mapsto |i\rangle + |i+1\rangle$. 
One readily verifies that the coefficients of $R^j|0\rangle$ for $0\le j< d$ correspond to the $(j+1)$th row of Pascal's triangle. 
Hence, our desired successor gadget $S$ satisfies the equation $R^j \ket{0} = S^T\ket{j}$. Therefore, we need some way to apply a different power of $R$ to different inputs (and then take the transpose, which is straightforward). To do this we need a \emph{multiplexer}.

Consider the linear map $M: (\mathbb C^d)^{d+1} \rightarrow \mathbb C^d$ defined by
\begin{equation*}
    |x_0...x_{d-1}\rangle \otimes |c\rangle \ \mapsto\  \begin{cases}
                                                           |x_c\rangle & x_j=0\text{ for all } j\ne c
                                                           \\0&\text{otherwise}.
                                                       \end{cases}
\end{equation*}

Let $\ket{\varphi^i} = \sum_{j=0}^{d-1} \lambda_{ij}|j\rangle$ be a collection of states for $0 \leq i < d$ where for all $i$ the $\ket{0}$ coefficient $\lambda_{i0}$ equals $1$. Then for a fixed control value $0\le c <d$ we calculate:
\begin{align*}
    M(\ket{\varphi^0}\otimes...\otimes \ket{\varphi^{d-1}}\otimes |c\rangle) \ &=\  \sum_{j_0=0}^{d-1} \cdots \sum_{j_{d-1}=0}^{d-1} \lambda_{0j_0}\cdots \lambda_{(d-1)j_{d-1}} M(|j_0...j_{d-1}\rangle \otimes |c\rangle) 
    \\&=\  \lambda_{00}\cdots\lambda_{(d-1)0} \sum_{j_c=0}^{d-1} \lambda_{cj_c}|j_c\rangle \ =\ \sum_{j_c=0}^{d-1} \lambda_{cj_c}|j_c\rangle \ = \ \ket{\varphi^c} .
\end{align*}

Hence, $M$ multiplexes between these input states, using $\ket{c}$ as a control. 
As each row of Pascal's triangle starts with 1, the states $R^j\ket{0}$ have the right property. Hence $M(R^0\ket{0}\otimes\cdots\otimes R^{d-1}\ket{0}\otimes \ket{c}) = R^c\ket{0} = S^T\ket{c}$. So by combining $M$ and powers of $R$, and applying some appropriate transposes, we get $S$. 

Both maps $R$ and $M$ are binary, meaning we can realize them as a phase-free ZH-diagram using Proposition~\ref{prop:r1pseudobinary}. We perform this construction for $R$ in Appendix \ref{appendix:construct}, while for $M$ we only outline the first few steps, without actually constructing the diagram, due to its immense size. Using placeholders for $M$ and $R$, we get the following diagram for our successor map $S$:
\[ \scalebox{0.9}{
    \tikzfig{qudit-succ}
    }
\]
We can then realize any integer-labeled $H$-box where the label is non-negative: $H(n) = S^n H(0)$. Combining this with Lemma~\ref{lem:polynomial} means we can already construct arbitrary $\mathbb{N}$-valued matrices. To get all integer labeled $H$-boxes, we construct $-1$ in the next subsection. 

\subsection{Constructing all the labelled H-boxes}

To construct more complicated labelled H-boxes we first realise that by taking the Schur product of two labelled H-boxes $H(a)$ and $H(b)$, we calculate the product of the labels (up to global scalar): $H(a)\star H(b) = \frac{1}{\sqrt{d}}H(a\cdot b)$.
Since we already know how to construct $H(n)$ for any $n\in \mathbb{N}$, and we have the phase-free H-box $H(\omega)$ we can then also construct $H(\omega n)$ for any $n \in \mathbb{N}$. 
The second observation is that the successor gadget $S$ adds $1$ to the label regardless of the label, including non-integers. We can hence construct $S^m H(\omega n) = H(\omega n + m)$. Iterating these two steps we can then build $H(\omega^{d-1} n_1 + \omega^{d-2} n_2 + \cdots + \omega n_{d-1} + n_d)$ where all the $n_j\in \mathbb{N}$.

Recall that for a $d$th root of unity $\omega\neq 1$ we have the identity $\sum_{j=1}^{d-1} \omega^j = -1$. Hence using the above procedure we can also construct $H(-1)$ as $H(-1) = H(\omega + \omega^2 + \cdots + \omega^{d-1})$. Combining this with our construction of $H(\sum_j n_j \omega^j)$ for positive $n_j$ above, we can then construct any $H(\sum_j a_j \omega^j)$ where $a_j \in \mathbb{Z}$. For instance, if we want to construct $H(n_2 \omega^2 - n_1 \omega - n_0)$ we do it with the following sequence of operations:
\begin{align*}
H(n_2) \ &\rightarrow\  H(n_2 \omega) \ \rightarrow\  H(-n_2 \omega) \ \rightarrow\  H(-n_2 \omega + n_1) \ \rightarrow\  H(-n_2\omega^2 + n_1\omega) \\
&\rightarrow\  H(-n_2\omega^2 + n_1\omega + n_0) \ \rightarrow\  H(n_2\omega^2 -n_1\omega - n_0).
\end{align*}


To summarise the whole construction of this section: we started out with the observation that with Lemma~\ref{lem:polynomial} we can represent arbitrary polynomials in phase-free ZH, and in this way represent arbitrary binary matrices (where every entry is either $0$ or $1$).
We then found a way to construct a ``successor gadget'' $S$ that increments the label of an H-box, $SH(a) = H(a+1)$, from building blocks that are binary matrices which we know how to construct. Together with using the Schur product as a multiplication operation for H-box labels, this then allowed us to create H-boxes with arbitrary labels from $\mathbb{Z}[\omega]$. But then we can appeal to the same construction in Proposition~\ref{prop:r1pseudobinary} and Theorem~\ref{thm:universality-ring} to conclude the following:
\begin{theorem}
    The phase-free ZH-calculus for qudits of prime dimension $d$ is universal for matrices over the ring $\mathbb{Z}[\omega]$, where $\omega = e^{2\pi i /d}$ is a $d$th root of unity.
\end{theorem}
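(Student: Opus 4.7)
The plan is to follow the template of Theorem~\ref{thm:universality-ring}, but to replace the externally postulated labelled $H$-boxes with diagrams built purely from the phase-free generators. The machinery of Section~\ref{universality-1}---pseudobinary decomposition, the formula-to-polynomial translation of Proposition~\ref{formula-to-polynomial}, and the polynomial gadget of Lemma~\ref{lem:polynomial}---goes through verbatim with any postselection state $H(r)$ we can actually produce, and Corollary~\ref{prop:r1pseudobinary} together with Schur products then reduces universality over $R=\mathbb{Z}[\omega]$ to the single task of realizing $H(r)$ phase-freely for every $r \in \mathbb{Z}[\omega]$.

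Hence the proof splits into a construction of $H(r)$ for all $r \in \mathbb{Z}[\omega]$, followed by an appeal to the ring-case argument. For the first task I would proceed in four steps. First, $H(0)$ is available from~\eqref{useful}, and the unlabelled $H$-box together with the $\sqrt d$ scalar of~\eqref{eq:derived-gen} gives exactly $H(\omega)$. Second, iterate the successor gadget $S$: since $S\,H(a) = H(a{+}1)$ for any label $a$, applying $S$ to $H(0)$ a total of $n$ times yields $H(n)$ for any $n \in \mathbb{N}$, and more generally $S^m$ shifts any label upward by $m$. Third, use the Schur-product identity $H(a) \star H(b) = \tfrac{1}{\sqrt d}H(ab)$ to multiply labels; this lets us transform $H(n)$ into $H(n\omega)$, $H(n\omega^2)$, and so on, by repeated Schur product with $H(\omega)$. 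Fourth, exploit the cyclotomic identity $\sum_{j=1}^{d-1}\omega^j=-1$: steps two and three already let us construct $H(a_0 + a_1\omega + \cdots + a_{d-1}\omega^{d-1})$ for nonnegative integer coefficients, so in particular $H(-1)$ is constructible, and interleaving further successor-applications and Schur-products with $H(\omega)$ and $H(-1)$ then produces $H(r)$ for every $r = \sum_j a_j \omega^j$ with $a_j \in \mathbb{Z}$, as illustrated by the explicit six-step sequence given just above.

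With $H(r)$ available phase-freely for every $r \in \mathbb{Z}[\omega]$, I would then rerun the argument of Corollary~\ref{prop:r1pseudobinary} unchanged: encode the support of an $\{r,1\}$-pseudobinary matrix via~\eqref{general-formula}, convert the formula to a polynomial by Proposition~\ref{formula-to-polynomial}, and feed that polynomial into Lemma~\ref{lem:polynomial}, whose postselection state is now itself a phase-free sub-diagram. An arbitrary matrix $M$ over $\mathbb{Z}[\omega]$ then decomposes as a Schur product of such pseudobinary matrices, and the phase-free Schur-product gadget assembles these into a single diagram, yielding a phase-free realization of $M$.

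The main obstacle I anticipate is purely the bookkeeping of $\tfrac{1}{\sqrt d}$ scalars: every Schur product and several of the polynomial-level gadgets introduce factors of $\sqrt d$ or $1/\sqrt d$, and since we want exact universality---not universality up to a global scalar---these must all cancel. I would handle this by tracking at each inductive step the precise scalar discrepancy between the constructed diagram and the intended map, and inserting the appropriate number of the phase-free scalar generators from~\eqref{eq:derived-gen} to rebalance. Aside from this accounting, the proof is essentially the assembly of ingredients already provided by the preceding sections, with primality of $d$ entering only through its prior use in Proposition~\ref{formula-to-polynomial} and Lemma~\ref{lem:polynomial}.
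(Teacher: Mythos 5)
Your overall strategy coincides with the paper's: reduce everything to the ability to produce the postselection states $H(r)$ for $r\in\mathbb{Z}[\omega]$ phase-freely, starting from $H(0)$ and $H(\omega)$, climbing to $H(n)$ for $n\in\mathbb{N}$ via a successor map, multiplying labels with Schur products, and reaching negative coefficients through $\sum_{j=1}^{d-1}\omega^j=-1$; the final appeal to Corollary~\ref{prop:r1pseudobinary} and the Schur decomposition is also the same, and your scalar bookkeeping via the $\sqrt{d}$ and $1/\sqrt{d}$ generators of \eqref{eq:derived-gen} is exactly how the residual normalisation factors are absorbed.

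There is, however, one genuine gap: you invoke ``the successor gadget $S$'' as if its phase-free existence were already established, but constructing $S$ is the only technically substantive step in this proof, and it cannot be obtained from the machinery you cite. The matrix of $S$ is forced by $SH(a)=H(a+1)$ and the binomial theorem to be Pascal's triangle, $s_{ij}=\binom{i}{j}$, whose entries are arbitrary positive integers. Building such a matrix directly via the pseudobinary decomposition of Section~\ref{universality-1} would require the states $H(\binom{i}{j})$ for natural-number labels---precisely what $S$ is supposed to deliver---so the argument as you state it is circular. The paper breaks the circularity by expressing $S$ entirely in terms of \emph{binary} ($0/1$-entried) matrices, which are constructible from Lemma~\ref{lem:polynomial} using only the available postselection $H(0)$: one checks that $R\colon\ket{i}\mapsto\ket{i}+\ket{i+1}$ satisfies $R^j\ket{0}=S^T\ket{j}$ (the rows of Pascal's triangle), and then a binary multiplexer $M$ selects among the states $R^0\ket{0},\dots,R^{d-1}\ket{0}$ according to a control qudit, yielding $S^T$ and hence $S$. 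Without this (or some equivalent bootstrap from binary matrices to integer-labelled $H$-boxes), your step two is unsupported and the proof does not go through.
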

Note that phase-free ZH-diagrams in fact represent a slightly larger fragment, corresponding to matrices $\frac{1}{\sqrt{d}^k} M$ where $M$ has entries in $\mathbb{Z}[\omega]$. This is because we have global factors of $\frac{1}{\sqrt{d}}$ that cannot be made `local' inside of the matrix. This is analogous to the qubit result for the phase-free ZH-calculus~\cite[Section~8.3]{backens2021} and the Toffoli+Hadamard circuit fragment~\cite{amy2019}.

\section{Conclusion}

We have introduced a qudit ZH-calculus, and showed how to generalise all the rules of the phase-free qubit calculus. We have established a universality result both for qudit ZH over an arbitrary ring, as well as for the phase-free ZH-calculus. 
We found that phase-free ZH-diagrams correspond to postselected circuits of Hadamard and $\ket{0}$-controlled $X$ gates. We showed that this gate set is approximately universal for qudit computation, and we found an almost asymptotically optimal strategy for compiling classical reversible qudit logic to this gate set.

The most immediate question about our qudit ZH-calculus is whether our generalisation of the qubit phase-free rules remains complete for qudits. It is possible to generalise the unique normal form for qubits from~\cite{backens2021}, but it is far from clear how to prove that we can reduce arbitrary diagrams to this normal form.
Another open question is whether our construction in Theorem~\ref{thm:daryrev} is optimal, or whether it can be improved by a logarithmic factor. An interesting future direction would be to translate to and from the ZXW-calculus~\cite{poor2023completeness} to achieve completeness of the qudit ZH-calculus and thereafter improve the challenging compilation of classical reversible logic in photonic quantum computing~\cite{Felice2023ZXWphotonics}.

\textbf{Acknowledgements}: LY is supported by an Oxford - Basil Reeve Graduate Scholarship at Oriel College with the Clarendon Fund. PR was supported by the German Academic Scholarship Foundation. We thank the anonymous reviewers for their feedback.

\bibliographystyle{eptcs}
\bibliography{references}

\appendix

\section{Building all prime-dimensional classical reversible gates}\label{sec:revappendix}

In this section of the appendix, we derive Theorem~\ref{thm:daryrev}.
First, we define a few gates which we will use.  
When an explicit proof is not given, it is because the construction can then be easily checked for any fixed odd prime dimension, and realising that they only affect the $\{\ket{0},\ket{1}\}$ subspace.

The first is a permutation cycle of length $3$ which maps $\ket{00} \mapsto \ket{01}$, $\ket{01} \mapsto \ket{10}$, and $\ket{10} \mapsto \ket{00}$, and is identity on all other computational basis states:
\begin{lemma}
    \begin{equation}
        \scalebox{1.7}{\tikzfig{zhtozx/p3}}
    \end{equation}
\end{lemma}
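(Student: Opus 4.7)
My plan is to verify the identity by a direct computation of the right-hand side on the computational basis, as the authors themselves suggest in the opening paragraph of the appendix. Two observations make the verification routine.

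The first observation, explicitly stated in the appendix preamble, is that every primitive gate appearing in the construction acts as the identity outside the $\{\ket{0},\ket{1}\}$ subspace of each qudit. Consequently the composite circuit fixes every basis state $\ket{c,t}$ with $c\ge 2$ or $t\ge 2$, which disposes of all $d^2 - 4$ such basis states in one sweep, uniformly in the odd prime $d$. The verification therefore reduces to computing the action of the circuit on the four basis states $\ket{00}$, $\ket{01}$, $\ket{10}$, $\ket{11}$.

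The second step is the finite check itself. I would read off the gate list encoded by the diagram — a short sequence of single-qudit $X_{01}$-type gates together with $\ket{0}$-controlled $X_{01}$-type gates, which will have been constructed earlier in the appendix out of $\ket{0}$-controlled $X$ — and apply them in order to each of the four basis vectors. The target identity requires the images to be $\ket{00}\mapsto\ket{01}$, $\ket{01}\mapsto\ket{10}$, $\ket{10}\mapsto\ket{00}$, and $\ket{11}\mapsto\ket{11}$. Since every constituent gate restricts to a permutation of the four-element set $\{\ket{00},\ket{01},\ket{10},\ket{11}\}$, this amounts to composing a small number of permutations of four letters.

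The only real obstacle is bookkeeping: extracting the gate list from the diagram correctly and carrying out the four traces without off-by-one errors in the indices. No deeper argument is needed, which is precisely why the authors are content to leave the explicit verification implicit, noting only that such lemmas are easily checked once one exploits the restriction to the $\{\ket{0},\ket{1}\}$ subspace.
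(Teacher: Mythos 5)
Your reduction rests on a false premise. The constituent gates of this circuit are $\ket{0}$-controlled $X$ and $\ket{0}$-controlled $X^\dagger$ gates --- the paper's primitive two-qudit gate --- and these do \emph{not} act as the identity outside the $\{\ket{0},\ket{1}\}$ subspace: whenever the control is $\ket{0}$ the target is incremented (or decremented) regardless of its value, so e.g.\ $\ket{0,1}\mapsto\ket{0,2}$ and $\ket{0,5}\mapsto\ket{0,6}$. Consequently the individual gates do not even restrict to permutations of $\{\ket{00},\ket{01},\ket{10},\ket{11}\}$, and you cannot dispose of the $d^2-4$ remaining basis states ``in one sweep''; the composite is the identity on them only after nontrivial cancellations along the way (on input $\ket{0,5}$ the intermediate states wander to $\ket{0,4}$ before returning). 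You have also misread the gate list: the $X_{01}$ and $\ket{0}$-controlled $X_{01}$ gates are \emph{built from} this lemma later in the appendix, so they cannot appear inside it. The appendix's remark about ``only affecting the $\{\ket{0},\ket{1}\}$ subspace'' applies to the lemmas stated \emph{without} proof, where the composite gate being constructed manifestly has that property; this lemma is precisely the one for which the paper supplies an explicit argument because that shortcut is unavailable.

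What is actually needed, and what the paper does, is a symbolic trace of a general basis state $\ket{x,y}$ with $x,y\in\mathbb{Z}_d$ through all four gates, using Fermat's little theorem to encode the control condition: the $\ket{0}$-controlled $X$ sends $\ket{c,t}$ to $\ket{c,\,t+1-c^{d-1}}$ since $c^{d-1}$ is $0$ for $c=0$ and $1$ otherwise (this is where primality of $d$ enters). After composing the four gates one obtains nested expressions such as $y+x^{d-1}-\bigl(x+(y+x^{d-1}-1)^{d-1}-1\bigr)^{d-1}$, which are then evaluated by a case distinction on whether $x$ and $y$ are $0$, $1$, or otherwise --- a finite check, but over these nine symbolic cases covering all of $\mathbb{Z}_d^2$, not over four basis vectors. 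Your proposal as written would verify the lemma only on the four states it happens to enumerate and would silently assume, rather than prove, the identity action everywhere else.
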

\begin{proof}
    Consider the action on a basis state $\ket{x,y}$ where $x,y \in \mathbb{Z}_d$.
    After the first gate, the state is $\ket{x,y+x^{d-1}-1}$.
    After the second gate, the state is $\ket{x+(y+x^{d-1}-1)^{d-1}-1,y+x^{d-1}-1}$.
    After the third gate, the state is $\ket{x+(y+x^{d-1}-1)^{d-1}-1,y+x^{d-1}-\left(x+(y+x^{d-1}-1)^{d-1}-1\right)^{d-1}}$.
    At this point, by case distinctions on $x$ and $y$ being $0$, $1$, or otherwise, we can compute that the bottom output state must be:
    \begin{equation}
        \begin{cases}
            \ket{0}, \ &\text{if}\ x = 0 \ \text{and}\ y = 1 \\
            \ket{1}, \ &\text{if}\ x = 1 \ \text{and}\ y = 0 \\
            \ket{y}, \ &\text{else}
        \end{cases}
    \end{equation}
    From here on, the fourth and final gate can be seen to apply when either both $x \neq 1$ and $y = 0$, or both $x = 0$ and $y = 1$.  Hence the top output state must be:
    \begin{equation}
        \begin{cases}
            \ket{0}, \ &\text{if}\ x = 1 \ \text{and}\ y = 0 \\
            \ket{1}, \ &\text{if}\ x = 0 \ \text{and}\ y = 0 \\
            \ket{x}, \ &\text{else}
        \end{cases}
    \end{equation}
    Upon inspection, the circuit sends $\ket{00} \mapsto \ket{01}$, $\ket{01} \mapsto \ket{10}$, and $\ket{10} \mapsto \ket{00}$, and is identity on all other computational basis states.
\end{proof}
Note that here the $\ket{0}$-controlled $X^\dagger$ gate can be constructed from the $\ket{0}$-controlled $X$ gate, by repeating that one $d-1$ times.

Using this, we can build the $\ket{0}$- and $\ket{1}$-controlled $X_{01}$ gate, where the $X_{01}$ gate is a single-qudit permutation gate that maps $\ket{0} \mapsto \ket{1}$, $\ket{1} \mapsto \ket{0}$, and is identity on all other computational basis states.
\begin{lemma}
    \begin{equation}
        \scalebox{1.7}{\tikzfig{zhtozx/Z_OCX01}}
    \end{equation}
\end{lemma}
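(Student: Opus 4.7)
The plan is to verify the stated circuit identity by tracing the action of the left-hand circuit on an arbitrary computational basis state $\ket{x,y}$ with $x,y\in\mathbb{Z}_d$ and checking that it matches the claimed $\{\ket{0},\ket{1}\}$-controlled $X_{01}$ gate. Recall that the target gate must swap $\ket{0,0}\leftrightarrow\ket{0,1}$ and $\ket{1,0}\leftrightarrow\ket{1,1}$ and act as the identity on every other basis state. Because the preceding lemma has established the $P_3$ permutation gadget as a primitive that cyclically maps $\ket{00}\mapsto\ket{01}\mapsto\ket{10}\mapsto\ket{00}$ and fixes all remaining basis states, I intend to treat each occurrence of $P_3$ (or of $P_3^{-1}=P_3^2$) in the circuit as a black box, rather than re-expanding it into its four-gate implementation. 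This will keep the bookkeeping manageable and focus attention on the essential combinatorics.

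First I would identify the constituents of the proposed circuit: occurrences of $P_3$, of $P_3^{-1}$, and of $\ket{0}$-controlled $X$ (or $X^{\dagger}$) gates, noting on which wire each one acts and in what order. The key conceptual step is to observe that $P_3$ converts the condition ``control is $\ket{1}$ and target is $\ket{0}$'' into ``joint state is $\ket{0,1}$,'' which is detectable by gates whose nontrivial action is triggered by a $\ket{0}$ on a single wire. Thus conjugating a suitable $\ket{0}$-controlled action by $P_3$ and its inverse should produce the required swap of $\ket{1,0}$ with $\ket{1,1}$, while the usual $\ket{0}$-controlled $X$-type construction handles the $\ket{0,0}\leftrightarrow\ket{0,1}$ swap.

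The verification itself would proceed by case analysis on the control value $x$. For $x=0$, I would track the successive images of $\ket{0,0}$, $\ket{0,1}$, and $\ket{0,y}$ with $y\geq 2$ through the circuit, showing that the first two are interchanged and the rest fixed. For $x=1$, I would carry out the analogous check, here crucially using that $P_3$ reaches into $\ket{1,0}$ while leaving $\ket{1,y}$ with $y\geq 2$ untouched. For $x\geq 2$, every gate in the circuit either triggers on a specific control value that is not met, or has no support on $\ket{x,\cdot}$, so the circuit acts as identity on these states by inspection.

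The main obstacle will be the bookkeeping in the $x\in\{0,1\}$ and $y\in\{0,1\}$ cases, since $P_3$ has nontrivial support on only three of the $d^{2}$ two-qudit basis states, and its interaction with a subsequent gate depends on whether the current intermediate state lies in that three-element support. Because no such support accidentally overlaps the identity branches, each individual case reduces to a short deterministic computation once the supports have been catalogued, and no ideas beyond those used in the $P_3$ lemma are needed.
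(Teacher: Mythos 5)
The paper gives no explicit proof of this lemma: it falls under the blanket remark at the start of the appendix that such constructions ``can then be easily checked for any fixed odd prime dimension,'' the point being that the gates involved only affect the $\{\ket{0},\ket{1}\}$ subspace. Your plan --- a direct computational-basis case analysis on $\ket{x,y}$, treating $P_3$ as a black-box permutation with support $\{\ket{00},\ket{01},\ket{10}\}$ --- is exactly that check, so in approach you coincide with (and are more explicit than) the paper.

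One concrete caveat. Your inventory of the circuit's constituents ($P_3$, $P_3^{-1}$, and $\ket{0}$-controlled $X$ or $X^{\dagger}$ gates) omits a gate type: the paper notes immediately after this lemma that the construction contains a $\Lambda(X)$ gate, i.e.\ the CX gate $\ket{x,y}\mapsto\ket{x,x+_d y}$. Unlike the $\ket{0}$-controlled gates, $\Lambda(X)$ acts nontrivially for \emph{every} control value, so your claim that for $x\geq 2$ ``every gate either triggers on a specific control value that is not met, or has no support on $\ket{x,\cdot}$, so the circuit acts as identity by inspection'' does not go through as stated; in that branch you must additionally track how the $\Lambda(X)$ contributions cancel against their inverses through the conjugation structure of the circuit. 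This is a bookkeeping repair rather than a missing idea, but as written that case of your argument would fail.
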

Note that here we have a $\Lambda$-controlled gate which for $\Lambda(U)$ implements $\ket{x,y} \mapsto U^x\ket{y}$. In this particular case, $\Lambda(X)\ket{x,y} = \ket{x,x+y}$ is the CX gate, and is Clifford.
\begin{corollary}
    The $X_{01}$ gate can be synthesized, by setting the control qudit as an ancilla in the $\ket{0}$ state.
\end{corollary}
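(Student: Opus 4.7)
The plan is simply to specialize the construction of the preceding lemma by fixing its control input to an ancilla in state $\ket{0}$. The lemma furnishes a two-qudit gate that maps $\ket{c}\otimes\ket{t}$ to $\ket{c}\otimes X_{01}\ket{t}$ whenever $c\in\{0,1\}$, and preserves $\ket{c}$ on the control wire in every branch. Feeding in $\ket{0}$ on the control therefore produces, on any target basis state $\ket{t}$, the output $\ket{0}\otimes X_{01}\ket{t}$, and by linearity this extends to arbitrary target states $\ket{\psi}$: the overall action is $\ket{0}\otimes\ket{\psi}\mapsto \ket{0}\otimes X_{01}\ket{\psi}$.

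Because the control register is preserved identically by the gate, the ancilla remains unentangled from the target throughout the circuit and is returned to $\ket{0}$ at the output, so it can be cleanly discarded or reused. The resulting single-qudit map on the target wire is exactly $X_{01}$, as desired. The construction inherits the gate set of the preceding lemma (namely $\ket{0}$-controlled $X$ gates, together with their inverses obtained by $d{-}1$-fold repetition) plus one $\ket{0}$ ancilla, and requires no additional resources.

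There is no substantive obstacle: the only point worth checking is the ancilla discipline, which follows immediately from the control-preservation property built into the lemma's gate. Hence the corollary reduces to a one-line specialization of the previous result.
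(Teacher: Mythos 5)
Your argument is correct and matches the paper's (implicit) reasoning: the paper leaves this corollary without explicit proof precisely because it is the one-line specialization you describe, namely feeding a $\ket{0}$ ancilla into the control of the $\ket{0}$- and $\ket{1}$-controlled $X_{01}$ gate from the preceding lemma, with the control preserved so the ancilla stays disentangled and returns in $\ket{0}$.
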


We can then build the $\ket{0}$-controlled $X_{01}$ gate.
\begin{lemma}
    \begin{equation}
        \scalebox{1.7}{\tikzfig{zhtozx/ZCX01}}
    \end{equation}
\end{lemma}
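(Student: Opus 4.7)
The plan is to verify the displayed circuit by direct case analysis on the computational basis. Specifically, for each input $\ket{c,t}$ with $c,t \in \mathbb{Z}_d$, I would trace the state through the gates of the circuit and check that the output is $\ket{c,t+1}$ when $(c,t) = (0,0)$, is $\ket{c,t-1}$ when $(c,t) = (0,1)$, and is $\ket{c,t}$ in all other cases. This mirrors the style of the proof of the $P_3$ lemma above, which also reduced to a case distinction on whether the relevant coordinates lie in $\{0,1\}$ or not.

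The conceptual heart of the construction is that the $\ket{0}$-controlled $X_{01}$ gate agrees with the already-constructed $\ket{0}$- or $\ket{1}$-controlled $X_{01}$ gate except when $c = 1$, where the former does nothing and the latter applies $X_{01}$. To cancel this unwanted action at $c=1$, I would use an ancilla prepared in a value lying outside the $\{\ket 0,\ket 1\}$ subspace (e.g.\ $\ket{d-1}$, produced from $\ket 0$ by $d-1$ applications of the $X$ gate from the previous corollary). Using the basic $\ket 0$-controlled $X$ gate from $c$ to the ancilla and a single $X_{01}$ on the ancilla, one obtains ancilla value $\ket{1}$ precisely when $c = 0$ and ancilla value outside $\{0,1\}$ whenever $c \neq 0$. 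Applying the $\ket 0$- or $\ket 1$-controlled $X_{01}$ gate with the ancilla as control then triggers $X_{01}$ on the target exactly when $c = 0$, after which the ancilla preparation is reversed to reset it to $\ket{d-1}$ and the ancilla is discarded.

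Verification then splits into three cases on $c$: (i) $c = 0$, where the ancilla becomes $\ket 1$ and the $\{0,1\}$-controlled $X_{01}$ fires, swapping $\ket 0$ and $\ket 1$ on the target and leaving $t \geq 2$ untouched; (ii) $c = 1$, where the ancilla remains outside $\{0,1\}$, so the $\{0,1\}$-controlled gate acts trivially and the target is unchanged; (iii) $c \geq 2$, where again the ancilla is untouched and the target is unchanged. In each case one also verifies that the ancilla returns to $\ket{d-1}$, which holds because the ancilla manipulations depend only on $c$ (not on $t$) and are invertible.

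The main obstacle is the uncomputation step: one must ensure that the sequence resetting the ancilla works uniformly for every residue class $c \bmod d$, including $c \geq 2$ where the flow differs from the $c \in \{0,1\}$ cases. This is handled by noting that the ancilla-side operations are independent of the target, so the inverse of the preparation exactly undoes it regardless of $c$, while the only $c$-dependent feedback into the target occurs through the intermediate $\{0,1\}$-controlled $X_{01}$ gate, whose action was already analysed in the case split.
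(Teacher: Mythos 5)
Your argument is sound, and your verification strategy is exactly the one the paper intends: no explicit proof is given for this lemma, because it falls under the blanket remark at the top of Appendix~\ref{sec:revappendix} that such constructions are easily checked for any fixed odd prime dimension by observing that they only affect the $\{\ket{0},\ket{1}\}$ subspace --- which is precisely your case split on $c$ and $t$. The genuine difference is that, working blind, you have supplied your own realisation of the gate rather than verifying the displayed circuit: an ancilla prepared in $\ket{d-1}$, marked to $\ket{1}$ exactly when $c=0$ via a $\ket{0}$-controlled $X$ followed by $X_{01}$, used as the control of the previously constructed $\ket{0}$-or-$\ket{1}$-controlled $X_{01}$, and then uncomputed. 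This decomposition is correct: the marking depends only on $c$, which is never modified, so the uncomputation is exact and the ancilla disentangles; the middle gate fires iff the ancilla is in $\{\ket 0,\ket 1\}$, i.e.\ iff $c=0$; and $X_{01}$ fixes $\ket{d-1}$ precisely because $d\ge 3$ in the odd-dimensional setting of this appendix. Whether or not it coincides with the circuit in the figure, it is built from the same previously established primitives with constant gate count, so it serves the same role in the proof of Theorem~\ref{thm:daryrev}; the only cost relative to a direct two-qudit circuit is one extra ancilla, which is harmless given the $O(n)$ ancilla budget already assumed there.
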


We then utilise the following generalisation to all odd qudit dimensions $d$, for which the qubit analogue is in~\cite[Lemma 7.5]{BarencoA1995elementarygates} and the qutrit analogue is in~\cite[Lemma 5]{yeh2022}:
\begin{lemma}\label{lem:qudit-zzcx01}
    \begin{equation}
        \scalebox{1.7}{\tikzfig{zhtozx/ZZCX01}}
    \end{equation}
    where $\Lambda(X_{01})$ can be further decomposed into controlling $X_{01}$ on all odd computational basis states.
\end{lemma}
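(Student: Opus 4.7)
The plan is to verify the identity by direct computation of the action on arbitrary computational basis inputs $\ket{c_1, c_2, t}$ of the two controls and the target. The overall right-hand side should act as $X_{01}$ on $t$ precisely when $c_1 = c_2 = 0$, and as the identity on $t$ for all other $(c_1, c_2)$. Since the target subspace touched by $X_{01}$ is just $\{0,1\}$, the value of $t$ plays only a passive role in the verification: everything is controlled by the arithmetic on $(c_1, c_2)$.

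I would begin by identifying the Barenco-style structure the decomposition must use, namely a ``square root'' operation $V$ with $V^2 = X_{01}$ on the $\{0,1\}$-subspace, placed between two coupling gates that link the two controls. The case analysis splits on whether each $c_i$ equals $0$. When both controls are nonzero, none of the $\ket{0}$-controlled operations activate and the paired coupling gates cancel, leaving the identity. In the mixed cases $c_1 = 0$, $c_2 \neq 0$ and vice versa, one of the $V$-activations is separated from the other by a coupling gate that toggles the relevant signal, producing an uncancelled $V$ and a later $V^\dagger$ on the target whose composition is the identity. In the case $c_1 = c_2 = 0$ the two activations sit on the same side of the coupling toggles, so they compose to $V \cdot V = X_{01}$ on the target, as required.

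For the auxiliary statement that $\Lambda(X_{01})$ decomposes into a product of $\ket{k}$-controlled $X_{01}$ gates over odd $k$, I would use the definition $\Lambda(X_{01})\ket{c,t} = X_{01}^c \ket{t}$ together with the fact that $X_{01}$ is an involution on $\{0,1\}$ and the identity elsewhere. Hence $X_{01}^c = X_{01}$ for odd $c \in \{1, 3, \ldots, d-2\}$ and $X_{01}^c = I$ for even $c$, so $\Lambda(X_{01})$ factors as the product of $\ket{k}$-controlled $X_{01}$ gates over odd $k$; each such $\ket{k}$-controlled gate is obtained from a $\ket{0}$-controlled $X_{01}$ by conjugating the control with the single-qudit permutation sending $\ket{0} \leftrightarrow \ket{k}$.

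The main obstacle is the qudit-specific bookkeeping of the coupling gates between the two controls. In the qubit Barenco decomposition the coupling is just a CNOT, whose self-inverse $\mathbb{Z}_2$-action makes the cancellation pattern transparent. In the qudit case, since controls range over all of $\mathbb{Z}_d$ but only the singleton value $0$ is ``on'', one must verify that the chosen couplings propagate the activation signal in exactly the way required: that is, they must permute $(c_1, c_2)$ so that the interior $V^\dagger$ fires iff exactly one of $c_1, c_2$ is zero, and at the end restore both controls to their original computational basis values. Once this choice is pinned down, the remainder of the verification is a mechanical check over the four cases distinguished by $(c_1 \stackrel{?}{=} 0, c_2 \stackrel{?}{=} 0)$.
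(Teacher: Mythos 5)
You are right that the paper offers no written proof of this lemma at all --- the appendix's blanket remark is that such identities ``can be easily checked for any fixed odd prime dimension'' by noting that the gates involved act nontrivially only on the $\{\ket{0},\ket{1}\}$ subspace --- so your general method of direct verification on basis states $\ket{c_1,c_2,t}$ with a case split on $(c_1 \stackrel{?}{=} 0,\ c_2 \stackrel{?}{=} 0)$ is exactly the intended one. Your handling of the auxiliary claim is also correct and matches the paper's surrounding corollaries: $\Lambda(X_{01})\ket{c,t} = \ket{c}\otimes X_{01}^c\ket{t}$, and since $X_{01}$ is an involution this depends only on the parity of $c$, giving the factorisation into $\ket{k}$-controlled $X_{01}$ gates over odd $k$.

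The gap is in the structure you assume for the main identity. You build your verification around a Barenco-style gate $V$ with $V^2 = X_{01}$ on the $\{0,1\}$ subspace and run the usual $VV = U$ versus $VV^\dagger = I$ cancellation argument. No such $V$ can appear in this decomposition: $X_{01}$ is a transposition, hence an odd permutation of $\mathbb{Z}_d$, and an odd permutation has no square root in the symmetric group --- yet every gate available at this stage of the construction, and every gate named in the lemma (the $\ket{0}$- and $\ket{1}$-controlled $X_{01}$, $\Lambda(X_{01})$, and the coupling gates), is a classical permutation gate. The mechanism that actually makes the identity work is the same involution observation you used for the auxiliary claim: one only needs the \emph{total number} of $X_{01}$ applications to the target to be odd precisely when $c_1 = c_2 = 0$ and even in every other case. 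The $\Lambda(X_{01})$ contributes a number of applications equal to its (possibly coupled) control value, the single-value-controlled $X_{01}$ gates contribute indicator terms, and the coupling gates rearrange the control arithmetic in $\mathbb{Z}_d$ so that the parity of the sum comes out right --- this is where the oddness of $d$ enters, since reduction modulo an odd $d$ flips parity exactly when a carry occurs. So your four-way case analysis should be recast as a parity count of activations; as written, the two mixed cases (``an uncancelled $V$ and a later $V^\dagger$'') verify a circuit that is not the one in the lemma.
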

\begin{corollary}
    Any of these controls can be changed to any other computational basis by conjugating by X's, or to $\Lambda$ controls by repeating the construction once for each odd computational basis state control.
\end{corollary}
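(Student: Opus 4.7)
The plan is to verify the corollary in two parts, matching its two clauses. First, to change the triggering value of any single control qudit from $\ket{c}$ to $\ket{c'}$ with $c, c' \in \mathbb{Z}_d$, I will use the fact that the qudit Pauli $X$ acts as $X\ket{a} = \ket{a +_d 1}$, so $X^{c-c'}\ket{c'} = \ket{c}$. If $G$ realises the $\ket{c}$-controlled $U$, then the sandwich $(X^{c-c'} \otimes I)\, G\, (X^{c'-c} \otimes I)$ applied to the control and target qudits first maps $\ket{c'}$ to $\ket{c}$ (triggering $G$), then reverts; on $\ket{a, t}$ with $a \neq c'$, the sandwich leaves $a$ unchanged and $G$ does not fire. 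This yields a $\ket{c'}$-controlled $U$. Since each control qudit can be conjugated independently, the argument applies componentwise to multi-controlled gates such as the $\ket{00}$-controlled $X_{01}$ from Lemma~\ref{lem:qudit-zzcx01}, yielding a $\ket{c_1, c_2}$-controlled $X_{01}$ for any $(c_1,c_2) \in \mathbb{Z}_d^2$.

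For the second clause, I will exploit the structure of $X_{01}$. Since $X_{01}$ is identity outside the $\{\ket{0},\ket{1}\}$ subspace and is an involution on it, one has $X_{01}^c = X_{01}$ for $c$ odd and $X_{01}^c = I$ for $c$ even. By the definition $\Lambda(X_{01})\ket{x,t} = X_{01}^x\ket{t}$, the gate $\Lambda(X_{01})$ therefore agrees with the composition of $\ket{c}$-controlled $X_{01}$ gates, ranging over odd $c \in \{1, 3, \dots, d-2\}$. I will then note that these individual controlled gates pairwise commute: two gates with distinct control values $c_1 \neq c_2$ fire on disjoint sectors of the control Hilbert space, so their composition is independent of order and acts as identity on all non-triggering inputs. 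The same reasoning applies to promoting any one control of a multi-controlled gate to $\Lambda$ while leaving the other controls fixed; iterating over further controls promotes them too.

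Combining the two parts yields the corollary: starting from Lemma~\ref{lem:qudit-zzcx01}, conjugate each desired control qudit by the appropriate power of $X$ to relocate its triggering basis state, and for any control one wishes to promote to $\Lambda$, instantiate the (already relocated) construction once per odd value in $\mathbb{Z}_d$ and compose the resulting gates. The main thing to be careful about is pairwise commutativity of the summands in the $\Lambda$ expansion and tracking on which side of the circuit each $X^k$ conjugation sits; no new gate constructions beyond those of Lemma~\ref{lem:qudit-zzcx01} are required.
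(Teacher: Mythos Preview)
Your argument is correct and matches the paper's (unstated) reasoning: conjugation by powers of $X$ relocates control values, and because $X_{01}$ is an involution, a $\Lambda$-control is exactly the product of $\ket{c}$-controls over the odd residues, which commute since they fire on disjoint control sectors.

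One small slip to fix: in your sandwich $(X^{c-c'}\otimes I)\,G\,(X^{c'-c}\otimes I)$ the exponents are swapped relative to your own narration. With the usual convention that the rightmost factor acts first, the inner map is $X^{c'-c}$, which sends $\ket{c'}$ to $\ket{2c'-c}$, not to $\ket{c}$. The sandwich you want is $(X^{c'-c}\otimes I)\,G\,(X^{c-c'}\otimes I)$, so that the inner $X^{c-c'}$ sends $\ket{c'}\mapsto\ket{c}$ and the outer $X^{c'-c}$ undoes it. This is purely a sign typo; the conceptual content is fine.
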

\begin{corollary}
    As we will shortly discuss in Proposition~\ref{prop:anytwocycle}, any permutation can be generated by 2-cycles i.e. permutations exchanging only two elements. For example, the $\ket{00}$-controlled $X$ gate can be obtained by $\ket{00}$-controlling each gate in the decomposition:
    \begin{equation}
        \scalebox{1.7}{\tikzfig{zhtozx/x-out-of-x01s}}
    \end{equation}
\end{corollary}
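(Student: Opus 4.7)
The plan is to reduce the claim to two elementary observations, so that the depicted circuit identity becomes essentially automatic once Proposition~\ref{prop:anytwocycle} is granted.

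First, I would recall that the single-qudit $X$ gate acts on the computational basis as the $d$-cycle $(0,1,\ldots,d{-}1)$. By the forward-referenced Proposition~\ref{prop:anytwocycle}, this cycle factors as a product of transpositions; concretely, one may write $X = \tau_1\tau_2\cdots\tau_{d-1}$ where each $\tau_i$ is an $X_{ab}$ exchanging only two basis states. Each such $\tau_i$ is conjugate to $X_{01}$ by a suitable power of $X$ (as noted in the corollary immediately preceding this one), so every $\tau_i$ is already available in our gate set.

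Second, I would invoke the elementary observation that for any computational basis state $\ket{v}$ and any unitaries $U_1,\ldots,U_k$, the $\ket{v}$-controlled version of $U_1 U_2 \cdots U_k$ equals the sequential composition of the $\ket{v}$-controlled versions of the individual $U_i$. This follows because on the subspace where the control register equals $\ket{v}$, the composite acts as $U_1 U_2 \cdots U_k = U$, and on the orthogonal subspace every factor acts as the identity, so the composite also acts as the identity. The same statement holds for any multi-qudit control state such as $\ket{00}$.

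Combining the two: applying the distributivity observation with $\ket{v} = \ket{00}$ to the factorisation $X = \tau_1 \tau_2 \cdots \tau_{d-1}$ yields precisely the pictured circuit, namely a sequence of $\ket{00}$-controlled $X_{ab}$ gates, each of which is built from the $\ket{00}$-controlled $X_{01}$ of Lemma~\ref{lem:qudit-zzcx01} by conjugating the target with $X$'s. There is no real obstacle here: the only content of the corollary beyond Proposition~\ref{prop:anytwocycle} is the distributivity of the $\ket{v}$-controlling construction over composition, and the worked example simply instantiates this to the cyclic permutation $X$.
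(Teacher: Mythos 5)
Your proposal is correct and matches the paper's (essentially implicit) argument: the paper likewise justifies the identity by decomposing the $d$-cycle $X$ into transpositions of the form $X_{ab}$ (each Clifford-conjugate to $X_{01}$) and then controlling each factor, relying on the standard fact that the $\ket{00}$-controlled version of a product is the product of the $\ket{00}$-controlled factors. No gaps.
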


In previous work~\cite{yeh2022}, a construction was found that has polynomial Clifford+$T$ gate count to decompose any tritstring controlled qutrit Toffoli. It was left open whether there was a better construction with linear gate count.  Specifically, whether it was possible to generalise Gidney's construction (reprinted from~\cite{GidneyC2015largecnots}):
\begin{equation}
    \includegraphics[width=5in]{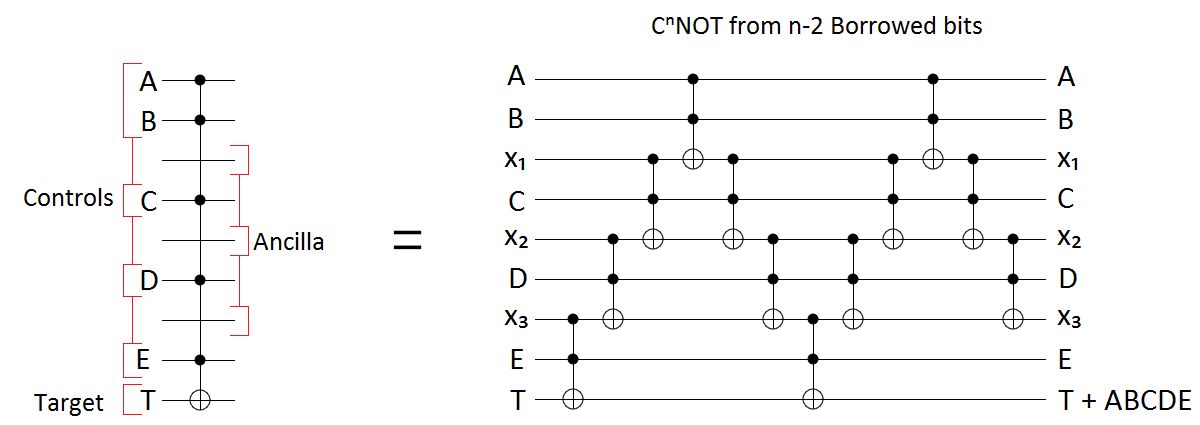}
\end{equation}
If such a construction did not exist, it would be hard to justify ever using qudit Toffolis as opposed to qubit Toffolis, as they would be asymptotically more expensive. However, it turns out that there is a qudit version with an analogous structure.

\begin{lemma}\label{lem:qudit-linear-trick-gen}
    Any odd-dimensional qudit gate controlled on $n$ qudits, admits a decomposition with $n-2$ borrowed ancillae qudits, with $O(n)$ gate count.
\end{lemma}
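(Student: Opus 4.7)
The plan is to mirror Gidney's qubit sawtooth construction, using the qudit doubly-controlled $X_{01}$ gate from Lemma~\ref{lem:qudit-zzcx01} (and its conjugates to other control values) as the qudit analog of the Toffoli. Because $X_{01}$ is an involution on the $\{\ket 0,\ket 1\}$ subspace and acts trivially elsewhere, the doubly-controlled $X_{01}$ is correspondingly involutive, and the XOR-cancellation identities that underlie Gidney's construction carry over to the qudit setting almost verbatim.

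First, I would reduce to constructing an $n$-controlled $X_{01}$. Given an arbitrary single-qudit target $U$, by Proposition~\ref{prop:anytwocycle} and the constructions preceding it, $U$ is a product of a constant number of two-cycles, each conjugate to $X_{01}$ via $O(1)$ single-qudit gates. Thus an $n$-controlled $U$ is a constant number of $n$-controlled $X_{01}$'s on the same controls, up to $O(1)$ single-qudit gates on the target, so it suffices to construct an $n$-controlled $X_{01}$ with $n-2$ borrowed ancillae using $O(n)$ gates.

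For this, I would adapt Gidney's sawtooth pattern directly to qudits. Arrange the $n$ controls interleaved with the $n-2$ borrowed ancillae, and build a cascade of doubly-controlled $X_{01}$ gates in which the $k$th gate uses the $k$th control and the $(k-1)$st ancilla as its two controls and acts on the $k$th ancilla. After this forward pass, the last ancilla carries, on the $\{\ket 0,\ket 1\}$ subspace, the AND of the controls XORed into its original value; a single doubly-controlled $X_{01}$ then acts on the genuine target using this ancilla and the last remaining control, and running the cascade in reverse restores the ancillae. Exactly as in the qubit case, one forward-reverse pass leaves the target acted on with a control that also depends on the initial ancilla values, so the entire construction is performed twice, with the target action between the passes, in Gidney's precise pattern; involutivity of $X_{01}$ then cancels the ancilla-dependent contributions while the true AND of the controls accumulates to a single application on the target.

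The main obstacle is verifying this cancellation rigorously with arbitrary borrowed qudit ancillae, which unlike qubits can live in any state in $\mathbb{C}^d$. The key observation that keeps the proof from exploding is that every doubly-controlled $X_{01}$ fires only when its two controls both sit in their activating computational basis state, and when it fires it permutes only $\{\ket 0,\ket 1\}$ on its target. Hence on any computational-basis input the nontrivial portion of the analysis collapses to the qubit Gidney identity applied only to the $\{\ket 0,\ket 1\}$ subspace of the ancillae, while every basis state with some ancilla or control outside $\{0,1\}$ is handled trivially by inspection. Counting, the doubled sawtooth uses $4(n-2)+O(1)$ doubly-controlled $X_{01}$ gates, each of which compiles to $O(1)$ $\ket{0}$-controlled $X$ gates by Lemma~\ref{lem:qudit-zzcx01}, giving the claimed $O(n)$ gate count.
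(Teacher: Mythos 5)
Your overall strategy is the same as the paper's: generalise Gidney's doubled sawtooth using the doubly-controlled $X_{01}$ of Lemma~\ref{lem:qudit-zzcx01}, and your gate count of $4(n-2)+O(1)$ such gates matches. However, the step you describe as "handled trivially by inspection" is exactly where the argument breaks. A \emph{borrowed} ancilla may sit in any basis state $\ket{\beta}$ with $\beta\notin\{0,1\}$. Since $X_{01}$ fixes such a state, the sawtooth gate that is supposed to record the partial AND of the controls into that ancilla acts as the identity on it, so the ancilla never enters the activating value of the next gate's control; the chain of firings is severed and the target is \emph{not} acted on even when all $n$ genuine controls are active. So the bad case is not "nothing fires, hence identity, hence fine" --- nothing firing is precisely the wrong behaviour there. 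This is why the paper's construction (and the statement of Lemma~\ref{lem:qudit-zzcx01} and its corollary) goes out of its way to introduce $\Lambda$-type controls, $\Lambda(U)\colon\ket{x,y}\mapsto\ket{x}\otimes U^x\ket{y}$: with a power control on each borrowed wire and an uncomputing (inverse) second pass, the target receives $U^{\beta+P}U^{-\beta}=U^{P}$ for \emph{every} $\beta\in\mathbb{Z}_d$, where $P\in\{0,1\}$ is the AND indicator, rather than relying on an involution identity that only holds on the $\{\ket{0},\ket{1}\}$ subspace.

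There is a second, smaller gap in your opening reduction: an arbitrary single-qudit gate $U$ is not a product of finitely many two-cycles --- only permutations of the computational basis are --- so reducing $n$-controlled $U$ to $n$-controlled $X_{01}$'s via Proposition~\ref{prop:anytwocycle} proves the lemma only for classical reversible targets (which is admittedly all that Theorem~\ref{thm:daryrev} needs). The paper instead keeps $U$ as a black box at the tip of the sawtooth, so the decomposition ends with a single copy of $U$ controlled on at most two qudits, and the $O(n)$ count covers arbitrary unitary targets as the lemma claims.
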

\begin{proof}
    Any gate $U$ can be controlled on $\ket{0}^{\otimes n}$ with $n-2$ borrowed ancillae:
    \begin{equation}
        \tikzfig{figures/zhtozx/qudit-linear-trick-gen}
    \end{equation}
    This approach is adaptable to arbitrary $n$.  The control qudits can be conjugated by X gates to generalize this from $\ket{0...0}$-controlled, to control on any ditstring.  This also applies if any of the control qudits are other types of controls, for instance $\Lambda$-controlled.

    As an added note, if it is preferred to construct $\Lambda(U)$ instead of $\Lambda(\ket{0}\text{-controlled }U)$, the bottommost control qudit in the above decomposition can be omitted at the cost of adding one borrowed ancilla.
\end{proof}

This lets us immediately apply the following proposition reprinted from Ref.~\cite{yeh2022} for qutrits, which holds for arbitrary qudit dimension.
\begin{proposition}\label{prop:anytwocycle}
    Let $\vec{a} = (a_1,...,a_n)$ and $\vec{b} = (b_1,...,b_n)$ be any two ditstrings of length $n$.
    Then we can exactly implement a unitary which maps the basis states $\ket{\vec{a}} \mapsto \ket{\vec{b}}$ and $\ket{\vec{b}} \mapsto \ket{\vec{a}}$, and is identity on all other computational basis states, with gate count asymptotically the same as the Toffoli controlled on a ditstring of length $n$, which from Lemma~\ref{lem:qudit-linear-trick-gen} is $O(n)$ for any odd qudit dimension.
\end{proposition}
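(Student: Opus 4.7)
The plan is to reduce an arbitrary two-cycle on $n$-qudit basis states to a single multiply-controlled permutation gate whose cost is governed by Lemma~\ref{lem:qudit-linear-trick-gen}, sandwiched between two layers of constant-cost two-qudit ``dressing'' gates, one layer per coordinate where $\vec a$ and $\vec b$ disagree.

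Concretely, I would assume $\vec a \neq \vec b$ (else the map is identity) and set $D = \{i : a_i \neq b_i\}$, then fix one distinguished coordinate $i^\star \in D$ that will serve as the target of the central multi-controlled gate. For each remaining $j \in D \setminus \{i^\star\}$, I would apply a two-qudit gate $G_j$ that acts as the shift $X^{a_j - b_j}$ on coordinate $j$ conditioned on coordinate $i^\star$ being in the state $\ket{b_{i^\star}}$. Each $G_j$ is a $\ket{c}$-controlled power of $X$ and can be built from $O(1)$ of the primitive gates already constructed above (the $\ket{0}$-controlled $X$ together with single-qudit $X$ conjugations on the control and iteration for the power). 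Because $a_{i^\star} \neq b_{i^\star}$, the whole cascade of $G_j$'s leaves $\ket{\vec a}$ untouched while sending $\ket{\vec b}$ to $\ket{\vec b\,'}$ where $\vec b\,'$ agrees with $\vec a$ in every coordinate except $i^\star$.

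At this point $\ket{\vec a}$ and $\ket{\vec b\,'}$ differ only in the single coordinate $i^\star$, so a single $\Lambda$-controlled $X_{a_{i^\star} b_{i^\star}}$ on coordinate $i^\star$, with the remaining $n - 1$ qudits controlled on the common values of $\vec a$ and $\vec b\,'$, swaps the two states. By Lemma~\ref{lem:qudit-linear-trick-gen} this central gate costs $O(n)$ primitive gates (using $n-2$ borrowed ancillae). Finally I would apply the $G_j^{-1}$ in reverse order to undo the dressing. The three stages cost $O(|D|)$, $O(n)$, and $O(|D|)$ respectively, summing to $O(n)$ as required.

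The remaining work is to verify that the composition acts as the identity on every basis state $\ket{\vec c}$ with $\vec c \notin \{\vec a, \vec b\}$, and this I would dispatch by a short case distinction on $c_{i^\star}$. If $c_{i^\star} \neq b_{i^\star}$ the dressing does nothing, and the multi-controlled $X_{a_{i^\star} b_{i^\star}}$ either fails to fire or acts on a value not in $\{a_{i^\star}, b_{i^\star}\}$ and so is trivial; if $c_{i^\star} = b_{i^\star}$ the dressing shifts the coordinates in $D \setminus \{i^\star\}$ by $\vec a - \vec b$, and for the central gate to fire one would need $\vec c = \vec b$, excluded by assumption. Hence the outer $G_j^{-1}$'s return the state intact. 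I expect no serious obstacle: the one place that needs care is ensuring the dressing is conditioned on $i^\star$ rather than on any coordinate in $D \setminus \{i^\star\}$, since this asymmetry is precisely what makes $\vec a$ invariant under the dressing and $\vec b$ collapse toward $\vec a$, so the whole argument pivots on the choice of $i^\star$ being made once at the start and reused in every $G_j$.
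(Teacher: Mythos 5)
Your proposal is correct and follows essentially the same route as the paper's proof: conjugate a single many-controlled $X_{a_{i^\star},b_{i^\star}}$ acting on one distinguished differing coordinate (costed at $O(n)$ via Lemma~\ref{lem:qudit-linear-trick-gen}) by a layer of $O(1)$-cost two-qudit gates, each conditioned on that coordinate holding $b_{i^\star}$, which collapse $\vec{b}$ onto a string agreeing with $\vec{a}$ everywhere else, and then run the same case analysis for inputs outside $\{\vec{a},\vec{b}\}$. The only cosmetic difference is that the paper dresses with the involutions $X_{a_j,b_j}$ (so the outer layers coincide) whereas you use the shifts $X^{a_j-b_j}$ and their explicit inverses.
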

\begin{proof}
    We assume $\vec{a} \neq \vec{b}$, or the permutation 2-cycle $(\vec{a}, \vec{b})$ would just be the identity operation on all inputs.
    As $\vec{a}$ and $\vec{b}$ differ, they must differ by at least one character.  Without loss of generality suppose that $a_n \neq b_n$.
    Consider the following circuit:
    \begin{equation}\label{eq:permutetwoditstrings}
        \tikzfig{figures/zhtozx/permute-two-ditstrings}
    \end{equation}
    Here the circles denote controls on the value of an $a_j$ or $b_j$, which control whether a $X_{a_j,b_j}$ operation is applied (which we take to be the identity if $a_j = b_j$). 
    Hence, the gate in Step~2 is a many-controlled $X_{a_n,b_n}$ gate, which for odd $d$ we know how to build by Lemma~\ref{lem:qudit-linear-trick-gen} using $O(n)$ gates.
    We conjugate this gate, in Steps~1~and~3, by $n-1$ gates that are each Clifford equivalent to the $\ket{0}$-controlled $X_{01}$ gate. Hence for odd $d$, the above circuit requires $O(n)$ gates to implement. 
    
    This circuit indeed implements the $(\vec{a},\vec{b})$ 2-cycle, which we can see by enumerating the possible input cases.
    \begin{itemize}
        \item When the input is $\vec{a}$: Only steps~2~and~3 fire (as $b_n\neq a_n$), outputting $\vec{b}$.
        \item When the input is $\vec{b}$: Steps~1~and~2 fire, outputting $\vec{a}$.
    \end{itemize}
    Observe that when Step~2 does not fire, Steps~1~and~3 always combine to the identity gate.
    Therefore, we only need to consider the remaining cases where Step~2 does fire.
    \begin{itemize}
        \item When both Steps~1~and~2 fired: The input had to have been $\vec{b}$.
        \item When Step~2 fired, but Step~1 didn't fire: Either the input was $\vec{a}$, or the last input character was neither $a_n$ nor $b_n$ in which case the overall operation is the identity.
    \end{itemize}
    Therefore, the circuit in Eq.~\eqref{eq:permutetwoditstrings} maps $\vec{a}$ to $\vec{b}$, $\vec{b}$ to $\vec{a}$, and is identity on all other ditstrings.
\end{proof}

As any permutation can be written as product of disjoint 2-cycles (this is well known, and explained in Ref.~\cite{yeh2022}), we can thus break down any $d$-ary classical reversible circuit on $n$ dits as a permutation of length at most $d^n$, which can always be broken down into a product of $d^n - 1$ 2-cycles.
Thus, we can apply the following qudit equivalent of the qutrit result in Ref.~\cite{yeh2022}:
\begin{theorem}[Restatement of Theorem~\ref{thm:daryrev}]
    For any odd qudit dimension $d$, any $d$-ary classical reversible function $f:\mathbb{Z}_d^n\to \mathbb{Z}_d^n$ on $n$ dits can be constructed by a circuit of $O(d^n n)$ $\ket{0}$-controlled $X$ gates and $O(n)$ ancillae prepared in the $\ket{0}$ state.
\end{theorem}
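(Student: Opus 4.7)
The plan is to reduce the construction of $f$ to a product of transpositions on the computational basis, and then invoke Proposition~\ref{prop:anytwocycle} to realise each transposition with $O(n)$ $\ket{0}$-controlled $X$ gates.

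First, I would observe that $f \colon \mathbb{Z}_d^n \to \mathbb{Z}_d^n$ is a bijection on a set of cardinality $d^n$, hence an element of the symmetric group $S_{d^n}$. By the standard fact that any permutation on $N$ elements factors as a product of at most $N-1$ transpositions, we can write $f = \tau_1 \tau_2 \cdots \tau_k$ with $k \leq d^n - 1$, where each $\tau_j$ is a $2$-cycle $(\vec a_j, \vec b_j)$ on two distinct ditstrings in $\mathbb{Z}_d^n$.

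Second, I would apply Proposition~\ref{prop:anytwocycle} to each factor. That proposition constructs the transposition $(\vec a, \vec b)$ using a circuit consisting of a Step~2 multi-controlled $X_{a_n,b_n}$ sandwiched by Steps~1~and~3, each of which is a sequence of $n-1$ gates Clifford-equivalent to $\ket{0}$-controlled $X_{01}$. By Lemma~\ref{lem:qudit-linear-trick-gen}, the Step~2 gate is realisable in $O(n)$ $\ket{0}$-controlled $X$ gates using $n-2$ borrowed ancillae, and by the earlier lemmas building $X_{01}$-type gates from $\ket{0}$-controlled $X$, Steps~1~and~3 also cost $O(n)$ $\ket{0}$-controlled $X$ gates each. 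Thus each $\tau_j$ costs $O(n)$ gates. Summing over the at most $d^n - 1$ factors yields the claimed $O(n d^n)$ gate count.

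The one point that needs a careful word is the ancilla budget. The $n-2$ ancillae required by Lemma~\ref{lem:qudit-linear-trick-gen} are \emph{borrowed}: the construction returns them to their initial state, so the same $n-2$ ancillae can be reused across all $k$ transpositions without conflict. Together with the constant number of zeroed ancillae that the supporting lemmas (building $\ket{0}$-controlled $X_{01}$ and $X_{01}$ from $\ket{0}$-controlled $X$) require per invocation, which can also be reset and reused between transpositions, the overall ancilla count remains $O(n)$ as required. The only mildly delicate step is verifying that borrowed ancillae used inside one transposition do not need to be disjoint from the control or target lines of subsequent transpositions; this follows because the borrowed-ancilla guarantee is that those qudits are restored to their input state regardless of that state, so they may be drawn from any available qudits, including data qudits not being acted on in a given layer. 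With this observation the proof is complete.
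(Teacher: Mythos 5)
Your proof is correct and follows essentially the same route as the paper: view $f$ as a permutation of $d^n$ elements, factor it into at most $d^n-1$ transpositions, and implement each via Proposition~\ref{prop:anytwocycle} at cost $O(n)$, giving $O(d^n n)$ total. Your extra care about the reuse of borrowed and zeroed ancillae is a welcome elaboration of a point the paper leaves implicit, but it does not change the argument.
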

\begin{proof}
    We view $f$ as a permutation of size $d^n$. This permutation consists of cycles, each of which can be decomposed into 2-cycles. This full decomposition requires at most $d^n-1$ 2-cycles. Implementing each of these 2-cycles requires $O(n)$ gate count.  Therefore, the asymptotic gate count of the overall construction is $O(d^n n)$.
\end{proof}

This is within a log$(n)$ factor of the gate count necessary, by generalizing our proof from Ref.~\cite{yeh2022} to the qudit setting:
\begin{proposition}[Restatement of Proposition~\ref{prop:revatleast}]
    For any qudit dimension $d$, there exist $d$-ary classical reversible functions $f:{\mathbb{Z}_d}^n\to {\mathbb{Z}_d}^n$ that require at least $O(n d^n/\log n)$ single-qudit and two-qudit gates to construct, even when allowed $\Omega (n)$ ancillae.
\end{proposition}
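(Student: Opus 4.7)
The plan is to prove Proposition~\ref{prop:revatleast} by a Shannon-style counting argument, bounding the number of classical reversible functions implementable by a circuit of $k$ single-qudit and two-qudit gates, and comparing against the total number of such functions.

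First I would count the target space. The set of $d$-ary classical reversible functions $f:\mathbb{Z}_d^n\to\mathbb{Z}_d^n$ is exactly the symmetric group on $d^n$ letters, so there are $(d^n)!$ of them. By Stirling's approximation,
\[
\log_2\bigl((d^n)!\bigr) \;=\; d^n\log_2(d^n) - \Theta(d^n) \;=\; \Theta\bigl(n\, d^n \log d\bigr).
\]
This is the number of bits of information required to specify a generic classical reversible function, and therefore a lower bound on the descriptive complexity of any circuit realising such a function.

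Next I would upper bound how many distinct functions can be implemented by a circuit of $k$ gates on $n+a$ qudits, where $a=\Theta(n)$. Each gate is specified by choosing its position among the qudits (at most $\binom{n+a}{2}=O(n^2)$ choices) and its type from a gate set $\mathcal{G}$. If $\mathcal{G}$ is finite, the number of distinct circuits is at most $\bigl(|\mathcal{G}|\cdot(n+a)^2\bigr)^k$. For a general single- and two-qudit gate set, which lies in a continuous unitary manifold of constant dimension ($\leq d^4-1$), the appropriate move is an $\epsilon$-net / covering argument: since errors in single gates propagate at most linearly through the circuit, perturbing each gate by $\epsilon=O(1/k)$ preserves the classical output on computational basis states; replacing each gate by the nearest point of an $\epsilon$-net of the unitary group of size $(1/\epsilon)^{O(1)}=\mathrm{poly}(k)$ yields $\log|\mathcal{G}|=O(\log k)$. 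This discretisation loses no functions realised exactly.

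Combining, the number of classical reversible functions implementable by $k$-gate circuits is at most $2^{k\cdot O(\log k + \log n)}$. Setting this at least $(d^n)!$ gives
\[
k\bigl(\log k + \log n\bigr) \;=\; \Omega\bigl(n\,d^n\log d\bigr),
\]
and since $d$ is constant and we may assume $k = \mathrm{poly}(n)\cdot d^n$, we obtain $k = \Omega(n\,d^n/\log n)$, as required. A pigeonhole over the $(d^n)!$ reversible functions then produces a specific $f$ meeting this bound.

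The main obstacle is the continuous parametrisation of the allowed gate set: without discretisation, $|\mathcal{G}|$ is uncountable and the naive product bound is useless. The heart of the argument is therefore the $\epsilon$-net / error-propagation step, which must be calibrated so that (i)~the required precision $\epsilon$ is only inverse-polynomial in $k$, keeping $\log|\mathcal{G}|=O(\log k)$, and (ii)~the discretised circuit still distinguishes all $d^n$ computational basis outputs of $f$. Once this is in place the remaining algebra is routine Stirling estimation, and the $\log n$ loss in the lower bound arises precisely from the $O(\log n)$ bits needed to encode each gate's position among $\Theta(n)$ qudits.
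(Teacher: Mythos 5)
Your overall strategy is the same as the paper's: count the $(d^n)!$ reversible functions against the number of distinct $k$-gate circuits and compare logarithms via Stirling. The paper, however, simply \emph{fixes a finite gate set} of single- and two-qudit gates, so that each gate costs only $O(\log n)$ bits to specify (all of it coming from positioning among the $O(n)$ wires), and the inequality $N(\log c + 2\log n)\geq \tfrac12 n d^n\log d$ immediately yields $N=\Omega(n d^n/\log n)$. Restricted to that setting, your argument is correct and essentially identical.

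The genuine gap is in your attempt to extend the count to a \emph{continuous} gate set via an $\epsilon$-net. The net step itself is fine (permutation unitaries are pairwise separated in operator norm, so a $k$-gate circuit over an $\epsilon$-net with $\epsilon=O(1/k)$ determines the permutation uniquely), but it makes each gate cost $\log|\mathcal{G}|=O(\log k)$ bits. Since the relevant regime is $k=\mathrm{poly}(n)\,d^n$, you have $\log k=\Theta(n\log d)=\Theta(n)$, which dominates the $\log n$ positioning term. Your own displayed inequality $k(\log k+\log n)=\Omega(n\,d^n\log d)$ then gives only $k=\Omega(d^n)$, \emph{not} $k=\Omega(n\,d^n/\log n)$; the final sentence "we obtain $k=\Omega(n\,d^n/\log n)$" does not follow, and you lose a factor of roughly $n/\log n$. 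To recover the stated bound you should either restrict to a finite gate set as the paper does (which is all the proposition needs, and is consistent with the rest of the paper, where the relevant gate set is $\{\ket{0}\text{-controlled }X\}$ plus ancillae), or explicitly weaken the conclusion in the continuous case. One further cosmetic point: the bound is a lower bound, so it should be stated with $\Omega$ rather than $O$ --- though the paper's own statement commits the same abuse of notation.
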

\begin{proof}
    Fix any finite gate set consisting of single-qudit and two-qudit gates, and suppose we have $O(n)$ ancillae. Then taking into account positioning, we have $O(n^2)$ possible single-gate circuits (the square comes from positioning the two-qudit gates). Let's suppose we can bound this by $cn^2$ for some constant $c$.
    Hence, using $N$ gates from this gate set, we can construct at most 
    $(cn^2)^N = c^N n^{2N}$
    different circuits. There are exactly $(d^n)!$ different $d$-ary classical reversible functions on ditstrings of length $n$ (where $k!$ denotes the factorial of $k$). In order to write down every such permutation we must hence have a number of gates $N$ such that at least $c^Nn^{2N} \geq (d^n)!$. Taking the logarithm on both sides and using $\log(k!) \geq \frac12 k\log k$ we can rewrite this inequality to $N\log c + 2N\log n \geq \frac12 d^n\cdot n\log d$. Factoring out $N$ gives $N\geq \frac{\log d}{2} \frac{n d^n}{\log c + 2\log n} \geq \frac{\log d}{6} \frac{n d^n}{\log n}$ for $n\geq c$, which shows that we must have $N=O(n d^n/\log n)$.
\end{proof}
\section{Building the qutrit metaplectic gate}\label{app:qutrituniversality}
This section mostly follows Ref.~\cite[Section 3]{Glaudell2022qutritmetaplectic} to construct the qutrit metaplectic gate $R$, a single-qutrit non-Clifford gate with matrix $\textsf{diag}(1,1,-1)$. However, the key difference is that here we restrict the allowed gate set to H+$\ket{0}$-controlled X instead of qutrit Clifford+$T$.

In all prime qudit dimensions, adding a single-qudit non-Clifford gate to gates generating the Clifford group achieves approximately universal quantum computation~\cite{GottesmanD1999ftqudit}. This was explicitly proven for qutrits in Ref.~\cite[Theorem~2]{CuiS2015universalmetaplectic}.

Therefore, explicit construction of the $R$ gate, in addition to the qutrit Clifford gates constructed in the proof of Theorem~\ref{thm:XH-universality}, suffice to show approximately universality of the H+$\ket{0}$-controlled X gate set.

We remark that in this section only, we match the definition of the $H$ gate in Ref.~\cite{Glaudell2022qutritmetaplectic}, which differs from the definition in the main body of the paper by a global phase of $i$.
In any case, the below construction exactly synthesizes the $R$ gate with the correct global phase, regardless of the global phase in the definition of the $H$ gate.

A code implementation of the below is available at \url{https://github.com/lia-approves/qudit-circuits/blob/main/qupit-Toffoli-Hadamard/ToffH.m}.

By~\cite[Equation 12]{Glaudell2022qutritmetaplectic}, where $\mathbb{I}$ denotes the single-qutrit identity matrix,
\begin{equation}\label{eq:RtwootimesI}
    \tikzfig{qutritmetaplectic/RtwootimesI}
\end{equation}

By~\cite[Lemma 11]{yeh2022}, where $Z^{\dagger}$ is the conjugation of the $X$ gate by $H$,
\begin{equation}\label{eq:TCS}
    \tikzfig{qutritmetaplectic/TCS}
\end{equation}

By a modification of~\cite[Lemma 21]{Glaudell2022qutritmetaplectic} to instead use the gate in Equation~\eqref{eq:TCS}, leveraging the decomposition of $H$ into $Z$ and $X$ rotations by~\cite[Remark 2.3]{wang2021qufinite},
\begin{equation}\label{eq:TCminusHdag}
    \tikzfig{qutritmetaplectic/TCminusHdag}
\end{equation}

Finally, we note that for qutrits $X_{12} = -H^2$, enabling us to substitute the $\ket{0}$-controlled $X_{12}$ gate into the $\ket{0}$-controlled $H^2$ gate in Equation~\eqref{eq:RtwootimesI} to realise the qutrit metaplectic gate.


    \section{Soundness of qudit ZH rewrite rules}\label{soundness}
    In this appendix, we argue for the soundness of the rewrite rules introduced in Figure \ref{fig:ZH-rules}. The soundness of the qudit Z-spider fusion rule (zs) is already well known from ZX-literature \cite{booth2022}, and similarly for the identity rule. For the remaining rules, we appeal in parts to the algebraic interpretation of ZH-generators from Eq.~\eqref{eq:qudit-correspondence}, which we will prove along the way. 
    
    First, to verify our claim that the H-box indeed corresponds to multiplication, observe that
    \begin{align*}
       \frac{1}{\sqrt d} \sum_{i=0}^{d-1}\sum_{j=0}^{d-1}\sum_{k=0}^{d-1} \omega^{ijk} H^3|k\rangle\langle i|\langle j| &= \frac{1}{d}\sum_{i=0}^{d-1}\sum_{j=0}^{d-1}\sum_{k=0}^{d-1}\sum_{\nu=0}^{d-1} \omega^{ijk-\nu k}|\nu\rangle\langle i|\langle j|
       = \sum_{i=0}^{d-1}\sum_{j=0}^{d-1}|i \cdot j\rangle\langle i|\langle j|
    \end{align*}
    using the identity $\sum_{k=0}^{d-1} \sum_{\mu=0}^{d-1} (\omega^\mu)^k = d$, since for any root of unity $\zeta \ne 1$ we have $\sum_{k=0}^{d-1} \zeta^k = 0$. Similarly, we get 
    \[
        H^2|i\rangle = \frac 1d\sum_{j=0}^{d-1}\sum_{k=0}^{d-1} \omega^{ ij +jk} |k\rangle = \frac 1d \sum_{k=0}^{d-1} |k\rangle \sum_{j=0}^{d-1} (\omega^{i+k})^j =  |\mathnormal -i\rangle. 
    \]
    Then, using the identity  $H^4 = \id$, we get 
    \[
        \tikzfig{h-box-multiply-fusion}
    \]
    Using flexsymmetry and inductive application of the above identity, we get a generalization of the qubit $H$-fusion rule, with fusion happening via $H^3$ instead of $H$. Only fusion of 1-ary $H$-boxes is not covered. For this, however, observe that 
    \[
        HH(\omega) = \frac 1d \sum_{i=0}^{d-1}\sum_{j=0}^{d-1} \omega^{i+ji}|j\rangle =  |\mathnormal -1\rangle 
    \]
    and thus $H^3H(\omega) = \ket{1}$. Since $1$ is the unit of multiplication modulo $d$, it merges into $H$-boxes. 

    To arrive at our H-box contraction rule, we first introduce the following lemma:
    \begin{lemma}\label{b:h-push-through}
        We can freely transfer a double quantum Fourier transform between the legs of a $H$-box, e.g. 
        \ctikzfig{qudit-h-push-through}
    \end{lemma}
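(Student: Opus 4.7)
The plan is to establish this purely by direct matrix computation using the definition of the H-box from Section~\ref{qudit-zh}, exploiting the fact that each matrix entry of the H-box carries $\omega$ to a product of the indices labelling its legs, and that multiplication in $\mathbb{Z}_d$ is commutative. Since this lemma lives in the soundness appendix, it would be circular to invoke the rewrite rules; the argument should be entirely semantic.

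First, I would recall the identity $H^2\ket{i}=\ket{-i}$, which was derived a few lines above the lemma statement. Inserting $H^2$ on the $k$-th input leg of an H-box therefore sends an input basis vector $\ket{i_1\cdots i_m}$ to $\ket{i_1\cdots i_{k-1}(-i_k)i_{k+1}\cdots i_m}$ before the H-box acts. Plugging this into the definition
$$\tikzfig{qudit-h-box}\ =\ \tfrac{1}{\sqrt d}\sum \omega^{i_1\cdots i_m j_1\cdots j_n}\ket{j_1\cdots j_n}\bra{i_1\cdots i_m},$$
the resulting matrix entry becomes $\tfrac{1}{\sqrt d}\omega^{i_1\cdots i_{k-1}(-i_k)i_{k+1}\cdots i_m j_1\cdots j_n}$, which by commutativity of multiplication in $\mathbb{Z}_d$ equals $\tfrac{1}{\sqrt d}\omega^{-(i_1\cdots i_m j_1\cdots j_n)}$. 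The key observation is that this expression does not depend on which leg $k$ was chosen, and the same argument works verbatim for an output leg (one can also bend any leg via a cup/cap, which preserves the underlying multilinear map). Hence $H^2$ on any one leg of the H-box induces the same linear map, and both sides of the lemma's equation are equal.

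The main (and only) obstacle is essentially bookkeeping: one must be careful that the normalisation factor $1/\sqrt d$ is unchanged, which follows because $H^2$ is a unitary permutation on the computational basis and introduces no scalar. Once the matrix entries have been written out, the result is immediate from commutativity of the product $i_1\cdot i_2\cdots i_m\cdot j_1\cdots j_n$ in $\mathbb{Z}_d$.
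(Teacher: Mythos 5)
Your proof is correct and follows essentially the same route as the paper's: a direct semantic computation using $H^2\ket{i}=\ket{-i}$ to show that the negation in the exponent $\omega^{i_1\cdots i_m j_1\cdots j_n}$ can be attached to any one index by commutativity, so the double Hadamard transfers freely between legs. The paper carries this out explicitly for a three-legged H-box and then extends diagrammatically, whereas you do the general $m+n$-legged case in one step, but the underlying idea is identical.
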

    \begin{proof}
        We have 
        \begin{equation}
            \tikzfig{qudit-h-push-through-proof-p1} = \sum_{i = 0}^{d-1} \sum_{j=0}^{d-1} \sum_{k=0}^{d-1} \omega^{ijk} |\mathnormal-k\rangle \langle i|\langle j| = \sum_{i = 0}^{d-1} \sum_{j=0}^{d-1} \sum_{k=0}^{d-1} \overline\omega^{ijk}|k\rangle \langle i|\langle j| = \sum_{i = 0}^{d-1} \sum_{j=0}^{d-1} \sum_{k=0}^{d-1} \omega^{ijk}|k\rangle \langle i|\langle -j| = \tikzfig{qudit-h-push-through-proof-p2}\label{eq:h-push-through-proof-1}
        \end{equation}
        Using this we then have 
        \begin{equation*}
            \tikzfig{qudit-h-push-through-proof-p3} \qedhere
        \end{equation*}
    \end{proof}

    Then we get:
    
    \ctikzfig{qudit-h-contraction-proof}

    To argue soundness of the two bialgebra rules, we adopt the proof strategy of Backens et al \cite[Eq. 5]{backens2021}, where they reinterpret the qubit ZH-generators as the boolean operators conjunction (and), negation and xor. While our generators no longer correspond to boolean operations, recall from Eq.~\eqref{eq:qudit-correspondence} that we can interpret them as arithmetic operations in $\mathbb Z_d$.
    
    \begin{proof}[Proof of Z/X-bialgebra rule (ba1)]
        To see why the X-spider correspond to addition modulo $d$, we refer to existing literature on ZX-calculus, for instance~\cite{wetering2022}. Using this, we then get
        \ctikzfig{qudit-bialgebra-1-proof}
        Lastly, observe that 
        \[
            \tikzfig{qudit-sqrt-d} = \frac{1}{\sqrt d} \sum_{i=0}^{d-1}\sum_{j=0}^{d-1} \zeta^{ij} = \sqrt d
        \]
        to see that $\sd$ indeed corresponds to $\sqrt d$. 
    \end{proof}
    
    The remaining proofs are straightforward:
    
    \begin{proof}[Proof of Z/H-bialgebra rule (ba2)]
        \begin{equation*}
        \tikzfig{qudit-bialgebra-2-proof} \qedhere
        \end{equation*}
    \end{proof}
    
    \begin{proof}[Proof of cyclic rule (cy)]
        \ctikzfig{qudit-cyclic-proof}
        Here the step marked $(*)$ uses the fact that repeating Pauli-X a total of $d$ times in succession is just the identity. To see our gadget of H-boxes and X-spider really represents Pauli-X, push the two $H$-boxes from the input through to the output to get an addition gadget, and observe that $H^3H(\omega) = |1\rangle$.  
    \end{proof}



    \section{Further Rewrite Rules}\label{app:further}
    
    In the qubit ZH-calculus of Backens \emph{et al.}, there is one further rewrite rule (the \emph{ortho} rule) we have not yet considered in the qudit setting. We make up for that in this appendix. Additionally, we look at two simpler rules \cite[Lem. 2.28, Lem. 5.1]{backens2021} Backens \emph{et al.} have proven to be equivalent to ortho in the qubit setting \cite[Thm. 8.6]{backens2021}, and generalise those as well (though we do not show equivalence). 
    
    The ortho rule (o) is the most complicated qubit rule. It essentially states that 
    \[
        \forall x_0,x_1,y:\  x_0y = x_1(y+1) \iff x_0y = 0 = x_1(y+1).
    \]
    This observation is based on ``exhausting'' all possible values of $\mathbb Z/2\mathbb Z$: No matter what $y \in \mathbb Z/2\mathbb Z$ we choose, we have $\{y, y+1\} = \mathbb Z/2\mathbb Z$. That means either $y$ or $y+1$ is zero, so one of $x_0y$ and $x_1(y+1)$ is always zero. Thus, if $x_0y = x_1(y+1)$, both products must equal 0. 
    
    We can generalise this argument to $\mathbb Z/d\mathbb Z$ for arbitrary $d$: 
    \[
        \forall y:\  \{y,y+1,...,y+d-1\} = \mathbb Z/d\mathbb Z
    \]
    and thus 
    \[
        \forall x_0,...,x_{d-1}, y:\  x_0y = ... = x_{d-1}(y+d-1) \iff \forall i\in\{0,...,d-1\}:\ x_i(y+i) = 0.
    \]
    Expressing this as ZH-diagram (assuming that the Pauli-X inputs are on the left) we get 
    \ctikzfig{qudit-ortho}
    For $d$ prime this rule actually becomes slightly stronger, as the absence of zero-divisors tells us:
    \[
        \forall x_0,...,x_{x-1}, y:\ x_0y = ... = x_{d-1}(y+d-1) \iff \text{at most one }x_i\ne 0,
    \]
    however, this condition does not seem to be easily expressible as a ZH-diagram. 

    Backens \emph{et al.}'s Lemma 2.28 of~\cite{backens2021} states that 
    \[
        \forall x,y:\ xy=1 \iff x = 1 \wedge y = 1.
    \]
    Clearly, this does not hold for $d > 2$, however for prime $d$ we can generalize it to the following statement:
    \[
        \forall x,y:\ xy \ne 0 \iff x\ne 0 \wedge y\ne 0
    \]
    Obviously, this does not hold for $d$ composite, since then $\mathbb Z/d\mathbb Z$ admits zero-divisors. Using Eq.~\eqref{useful} we can realize this as the diagrammatic equation
    \ctikzfig{qudit-no-zero-divisors}
    where all ``$\ldots$'' represent $(d-1)$-fold repetition. 
    
    Lastly, Backens \emph{et al.}'s Lemma 5.1 is a diagrammatic version of the \emph{Frobenius identity}, which states that for prime $d$, we have $x^d=x$ for all $x\in \mathbb Z_d$. We have used statements similar to this all throughout this paper. Diagrammatically, the identity becomes
    \ctikzfig{frobenius-map-identity}
    
    \section{Derived Rewrite Rules}\label{app:derived}
    
    In this appendix we derive simple, yet often useful, rewrite rules using the qudit ZH-calculus. If a rule generalises a known rule for qubit ZH or ZX, we provide references to where the rules were first proven. In these cases, the proofs are often virtually identical and only need some adjustments regarding the number of Hadamards.
    
    Note that Lemmas \ref{push-through} and \ref{push-through-2} also work with the colors of the spiders inverted. Lemma \ref{lem:xs} also works with arbitrarily many wires connecting the spiders, similar to \spiderrule. Lemmas \ref{lem:copy-1} through \ref{lem:copy-4} are copy rules resulting from special cases of our bialgebra rules. Following, we derive some rules for $Z$- and $X$-spiders connected via multiple wires, including the \emph{Z/X Hopf-rule} (Lemma \ref{lem:hopf}), the qudit version of \emph{complementarity} (Lemma \ref{compl}), as well as a generalisation of Yeh and van de Wetering's qutrit \emph{special} rule (Lemma \ref{lem:special}). Lastly, we have some rules about scalar cancellation (Lemma \ref{lem:inner-prod} and \ref{lem:dim}).   
    
    \begin{multicols}{3}
        \begin{lemma}\label{x-id}
            \ctikzfig{qudit-x-id}
        \end{lemma}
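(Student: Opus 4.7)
My guess at the statement: Lemma \ref{x-id} asserts a basic structural identity for the phase-free X-spider analogous to the Z-spider rule \IDRule{}—most plausibly that a one-input, one-output X-spider collapses to a composition of Hadamards (i.e.~to the antipode $H^2$, since in the qudit setting $H^2 \ne \id$), or equivalently that two such X-spiders composed in sequence give the identity wire. Either way, the whole content of the lemma is that the X-spider, defined in \eqref{eq:derived-gen} as a Z-spider with Hadamards on each leg, inherits the trivial-arity behaviour of the Z-spider up to Hadamard bookkeeping.

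The plan is to unfold the X-spider via its definition in \eqref{eq:derived-gen}, so that the diagram in question becomes a one-input, one-output Z-spider sandwiched between Hadamards. At that point the inner Z-spider is removed by the identity rule \IDRule{}, leaving only a string of $H$-boxes on a single wire. That string is then simplified using the derived equation \eqref{ft} which states $H^4 = \id$, together with the fact already noted in the paper that $H^2$ equals the antipode (a single-leg X-spider). Depending on the exact picture, one may also need a single invocation of the color-change rule \eqref{h} to match the RHS.

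Concretely, the proof in display form would be a short chain of rewrites:
\begin{equation*}
\text{(expand X via \XDef)} \;\to\; \text{(apply \IDRule{} on the inner Z-spider)} \;\to\; \text{(simplify remaining H-boxes via \eqref{ft}, possibly \eqref{h})}.
\end{equation*}
No nontrivial bialgebra or cyclic rule should be required, since nothing in the diagram has branching arity after the Z-spider is collapsed.

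The main obstacle, and essentially the only thing to be careful about, is scalar bookkeeping: the definition of the X-spider in \eqref{eq:derived-gen} involves several Hadamards, and since in the qudit setting each $H$-box carries a $1/\sqrt{d}$ normalisation, one has to verify that the number of $\sd$ and $\sdi$ scalars balances on the two sides. This is purely mechanical but easy to get off by one, so I would carry the scalar factors explicitly through the rewrite sequence rather than discarding them and reinstating them at the end.
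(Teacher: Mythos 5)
Your reading of the statement is right (the main text itself notes that the one-input, one-output X-spider equals the antipode $H^2$), and your proof is the same one the paper gives: unfold the X-spider via its definition \XDef, collapse the inner one-legged Z-spider with \IDRule, and tidy the remaining Hadamards and $\sqrt{d}$-scalars. No substantive difference from the paper's rewrite chain.
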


        \begin{lemma}\label{x-id-2}
            \ctikzfig{qudit-x-id-2}
        \end{lemma}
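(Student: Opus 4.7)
The plan is to reduce everything to Z-spiders via the definition of the X-spider from \eqref{eq:derived-gen}, and then appeal to the Z-spider fusion rule \SpiderRule together with Hadamard-simplification identities such as \eqref{ft} and the derived colour-change rule \eqref{h}. Since Lemma~\ref{x-id-2} is a companion to Lemma~\ref{x-id}, the intended statement is almost certainly a one-line identity about an X-spider whose arity (or Hadamards on its legs) makes it collapse, so the proof should consist of rewriting to a Z-spider sandwich and cancelling Hadamards.

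Concretely, I would first replace the X-spider by its definition: a Z-spider with Hadamards on each input and output leg. Next, I would use the interaction of pairs of Hadamards on internal wires via \eqref{ft} to kill $H^4$ factors, or push them through using Lemma~\ref{b:h-push-through} when they are not yet adjacent in pairs. Any remaining adjacent Hadamards either fuse into Z-/X-spider identities via \spiderrule together with the colour-change \eqref{h}, or produce the standard qudit Fourier transform pattern that simplifies to an identity or a single spider. If the lemma's right-hand side carries a scalar factor of $\sqrt d^{\,k}$, then those scalars are produced exactly when \strongcomprule or \hcomprule is invoked, with their coefficients matching the bookkeeping in those rules.

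The main obstacle I anticipate is not the structural reduction itself but the scalar bookkeeping: rules such as \strongcomprule and the bialgebra identities introduce factors of $\sqrt d$ raised to a power that depends on the number of inputs and outputs of the pieces being rewritten, and the antipode $H^2$ plays the role of negation for qudits, so one has to be careful not to absorb or duplicate a $\sqrt d$ factor when pushing Hadamards across spiders. The cyclic rule \multrule should then handle any leftover closed loop, and a single application of \eqref{h} typically yields the intended right-hand side. Overall the derivation should take only a handful of steps, entirely parallel in spirit to the proof of Lemma~\ref{x-id}, with the difference living in which leg configuration is being collapsed.
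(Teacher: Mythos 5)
Your core plan --- unfold the X-spider via its definition in \eqref{eq:derived-gen} and cancel the resulting Hadamards using \eqref{ft} --- is exactly what the paper does; its entire proof is ``Follows immediately from \eqref{ft}.'' The additional contingencies you list (bialgebra rules, cyclic rule, scalar bookkeeping) are not needed here, but they do no harm.
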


        \begin{lemma}\label{push-through}
            \ctikzfig{qudit-push-through}
        \end{lemma}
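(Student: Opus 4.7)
The plan is to establish the push-through identity by combining a quick semantic verification with a diagrammatic derivation. Based on the placement in the derived-rules appendix and the parallel with Lemma~\ref{b:h-push-through} for H-boxes, the statement most likely asserts that an antipode (equivalently, a stack $H^2$ or a single-wire X-spider by \eqref{eq:derived-gen}) on one leg of a Z-spider can be moved to the opposite side and reappears as the same operation on every leg there, with a symmetric statement after colour-swapping.

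First, I would verify the identity as linear maps using the definition $\tikzfig{z-spider} = \sum_{i=0}^{d-1}|i\rangle^{\otimes n}\langle i|^{\otimes m}$. Because the Z-spider is the diagonal copy map in the computational basis, any operation that acts as a permutation (or more generally monomially) on computational basis states—such as $H^2\colon |i\rangle \mapsto |{-}i\rangle$—commutes past it by being applied simultaneously on each outgoing leg. This immediately pins down what the equation must be and confirms soundness.

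For the diagrammatic derivation, I would unfuse the Z-spider into a binary tree of 3-legged Z-spiders via \SpiderRule and reduce to the base case of a single 3-legged node. At that node I convert to an X-spider surrounded by Hadamards via the colour-change rule \eqref{h}; the operation being pushed through then combines with one of these Hadamards, and passes through the X-spider via \xspiderrule (or, if needed, the commutation Lemma~\xzcommute). Re-applying \eqref{h} on the other side recovers the Z-spider with the operation duplicated on every leg, and finally \SpiderRule refuses the tree back into a single spider.

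The main obstacle is the Hadamard and scalar bookkeeping. In the qubit case this rule is almost trivial because $H^2 = \mathrm{id}$, so stray Hadamards from the colour-change disappear. For qudits we only have $H^4 = \mathrm{id}$ via \eqref{ft}, so one must confirm that after unfusing, pushing through and refusing, the number of Hadamards accumulated on each leg is a multiple of four (modulo the antipodes being produced), and that the $\sqrt d$ factors introduced by the bialgebra and colour-change steps cancel overall. This should be handled by systematic use of \eqref{ft} together with the scalar cancellation lemma \ScalarRule.
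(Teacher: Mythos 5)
Your reading of the statement is right (it is the antipode $H^2$ transferring between the legs of a spider, in both colours), and your diagrammatic derivation matches the paper's, whose entire proof is that the rule follows by alternating applications of the colour-change rule \eqref{h}, the X-spider definition \eqref{eq:derived-gen}, and $H^4=\mathrm{id}$ \eqref{ft}. The binary-tree unfusing and the scalar bookkeeping you flag as the main obstacle are unnecessary: \eqref{h} applies at arbitrary arity and none of the rules used here introduces scalars, so your argument is a more elaborate version of the same one-line proof.
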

        
        \begin{lemma}\label{push-through-2}
            \ctikzfig{qudit-push-through-2}
        \end{lemma}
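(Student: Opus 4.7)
The plan is to mirror the derivation of Lemma \ref{push-through} (the already-established push-through identity), exploiting the fact, noted explicitly in the appendix, that push-through rules work ``with the colors of the spiders inverted.'' So my first move is to apply the color change rule \eqref{h} to both spiders appearing in the LHS, converting the configuration into one that is (up to some intervening Hadamards / $H^3$ strings) exactly the configuration treated by Lemma \ref{push-through}. The price of this conversion is exactly a controlled number of Hadamard wires, which I will track carefully; these will be resolved at the end using \eqref{ft} ($H^4=\mathrm{id}$) and the $H$-box contraction rule \HFuseRule.

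Next I would push the derivation through using the same toolkit as for Lemma \ref{push-through}: apply the $Z/H$-bialgebra rule \HCompRule to rearrange the interaction between the spider and the $H$-box(es), then use $Z$-spider fusion \SpiderRule (or its $X$-analogue, Lemma \ref{lem:xs}) to re-merge the resulting spiders. If any Pauli-$X$ factors or antipodes ($H^2$) appear in the intermediate diagram, they can be shuffled through spiders by the algebraic interpretation of generators given in Eq.~\eqref{eq:qudit-correspondence}, since negation $x\mapsto -_d x$ commutes with both copy and addition. At this point the diagram should coincide with the RHS, up to Hadamards on individual wires.

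The final bookkeeping step is to dispose of the Hadamards introduced at the start. Every wire on which we applied \eqref{h} to colour-swap a spider should now carry an even number of Hadamards, which collapse by \eqref{ft}. For the internal wires between $H$-boxes, I would use Lemma \ref{b:h-push-through} to transport the double Fourier transform to where it cancels, and then \HFuseRule for any remaining $H$-box contraction.

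The main obstacle I expect is the bookkeeping of Hadamards and antipodes: unlike the qubit case, where $H^2=\mathrm{id}$ makes most of these disappear automatically, here we have $H^2=$ antipode and $H^4=\mathrm{id}$, so the count of Hadamards on each wire must be tracked mod $4$ rather than mod $2$. Handling the potentially asymmetric arities (the ``$\ldots$'' on either side of the diagram) also means that the scalar factors arising from \HCompRule (of the form $\sd^{(n-1)(m-1)}$) may need to be cancelled against scalars produced by \StrongCompRule or by $\sd$-production via \MultRule; I would address this by choosing the order of rule applications so that the net scalar count vanishes by Lemma~\ScalarRule-style arguments, or equivalently by checking the arities at the start to ensure that only trivially-scalar instances of bialgebra are invoked.
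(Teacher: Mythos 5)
Your core route---apply the colour-change rule \eqref{h}, unfold the derived-generator definition \XDef, and cancel the accumulated Hadamards via \eqref{ft} while tracking powers of $H$ modulo $4$---is exactly the paper's own proof, which handles Lemmas \ref{push-through} and \ref{push-through-2} together by ``alternating application of \eqref{h}, \XDef and then \eqref{ft}.'' The additional machinery you hedge with (the $Z/H$-bialgebra rule \HCompRule, Lemma \ref{b:h-push-through}, and the scalar bookkeeping against \StrongCompRule) is superfluous: the rewrite only ever acts leg-by-leg on single wires, so no bialgebra configuration and no scalar factors ever arise.
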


        \begin{lemma}\label{h-push-through}
            \ctikzfig{qudit-h-push-through}
        \end{lemma}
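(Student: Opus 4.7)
The plan is to obtain this identity as a purely diagrammatic consequence of the flexsymmetry of the H-box and the previously derived identities $\eqref{ft}$ and $\eqref{h}$. The statement was already checked semantically in Lemma~\ref{b:h-push-through}, so what is left is a derivation inside the calculus.

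First I would reduce to the case where the two legs being swapped are on the same side (both inputs or both outputs). Since our PROP is flexsymmetric, a transposition of two legs of the H-box leaves the H-box invariant; the $H^2$ decoration travels with its leg to the new position, which immediately gives the identity in this case. For a transfer between an input leg and an output leg, I would use a cup or cap to bend one leg to the other side and reduce to the previous case. What one has to justify diagrammatically is that an $H^2$ on a straight wire equals an $H^2$ on the same wire after bending; this in turn can be seen by writing $H^2 = H\cdot H$ and using \eqref{ft} together with the color-change rule \eqref{h} to rewrite the bent $H^2$ as an antipode, which by \eqref{eq:derived-gen} is a single-input single-output $X$-spider and therefore manifestly flexsymmetric.

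An alternative, more computational route would be to attach an extra $H^2$ to the target leg of the H-box, so that the goal becomes showing that an H-box with $H^2$ placed on \emph{two} of its legs equals the bare H-box (after which the lemma follows from $H^4 = \id$). That reformulation can be attacked by viewing both $H^2$'s as single-leg $X$-spiders and then using \HCompRule (together with \eqref{h} to move between $X$- and $Z$-colored spiders) to push them through the H-box, collapsing the result via \HFuseRule and \MultRule.

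The main obstacle is avoiding circular appeals to later lemmas in Appendix~\ref{app:derived}: since \ref{h-push-through} is itself a foundational push-through identity, it must be derived using only flexsymmetry, the primitive rules of Figure~\ref{fig:ZH-rules}, and the two derived rules \eqref{ft} and \eqref{h} established in the main body. I expect the bookkeeping of Hadamards in the qudit bialgebra rule \HCompRule to be the most delicate part of the derivation, since (unlike in the qubit case) these extra Hadamards do not cancel automatically and must be absorbed using \HFuseRule in carefully chosen parity.
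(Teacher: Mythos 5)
Your first and primary route contains a genuine error: flexsymmetry of the H-box does not let you move a decoration from one leg to another. Flexsymmetry says the H-box pre/post-composed with a permutation of its legs equals the H-box; when you "let the $H^2$ travel with its leg", the $H^2$ remains attached to the same external wire of the overall diagram, so you have only redrawn the picture without changing which input or output carries the antipode. To actually delete the residual swap and conclude, you would need the \emph{decorated} H-box to be invariant under exchanging its external wires, which is essentially the statement being proved. That no such argument can work from flexsymmetry alone is shown by a counterexample: placing the Pauli $X$ (instead of $H^2$) on leg~1 versus leg~2 of the two-legged H-box yields matrices with entries $\omega^{(i+1)j}$ and $\omega^{i(j+1)}$ respectively, which already differ at the $(1,0)$ entry for every $d\geq 2$, even though the H-box is flexsymmetric and $X$ is a perfectly good unitary. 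The lemma is therefore not a symmetry statement but a special property of the antipode, resting on the multilinearity identity $\omega^{(-i)jk\cdots}=\omega^{i(-j)k\cdots}$; this is precisely how the paper establishes it semantically in Lemma~\ref{b:h-push-through}, by moving the minus sign between indices in the exponent of the H-box matrix, and the derived-rule version in Appendix~\ref{app:derived} is given by an explicit rewrite derivation rather than an appeal to symmetry.

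Your second route begins with a sound reduction: if an H-box with $H^2$ on two of its legs equals the bare H-box, then \eqref{ft} yields the lemma. But the proposed execution --- "pushing" the antipodes through the H-box using \HCompRule and \eqref{h} --- is itself an instance of the push-through property you are trying to derive, so as written it is circular, and you do not exhibit how the bialgebra rule would apply to a single-wire antipode sitting on an H-box leg. The only part of your argument that stands is the easy one (the antipode is self-transpose, hence insensitive to bending); the core step that actually moves the antipode between legs is missing and must be supplied by a concrete rewrite sequence.
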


        \begin{lemma}\label{lem:xs}
            \begin{equation}
                \tag{xs}\tikzfig{qudit-x-fusion}
                \label{xs}
            \end{equation}
        \end{lemma}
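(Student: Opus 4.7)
The plan is to unfold both X-spiders on the left-hand side of \eqref{xs} into their defining expressions from Eq.~\eqref{eq:derived-gen}, which express each X-spider as a Z-spider conjugated by Hadamards on its legs. This reduces the claim to an identity purely in the phase-free Z/H fragment, where the Z-spider fusion rule \spiderrule can do the real work.

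First I would rewrite each X-spider via \eqref{eq:derived-gen}, so that the diagram becomes two Z-spiders connected by one wire (or several wires), with Hadamards appearing on every leg of each original X-spider. On each connecting wire there will then be a stack of Hadamards contributed by both sides. Using the derived rule \eqref{h4} to reduce Hadamard counts modulo 4, together with Lemma~\ref{h-push-through} to transfer Hadamards between neighbouring generators and Lemma~\ref{push-through} to move any residual antipode through a Z-spider, I would clean out the internal wires entirely. Once the two Z-spiders are connected by bare wires, an application of \spiderrule fuses them into a single Z-spider of the combined arity, which by one more appeal to \eqref{eq:derived-gen} (read in reverse) folds up into a single X-spider whose inputs and outputs are the sums of those of the two original X-spiders. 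The case with multiple connecting wires between the two X-spiders follows either by iterating the single-wire argument or by doing all the internal wires in parallel.

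The main obstacle I expect is careful bookkeeping of the number of Hadamards that accumulate on each connecting wire. Because $H^2$ is not the identity but the antipode (negation), one cannot simply discard pairs of Hadamards: depending on the exact asymmetry between input and output conventions encoded in \eqref{eq:derived-gen}, the Hadamards on each internal wire will either already combine to $H^4=\id$ via \eqref{h4}, or will leave an $H^2$ that must be pushed through the Z-spider by Lemma~\ref{push-through} and ultimately absorbed symmetrically into the external legs (re-emerging as part of the outer Hadamards of the fused X-spider). Once this combinatorial bookkeeping of Hadamards is handled correctly, the proof is essentially just an invocation of \spiderrule.
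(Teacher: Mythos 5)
Your plan coincides with the paper's proof, which is precisely the diagram chain you describe: unfold both X-spiders via \eqref{eq:derived-gen}, clear the internal wire of accumulated Hadamards using \eqref{ft} and the push-through lemmas, fuse the two Z-spiders with \SpiderRule, and refold via \eqref{eq:derived-gen}. The Hadamard bookkeeping you flag is indeed the only delicate point, and it resolves along the first branch of your ``either/or'': the powers of $H$ on the connecting wire combine to $H^4=\mathrm{id}$, so no residual antipode has to be redistributed onto the external legs.
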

        
        \begin{lemma}\label{xs-2}
            \ctikzfig{qudit-x-fusion-2}
        \end{lemma}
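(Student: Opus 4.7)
The plan is to unfold everything to the Z-spider/Hadamard presentation and push the Hadamards through using the derived identities already established in the paper. Concretely, the first step is to replace each X-spider by its definition from \eqref{eq:derived-gen}, which turns the LHS into a diagram consisting of Z-spiders with Hadamards dressing every leg, together with whatever Hadamards or H-boxes appear on the wires connecting them.

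Second, I would simplify the Hadamard sequences that arise on each connecting wire. By the derived rule \eqref{ft} we have $H^4=\id$, so any sequence of Hadamards on an internal wire reduces modulo~$4$. Whenever the residual count is $2$, the resulting antipode can either be pushed onto a single leg using Lemma~\ref{h-push-through} or absorbed into a neighbouring X-spider (since a single-input, single-output X-spider is exactly $H^2$, as remarked after \eqref{eq:qudit-correspondence}). If the residual count is $0$, the connecting wires become bare and Lemma~\ref{lem:xs} (the general \eqref{xs} fusion rule, which as stated also applies with arbitrarily many connecting wires) merges the two X-spiders directly.

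Third, once the internal Hadamards have been resolved and the Z-cores merged via \SpiderRule, I would re-dress the resulting single Z-spider with Hadamards on its external legs and read it back as an X-spider using \eqref{eq:derived-gen} in reverse, possibly invoking the derived colour-change rule \eqref{h} to transport an extra Hadamard across the spider if the balance of H's on the outputs does not immediately match the target form.

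The main obstacle, and the reason a separate \emph{xs-2} rule is needed in the first place, is the fact that $H^2\neq\id$ for $d>2$: unlike the qubit case, the antipodes do not silently cancel, so the bookkeeping of Hadamards on both the internal and external wires has to be done carefully. I expect the bulk of the work to be exactly this Hadamard accounting, handled uniformly by repeated application of \eqref{ft}, Lemma~\ref{h-push-through}, and the observation that the one-legged X-spider equals $H^2$.
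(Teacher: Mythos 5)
Your proposal follows essentially the same route as the paper: the paper's proof of this lemma (like that of Lemma~\ref{lem:xs}) is a short diagrammatic chain that unfolds the X-spiders via \eqref{eq:derived-gen}, resolves the resulting Hadamards on the connecting wires using \eqref{ft} and the colour-change/push-through identities, fuses the Z-cores with \SpiderRule, and refolds into the target X-spider form. The only imprecisions are that you invoke Lemma~\ref{h-push-through} (an H-box identity) where the spider-level Lemmas~\ref{push-through} and \ref{push-through-2} are the relevant tools, and that a finished proof must commit to the actual Hadamard count arising from the specific diagram rather than a residual-count case split; the method itself is the paper's.
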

        
        \begin{lemma}\label{lem:copy-1}
            \ctikzfig{qudit-copy-1}
        \end{lemma}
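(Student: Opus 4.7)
The plan is to derive this copy rule directly from the Z/X bialgebra rule \StrongCompRule by specialising it to the case where one of the spiders on the left-hand side has arity zero, so that it becomes a state that should then propagate through and get copied by the spider on the other side. More precisely, I would start from the full bialgebra equation, and then plug in a suitable degenerate arity (turning one of the spiders into a 0-to-1 or 1-to-0 leg), which collapses part of the diagram into a state and leaves the outputs as several disjoint copies of that state, each connected through their own small spider.

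First, I would write out both sides of \StrongCompRule at the specific arities involved, using \SpiderRule to reshape the resulting spiders into the desired form (merging any leftover identity-type spiders and eliminating trivial 1-to-1 nodes). Next, using Lemma~\ref{x-id} (and possibly Lemma~\ref{x-id-2}) to handle the case of a 1-legged X-spider being absorbed or eliminated as an identity, I would tidy up the sides so that the left-hand side reads as a state composed with a Z-spider and the right-hand side as a tensor product of copies.

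The main obstacle will be bookkeeping the scalars: for prime-dimensional qudits, \StrongCompRule is stated up to a factor of $\sd^{(n-1)(m-1)}$ (with a sign correction when this exponent is negative), and in the degenerate arity we consider, $(n-1)(m-1)$ may well become negative or produce a nontrivial power of $\sd$. I would handle this by applying \ScalarRule, the identities \eqref{ft} and~\eqref{h} when cleaning up Hadamards, and the cyclic rule \MultRule where necessary, verifying that the excess scalar collapses either to $1$ or to the unique scalar factor that matches the right-hand side of the claimed equation. Provided the arities are chosen so that $(n-1)(m-1)=0$, no such scalar is introduced at all and the derivation reduces to a one-step application of \StrongCompRule followed by spider fusion via \SpiderRule.

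Finally, once both sides match structurally and the scalars have been reconciled, the lemma follows. I would double-check soundness by evaluating both diagrams on an arbitrary computational basis state using the arithmetic interpretation in~\eqref{eq:qudit-correspondence}: the state (an X-state or similar) encodes a fixed element of $\mathbb{Z}_d$, the Z-spider copies that element across all outputs, and each output leg independently realises the same state, which is precisely what the bialgebra-based rewriting produced.
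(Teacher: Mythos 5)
Your proposal matches the paper's proof, which simply states that Lemmas~\ref{lem:copy-1} through~\ref{lem:copy-4} follow immediately from \strongcomprule and \hcomprule; specialising the bialgebra rule to a degenerate arity (where the scalar exponent $(n-1)(m-1)$ vanishes) and tidying with \SpiderRule is exactly the intended one-step derivation. Your additional scalar bookkeeping and semantic sanity check via~\eqref{eq:qudit-correspondence} are correct but not needed beyond what the paper records.
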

        
        \begin{lemma}\label{lem:copy-2}
            \ctikzfig{qudit-copy-2}
        \end{lemma}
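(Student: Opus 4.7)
}
Since this lemma is flagged as a ``copy rule resulting from a special case of our bialgebra rules,'' my plan is to derive it by applying one of the two bialgebra rules, \StrongCompRule or \HCompRule, to a configuration where some of the spider arities have been chosen to collapse the diagram. The typical pattern is that on the LHS we have a small state (a phase-free spider or H-box with few legs) feeding into a spider of the opposite colour, which after applying the bialgebra rule decomposes into several independent copies of that state, one per output wire, together with some ``leftover'' spiders or Hadamards that absorb trivially.

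Concretely, I would first rewrite the LHS so that the bialgebra pattern is visible: this may involve unfusing a spider using \SpiderRule (or \eqref{xs} for X-spiders) so that the configuration matches the LHS of the bialgebra rule exactly. Next, apply \StrongCompRule or \HCompRule to exchange the colours, producing a bipartite graph of spiders between the top and bottom legs. Then use \SpiderRule / \eqref{xs} again, this time in the fusing direction, to collect the new spiders on each output wire into a single spider per wire, removing any arity-two spiders that have become identities via \IDRule. If the rule involves H-boxes, I expect to also need \eqref{h} to move Hadamards through spiders and possibly \HFuseRule to contract short H-box chains to the identity. Finally, if the rule states that a specific state copies, the leftover central piece should reduce via \MultRule or the derived \eqref{ft} rule $H^4 = \mathrm{id}$ to a scalar which then matches the RHS.

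The main obstacle I anticipate is scalar bookkeeping. The qudit Z/X bialgebra rule \StrongCompRule introduces a factor of $\sd^{(n-1)(m-1)}$ which does not appear in the qubit case, and the Z/H bialgebra rule \HCompRule introduces extra Hadamards. To close the proof cleanly, I will need to track these factors through the fusion/contraction steps and cancel them using the explicit scalar generators $\sd$ and $\sdi$ introduced in \eqref{eq:derived-gen}. A secondary concern is that when the copied state is not purely a basis state but involves an H-box, the fusion step may leave a short sequence of H-boxes interspersed with Hadamards, which requires invoking \HFuseRule in its ``odd length contraction'' form rather than in the direct fusion form; getting the parity right here will be the most delicate calculational point.
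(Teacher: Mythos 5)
Your plan is essentially the paper's own proof: the paper disposes of Lemmas \ref{lem:copy-1} through \ref{lem:copy-4} in one line, stating that they follow immediately from \strongcomprule and \hcomprule, which is exactly the specialise-the-bialgebra-rule strategy you describe. Your anticipated complications (unfusing before applying the rule, odd-length H-box contraction) are more machinery than is actually needed, but the core idea and the scalar bookkeeping concern are on target.
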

        
        \begin{lemma}\label{lem:copy-3}
            \ctikzfig{qudit-copy-3}
        \end{lemma}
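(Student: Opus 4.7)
The plan is to obtain this copy rule as an instance of one of the two bialgebra rules, following the strategy flagged in the preceding sentence that ``Lemmas~\ref{lem:copy-1} through \ref{lem:copy-4} are copy rules resulting from special cases of our bialgebra rules''. So the first step is to decide whether \StrongCompRule (the $Z/X$ bialgebra) or \HCompRule (the $Z/H$ bialgebra) is the right one: if the rule is about pushing a basis state or a cup built from $Z$/$X$ through a spider of the opposite colour, I would reach for \StrongCompRule, while if an $H$-box appears explicitly on one side of the LHS I would use \HCompRule. In either case, the target on the RHS should be several parallel copies of a simple state, so the bialgebra move is what produces that copying pattern.

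Concretely, I would first massage the LHS into the canonical bialgebra shape. This typically means using \SpiderRule on the colour being copied to unfuse it into an $(m,1)$-spider (if it is not already) so that the two-spider subdiagram matches the LHS of the appropriate bialgebra. After that, I would apply the bialgebra rule, which produces a complete bipartite graph of the opposite-colour spiders with the appropriate scalar prefactor, possibly decorated by Hadamards if \HCompRule was used. Each of the resulting spiders on the ``state'' side will then either be trivially degenerate (a $1$-ary spider on a state) or will fuse back using \SpiderRule to give the desired copies, and the Hadamards, if present, either cancel via \eqref{ft} or contract via \HFuseRule.

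The only truly nontrivial bookkeeping is tracking the scalars. The Z/X bialgebra introduces a factor $\sd^{(n-1)(m-1)}$, and \HCompRule introduces additional Hadamards that ultimately collapse into H-boxes; after the bialgebra fires, some wires will become loops or scalar self-loops that need to be absorbed via \MultRule and \IDRule. Thus the main obstacle I expect is not the topology but rather matching the scalar prefactors between the two sides of the equation: one has to verify that the $\sd$'s introduced by the bialgebra, together with any scalars arising from loops or from contracting Hadamards, combine to exactly the $1/\sqrt d$ factors implicit in the H-box normalisation on the RHS. Once that scalar count is done, the diagrammatic derivation should be a short chain of \StrongCompRule (or \HCompRule), \SpiderRule, \HFuseRule, and \IDRule, analogous to the corresponding copy derivations in the qubit ZH-calculus of Backens et al.~\cite{backens2021} but with the powers of Hadamards adjusted to reflect that $H^4 = \id$ rather than $H^2 = \id$.
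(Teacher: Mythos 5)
Your proposal matches the paper's approach: the paper proves Lemmas \ref{lem:copy-1} through \ref{lem:copy-4} simply by noting that they follow immediately as special cases of the bialgebra rules \StrongCompRule and \HCompRule, which is exactly the strategy you describe (including the need to track the $\sd$ scalars introduced by \StrongCompRule). Your write-up is a more detailed elaboration of the same one-line argument.
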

        
        \begin{lemma}\label{lem:copy-4}
            \ctikzfig{qudit-copy-4}
        \end{lemma}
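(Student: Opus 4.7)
The plan is to derive this copy rule as a direct special case of one of our bialgebra rules, in the same spirit as the preceding three copy lemmas. First I would inspect the LHS to see which generator is being "copied" through which: if a Z-spider state (or X-spider state, which amounts to the same thing after a Hadamard via the colour change rule \eqref{h}) is being pushed through a spider of opposite colour, the relevant rule is \StrongCompRule; if instead the diagram involves an H-box interacting with an X-spider, the appropriate rule is \HCompRule. In either case, the LHS of the copy rule is effectively the bialgebra rule instantiated with one of the spiders having arity 0 on one side.

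From there, I would apply the chosen bialgebra rule to rewrite the configuration, after which the resulting diagram will contain multiple disconnected (or weakly connected) spiders of one colour together with a small number of the other. These can be cleaned up using \SpiderRule, the X-spider fusion rule Lemma~\ref{lem:xs}, and \HFuseRule. Any trailing Hadamard pairs that arise from the \HCompRule variant can be cancelled using the derived identity \eqref{ft}, and any single arity-2 spiders collapse via \IDRule. The preceding lemmas (\ref{lem:copy-1}–\ref{lem:copy-3}) may also be invoked if they give the shorter path.

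The main obstacle, as with all these derivations, is bookkeeping the scalar factors. The rule \StrongCompRule introduces a factor of $\sd^{(n-1)(m-1)}$ depending on the arities of the two spiders, and the copy scenario often forces one of $(n-1)$ or $(m-1)$ to be negative, so that one would actually introduce $\sd^{-(n-1)(m-1)}$ on the LHS, rebalancing via Lemma~\ref{lem:inner-prod} and Lemma~\ref{lem:dim} on scalar cancellation. The $\HCompRule$ variant further injects Hadamards whose number depends on the arities; I would carefully track that the parity of Hadamards on each wire is correct so that \HFuseRule and \eqref{ft} fully absorb them.

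If the particular LHS contains an X-spider with a single output (a Pauli-$X$ dot) rather than a bare state, I would first unfuse it using \xs to expose the relevant bialgebra shape, apply the bialgebra rule, and then refuse. Given the diagrammatic economy of the preceding copy rules, I expect the entire derivation to be no more than three or four rewrite steps once the correct scalar normalisation is in place.
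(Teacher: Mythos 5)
Your approach matches the paper's: Lemmas~\ref{lem:copy-1} through~\ref{lem:copy-4} are proved there in a single line as immediate special cases of the bialgebra rules \StrongCompRule and \HCompRule, which is exactly the instantiate-the-bialgebra-rule-with-an-arity-zero-spider argument you describe. Your extra bookkeeping of scalar factors and Hadamard parities is more detail than the paper records, but it is the same route.
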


        \begin{lemma}[Hopf]\label{lem:hopf}
            \begin{equation}
                \label{fake-compl}\tag{H}\tikzfig{qudit-fake-complementarity}
            \end{equation}
        \end{lemma}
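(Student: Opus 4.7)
The plan is to derive the Hopf equation directly from the Z/X-bialgebra rule \StrongCompRule together with spider fusion. The strategy mirrors the standard derivation of Hopf from bialgebra in any graphical presentation of a Hopf algebra, with the extra twist that in the qudit setting the bialgebra rule is only true up to scalars and that the Z- and X-spiders are related by $H^2$ rather than being self-dual on wire-bending.

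First, I would localise the connecting wire by using \spiderrule in reverse to split the Z-spider into two pieces: its original ``bulk'' (with its outer legs) and a 1-to-2 Z-node sitting right next to the connecting wire. Symmetrically, I would apply \xs (Lemma~\ref{lem:xs}) to split the X-spider into a 2-to-1 X-node next to the wire, with its bulk pushed outside. This exposes a clean Z/X-bialgebra configuration in the middle of the diagram, surrounded by two now-inert spider-bulks.

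Second, I would apply \StrongCompRule to the isolated Z-to-X junction, which rewrites the junction into a bipartite configuration of copies of small Z- and X-spiders, up to a scalar of $\sd^{(n-1)(m-1)}$ (or its inverse, depending on the arities). Then I would re-fuse the small Z- and X-nodes back into the bulks using \spiderrule and \xs again. The crucial bialgebra combinatorics makes the bipartite configuration factor, so that the resulting diagram splits into a disconnected Z-half and X-half, which is exactly the shape of the Hopf identity. If needed, intermediate Hadamards introduced by the color-change derivation \eqref{h} or the antipode identity \eqref{ft} can be absorbed using \IDRule and Lemma~\ref{h-push-through}.

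The main obstacle I expect is the scalar bookkeeping. Every use of \StrongCompRule introduces a factor of $\sd^{(n-1)(m-1)}$, the splittings and refusions may introduce additional $\sd$ or $\sd^{-1}$ from degenerate-arity applications of the rule, and the qudit H-box normalisation built into the derived generators of \eqref{eq:derived-gen} contributes further $\sd$'s. All of these must cancel consistently to match the scalar on the right-hand side of \eqref{fake-compl}; this is a routine but error-prone calculation that likely accounts for the bulk of the written proof, while the structural content is the single application of the bialgebra rule outlined above.
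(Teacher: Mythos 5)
Your high-level plan --- localise the Z/X junction by unfusing, fire the bialgebra rule once, and let everything disconnect --- is the right family of argument: the paper's proof is exactly a short diagrammatic derivation of Hopf from \StrongCompRule, and the authors note it mirrors Feng's computation and Duncan and Dunne's abstract proof via the dualiser. So the skeleton is not in dispute.

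The gap is in the middle step. The left-hand side of \eqref{fake-compl} is a \emph{cycle}: the Z-spider and the X-spider share two wires (one of them carrying the antipode). The left-hand side of \StrongCompRule, by contrast, is an \emph{acyclic} configuration --- a Z-spider and an X-spider joined by a single wire --- so after your unfusion step there is no ``clean Z/X-bialgebra configuration'' to which the rule applies; you still have a bigon. The essential move, which your proposal never makes, is to break that cycle by bending one of the two wires around through a cup of one colour and a cap of the other. This is precisely where the antipode enters (the mismatch between the Z- and X-cups is the dualiser $H^2$ of \eqref{ft}), and it is the heart of the Duncan--Dunne proof the paper follows; treating \eqref{h} and \eqref{ft} as incidental ``absorbable Hadamards'' misses the point of why this is a Hopf law rather than a consequence of fusion. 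Relatedly, your claim that ``the bialgebra combinatorics makes the bipartite configuration factor'' is not correct as stated: \StrongCompRule produces a \emph{complete bipartite} graph, which is connected. The disconnection in the Hopf identity comes from the degenerate-arity spiders (zero-legged units and counits) created by the wire-bending, which then detach via the copy rules (Lemmas~\ref{lem:copy-1}--\ref{lem:copy-4}). Your worry about scalar bookkeeping is legitimate but secondary; without the cycle-breaking step the derivation does not get off the ground.
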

        \begin{lemma}\label{compl}
            \ctikzfig{qudit-complementarity}
        \end{lemma}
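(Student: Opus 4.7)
The plan is to derive the complementarity equation from Lemma~\ref{compl} by reducing it to repeated applications of the Hopf rule~\eqref{fake-compl} after appropriately unfusing both spiders. The idea is that complementarity is a "global" version of Hopf: whereas Hopf handles a single pair of parallel $Z$/$X$ wires, complementarity should handle the full multi-wire bundle connecting a $Z$-spider to an $X$-spider.

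First, I would use the spider fusion rules \SpiderRule{} and~\eqref{xs} in reverse to unfuse the $Z$-spider and the $X$-spider along the bundle of wires joining them: each of the $k$ connecting wires gets its own intermediate two-legged $Z$-spider on one side and two-legged $X$-spider on the other. Since two-legged spiders are identities by Lemma~\ref{x-id} (and the analogous identity for $Z$-spiders from \SpiderRule{}), this rewrite is a pure reorganisation that does not change the diagram. At this point the bundle has been separated into $k$ independent strands, each consisting of a $Z$-spider and an $X$-spider joined by a single wire.

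Next, I would apply the Hopf rule~\eqref{fake-compl} to each of these $k$ strands, which disconnects each $Z$-spider from its corresponding $X$-spider and replaces the connection with the appropriate scalar and cups/caps (or disconnected spiders, depending on the precise form of the Hopf rule stated earlier). After disconnection, I would refuse the leftover one-legged $Z$-spiders into the original (outer) $Z$-spider using \SpiderRule{}, and similarly refuse the one-legged $X$-spiders into the outer $X$-spider using~\eqref{xs}. This yields the completely disconnected $Z$-spider tensor $X$-spider configuration on the right-hand side of the complementarity equation.

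The main obstacle will be bookkeeping the scalars. Each application of Hopf introduces a factor (involving $\sd$ or $\sdi$), and the unfuse/refuse steps will in general also carry residual scalar contributions from \StrongCompRule{} when the number of legs on the intermediate spiders forces scalar corrections. For prime $d$, one expects the total scalar to collapse to a clean power of $\sd$ matching the statement; verifying this requires counting legs carefully. If the direct Hopf reduction produces the wrong scalar, the alternative is to invoke \StrongCompRule{} directly on the unfused configuration, rearranging into a bialgebraic normal form and then using the cyclic rule \MultRule{} together with the copy rules (Lemmas~\ref{lem:copy-1}--\ref{lem:copy-4}) to collapse the remaining structure. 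Either way, the proof is a routine diagrammatic manipulation, with the only subtlety being the precise scalar prefactor.
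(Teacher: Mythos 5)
Your primary route does not work, and the failure is conceptual rather than a matter of scalar bookkeeping. After you unfuse the bundle into $k$ independent strands, each strand is a $Z$-spider joined to an $X$-spider by a \emph{single} wire, and the Hopf rule~\eqref{fake-compl} simply does not apply there: it is a statement about a pair of parallel wires between the two spiders, not about one wire, and a single $Z$--$X$ wire certainly cannot be ``disconnected'' (if it could, every connected diagram would fall apart). Even if you instead tried to pair the connecting wires up into genuine two-wire Hopf instances, this fails for two further reasons: for odd $d$ the $d$ wires cannot all be paired, and --- more importantly --- the qudit version of~\eqref{fake-compl} is only a ``fake'' complementarity that leaves a residual connection (under the arithmetic reading of~\eqref{eq:qudit-correspondence}, two parallel wires implement $x\mapsto 2x$, which is invertible for odd $d$, not the zero map). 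Disconnection is a \emph{collective} property of the whole bundle: $k$ parallel wires between a copy ($Z$) and an addition ($X$) spider implement $x\mapsto k\cdot_d x$, and this is the constant-zero map --- hence a product state, hence a disconnected diagram --- precisely when $k\equiv 0 \pmod d$. Any wire-by-wire argument is blind to this modular condition and so cannot be correct.

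The approach you mention only as a fallback is essentially the paper's actual proof: apply the bialgebra rule \StrongCompRule{} to the full unfused configuration and then collapse the result with the copy rules. What is missing from your sketch is the one idea that makes this work, namely that the multiplicity of the connection being exactly $d$ is what turns the post-bialgebra residue into a $\ket{0}$-type state that the opposite-colour spider copies and deletes, yielding the disconnected right-hand side; the scalar then comes out of the $(n-1)(m-1)$ count in \StrongCompRule{} rather than from per-wire Hopf applications. As written, the proposal proves nothing for general $k$ and does not isolate the role of $k=d$, so it should be reworked around the bialgebra/copy argument.
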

        \begin{lemma}\label{lem:special}
            \ctikzfig{qudit-additive-negation}
        \end{lemma}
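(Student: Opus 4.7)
The proof plan is to exploit the algebraic interpretation of the ZH-generators given in Eq.~\eqref{eq:qudit-correspondence}: the Z-spider as copying, the X-spider as negated addition $(x,y)\mapsto -_d(x+_d y)$, and the $H$-box (sandwiched with suitable Hadamards) as multiplication modulo $d$. With this dictionary, the intended arithmetic content of the claimed identity can be read off directly, and the task reduces to producing a formal diagrammatic rewrite sequence that matches this semantic identity. Since the lemma is billed as a generalisation of Yeh and van de Wetering's qutrit \emph{special} rule, I expect both sides to ultimately encode the map $\ket{x}\mapsto\ket{-_d x}$, which is already known to be expressible either as $H^2$ or as a one-ary X-spider.

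The main steps, in order, are as follows. First, I would unfold the X-spiders on the left-hand side using their definition \eqref{eq:derived-gen} as Z-spiders conjugated by Hadamards, reducing the diagram to a configuration involving only Z-spiders and H-boxes. Second, I would use spider fusion \SpiderRule and H-box contraction \HFuseRule to coalesce any directly connected same-type generators, and push Hadamards around with Lemma~\ref{push-through} and Lemma~\ref{h-push-through} until the Z/X and Z/H bialgebra rules \StrongCompRule and \HCompRule become applicable. Third, I would apply these bialgebra rules to open up any closed subdiagrams into a form where the copy rules (Lemmas~\ref{lem:copy-1}--\ref{lem:copy-4}) and the complementarity rule (Lemma~\ref{compl}) collapse self-loops. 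Finally, a single invocation of the cyclic rule \MultRule (together with the identity derivation \eqref{ft} for $H^4=\id$) should fold any residual Pauli-$X$ loop into the antipode, yielding the right-hand side.

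The principal obstacle will be the careful bookkeeping of scalar factors of $\sd$ and $\sdi$. Each application of \StrongCompRule contributes a factor $\sd^{(n-1)(m-1)}$, each use of \HCompRule introduces further Hadamard-induced scalars, and collapsing a loop via complementarity produces another factor of $\sd$ from a closed sum over $\mathbb Z_d$. In the qubit case these scalars happen to be trivial or cancel automatically, but in the qudit setting they must be explicitly tracked and reconciled using the cyclic rule \MultRule and Lemma~\ref{lem:inner-prod}-type scalar cancellation before the two sides match on the nose. I expect this scalar accountancy, rather than the underlying combinatorics of the rewrite, to be the delicate part of the argument, and it is the place where a direct analogue of the qutrit proof may need the most adjustment.
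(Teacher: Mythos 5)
Your submission is a strategy outline rather than a proof, and the gap is not merely one of polish. You never pin down what the lemma actually asserts: you \emph{guess} that both sides encode $\ket{x}\mapsto\ket{-_d x}$, and every subsequent step is conditional on that guess (``I expect\ldots'', ``should fold\ldots''). Lemma~\ref{lem:special} is a fixed diagrammatic identity between Z- and X-spiders joined by a specific number of parallel wires (the qudit analogue of the qutrit two-wire ``special'' rule, cf.~\cite[Eq.~21]{booth2022}), and its semantic content is the arithmetic fact that copying $x$ and feeding the copies into an additive X-spider realises a multiple of $x$; a proof must exhibit a concrete rewrite sequence witnessing this for general $d$, which you do not do. The two places you defer --- which bialgebra applications are actually made, and how the resulting $\sd$/$\sdi$ factors reconcile --- are precisely the entire content of the proof, so nothing checkable remains.

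Your proposed route is also needlessly indirect compared with the paper's. The paper's derivation stays entirely inside the Z/X fragment: it is a short chain using the Z/X bialgebra rule \StrongCompRule, spider fusion \SpiderRule and \ref{lem:xs}, and the Hopf-type Lemmas~\ref{lem:hopf} and~\ref{compl}, in the style of the dualiser arguments of Duncan--Dunne and Booth--Carette. Your plan instead unfolds the X-spiders into H-box form, invokes H-box contraction \HFuseRule, and finishes with the cyclic rule \MultRule. The cyclic rule concerns the multiplicative structure of $\mathbb{Z}_d$ encoded by H-boxes and plays no role in this purely additive identity; routing through it would force you to re-derive the Z/X bialgebra interaction from the Z/H one, multiplying exactly the scalar bookkeeping you identify as the delicate part. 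If you want to complete the argument, commit to the statement (which number of connecting wires, which scalars on each side), then give the explicit sequence: bialgebra to split the multi-edge, fuse the resulting same-coloured spiders, and apply Hopf/complementarity to the remaining loops, tracking each $\sd$ as it appears.
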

        \begin{lemma}\label{lem:inner-prod}
            \ctikzfig{qudit-lemma-2-4}
        \end{lemma}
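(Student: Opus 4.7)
The plan is to reduce the identity to a straightforward scalar computation and then replicate that computation diagrammatically using the rules of Figure~\ref{fig:ZH-rules} together with the derived rules already established in this appendix. Since both sides of the equation are closed diagrams (no free inputs or outputs), they evaluate to complex scalars, so the statement reduces to matching the scalar values and showing that the simplification can be carried out purely by the rewrite rules.

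First, I would compute the LHS semantically using the definitions of the Z-spider and X-spider given in Section~\ref{qudit-zh} and Eq.~\eqref{eq:derived-gen}. A phase-free Z-spider with no inputs and $n$ outputs is $\sum_{i=0}^{d-1}|i\rangle^{\otimes n}$, and an X-spider is the same up to conjugation by Hadamards. Evaluating the composition by summing over the common index yields a factor of $d$, and any $\sd$ or $\sdi$ tokens present in the diagram combine into the appropriate power of $\sqrt{d}$. This fixes the target scalar on the RHS.

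Second, diagrammatically I would proceed by applying the color-change rule \eqref{h} to bring both spiders to a common color, then fuse the resulting same-colored spiders using \spiderrule (or Lemma~\ref{lem:xs} for the X-spider version). The resulting diagram should reduce to a configuration handled by the Hopf-style Lemma~\ref{lem:hopf} or by complementarity (Lemma~\ref{compl}), which produce an isolated scalar of the form $\sd^k$. A final cleanup with \IDRule and the $\sd \cdot \sdi = 1$ convention yields the RHS.

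The main obstacle will be bookkeeping of the $\sqrt{d}$ factors. The bialgebra rule \StrongCompRule introduces a power of $\sd$ depending on the arities of the spiders involved, the color-change derivation \eqref{h} leaves residual Hadamards which then contribute scalar factors of their own via \eqref{ft}, and the Hopf rule likewise carries a scalar. Getting all of these to line up exactly with the scalar computed semantically is the delicate part; the structural manipulations themselves are routine. If the scalars fail to match in the first attempt, the fix is typically to insert or remove pairs of $\sd, \sdi$ using \IDRule and re-apply the bialgebra rule on a subdiagram with the right arity so that the scalar contribution is balanced.
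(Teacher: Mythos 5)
There is a genuine gap here, on two counts. First, your text is a plan rather than a derivation: the opening semantic computation of the scalar value only establishes soundness of the equation, not its derivability in the calculus, and the diagrammatic part is left entirely conditional (``should reduce to'', ``if the scalars fail to match \ldots the fix is typically to \ldots''). For a derived rewrite rule the whole content of the lemma is the explicit rule-by-rule derivation, and that is exactly the part you have not supplied.

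Second, the route you sketch appears to be aimed at the wrong pair of generators. This lemma is one of the scalar-cancellation rules inherited from the qubit ZH-calculus, and the paper's proof is a single application of the Z/H bialgebra rule \HCompRule, ``unchanged from the qubit case'': the diagram is an interaction between a Z-spider and an H-box, and \HCompRule collapses it directly. Your proposal instead treats it as a Z-spider/X-spider configuration --- colour-change via \eqref{h}, X-spider fusion, then the Hopf rule (Lemma~\ref{lem:hopf}) or complementarity (Lemma~\ref{compl}) --- and invokes \StrongCompRule for the scalar bookkeeping. None of those rules addresses a Z-spider meeting an H-box, so the derivation as outlined would not go through; and even if the diagram could first be massaged into a pure Z/X form, that detour is unnecessary and is precisely where your anticipated $\sqrt{d}$-bookkeeping troubles would arise. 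The short correct proof is to apply \HCompRule once, exactly as in the qubit argument it generalises.
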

        \begin{lemma}\label{lem:dim}
            \begin{equation}
                \label{d}\tag{d}\tikzfig{qudit-dim-cancel}
            \end{equation}
        \end{lemma}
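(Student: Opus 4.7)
The plan is to verify Lemma~\ref{lem:dim} (the \eqref{d} rule) by unpacking the scalar content of the loop on the left-hand side and showing it cancels precisely with the $\sdi$ factors. Since I do not have the figure in front of me, I reason from the name \emph{qudit-dim-cancel} and tag \eqref{d}: this is the qudit analogue of the standard ``dimension loop equals $d$'' identity, decorated with enough copies of $\sdi$ (i.e.\ $1/\sqrt d$) to render it equal to the empty diagram.

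First I would rewrite the core of the LHS into a closed Z/X loop. By the derived X-spider identities (Lemma~\ref{x-id}, Lemma~\ref{x-id-2}) and the X-spider fusion rule~\eqref{xs}, an arbitrary wire loop passing through Z- and X-spiders can be collapsed to a single Z-spider joined to a single X-spider along one edge with no external legs. The semantic value of such a loop is
\[
    \sum_{i,j \in \mathbb{Z}_d} \langle i \mid j\rangle \langle i \mid j\rangle = \sum_{i=0}^{d-1} 1 = d,
\]
since the Z-spider copies the computational basis while the X-spider sums over $\mathbb{Z}_d$. Diagrammatically, this evaluation is exactly what the complementarity rule (Lemma~\ref{compl}) together with the Hopf-style rule~\eqref{fake-compl} give us: two connecting wires between Z and X disconnect into independent copy/sum states whose inner product is $d$.

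Next I would collect the scalar factors. The LHS carries some fixed number of $\sdi$ generators, each contributing $1/\sqrt d$. Using the scalar fusion already available from \StrongCompRule applied to $0$-ary spiders (which makes $\sd \cdot \sdi = 1$ and $\sdi \cdot \sdi = 1/d$), the external scalars multiply cleanly. Combining the loop's contribution $d$ with the two $\sdi$ scalars yields $d \cdot (1/\sqrt d)^2 = 1$, i.e.\ the empty diagram, matching the RHS.

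The main obstacle will be bookkeeping the scalar count: in the qudit setting the Z/H bialgebra rule \HCompRule introduces extra Hadamards and the rule \StrongCompRule has a scalar correction $\sd^{(n-1)(m-1)}$ that the qubit case hides, so each rewrite step can silently produce or consume $\sqrt d$ factors. To handle this cleanly, I would perform the calculation by tracking the exponent of $\sd$ throughout, using Lemma~\ref{lem:inner-prod} at the end to absorb any residual $\sd \cdot \sdi$ pairs, and verify that the total power of $\sqrt d$ is zero. Once the scalar ledger balances, the equation reduces to the tautology $1 = 1$ and the rule is established.
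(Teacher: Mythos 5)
There is a genuine gap here, on two fronts. First, the core of your argument is a semantic evaluation: you compute that the closed Z/X loop denotes $\sum_{i}1=d$ and then multiply scalars as complex numbers. But Lemma~\ref{lem:dim} sits in the appendix of \emph{derived} rewrite rules, whose whole point is that each equation is obtained by rewriting inside the calculus from the axioms of Figure~\ref{fig:ZH-rules} and the earlier derived lemmas; soundness of the generators is handled separately in Appendix~\ref{soundness}. Checking that both sides have the same denotation shows the rule is \emph{sound}, not that it is \emph{derivable}, and the paper's actual proof is a short purely diagrammatic derivation (in the style of Backens \emph{et al.}'s Lemmas 2.3--2.4 for qubits), not a matrix computation.

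Second, the step where you invoke ``scalar fusion already available from \StrongCompRule applied to $0$-ary spiders (which makes $\sd\cdot\sdi=1$)'' is both incorrect and circular. The generators $\sd$ and $\sdi$ are not primitive scalars you may multiply numerically: they are \emph{derived} generators defined in \eqref{eq:derived-gen} as specific small closed diagrams, and the statement that they cancel against the dimension is precisely the content of the scalar-cancellation lemma you are trying to prove --- you cannot assume it as an available fusion fact. Nor does \StrongCompRule with $n=m=0$ yield this: that instance only introduces or removes a single $\sd$ via its $\sd^{(n-1)(m-1)}$ correction term and says nothing about $\sdi$. A correct derivation has to unfold the definitions of $\sd$ and $\sdi$ as diagrams and cancel them using \HCompRule, \HFuseRule and the earlier lemmas (the route the paper takes, cf.\ also Lemma~\ref{lem:inner-prod}); your proposal defers exactly this step to an assumed ``scalar ledger'' that is never actually balanced diagrammatically.
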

    \end{multicols}

    \renewcommand{\citehack}{\cite[Lem. 2.11]{backens2021}}
    \begin{proof}[Proof of \ref{x-id} \citehack]
        \begin{equation*}
            \tikzfig{qudit-x-id-proof}
        \end{equation*}
    \end{proof}
    
    \begin{proof}[Proof of \ref{x-id-2} \citehack]
        Follows immediately from \eqref{ft}.
    \end{proof}

    \renewcommand{\citehack}{\cite[Lem.~2.15, Lem.~2.16]{backens2021}}
    \begin{proof}[Proof of \ref{push-through} \citehack{} and \ref{push-through-2}]
        Follow by alternating application of \eqref{h}, \XDef and then \eqref{ft}. 
    \end{proof}
    
    \begin{proof}[Proof of \ref{h-push-through}]
        \begin{equation*}
            \tikzfig{qudit-h-push-through-pure-proof} \qedhere
        \end{equation*}
    \end{proof}
    
    \renewcommand{\citehack}{\cite[Lem. 2.10]{backens2021}}
    \begin{proof}[Proof of \ref{lem:xs} \citehack]
        \begin{equation*}
            \tikzfig{qudit-x-fusion-proof}\qedhere
        \end{equation*}
    \end{proof}
    
    \begin{proof}[Proof of \ref{xs-2}]
    \begin{equation*}
            \tikzfig{qudit-x-fusion-proof-2}\qedhere
    \end{equation*}
    \end{proof}
    
    \renewcommand{\citehack}{\cite[Lem. 2.21, Lem. 2.23, Lem. 2.26]{backens2021}}
    \begin{proof}[Proofs of \ref{lem:copy-1} through \ref{lem:copy-4} \citehack]
        Follow immediately from \strongcomprule and \hcomprule.
    \end{proof}
    \begin{proof}[Proof of \ref{lem:hopf}]
        \ctikzfig{qudit-fake-complementarity-proof}
        Note that this proof is very similar to the one given by Feng \cite[Eq. 4.3]{feng2018} and the abstract one done using the dualizer by Duncan and Dunne \cite[Thm. 4.6]{duncan2016}. Lastly, Booth and Carette considered this Lemma to be a rewrite rule \cite[Eq. 21]{booth2022}
    \end{proof}
    \renewcommand{\citehack}{\cite[Lem 2.30]{backens2021}}
    \begin{proof}[Proof of \ref{compl} \citehack]
    \begin{equation*}
            \tikzfig{qudit-complementarity-proof}\qedhere
    \end{equation*}
    \end{proof}
    
    \renewcommand{\citehack}{\cite[Eq. 21]{booth2022}}
    \begin{proof}[Proof of \ref{lem:special} \citehack]
    \begin{equation*}
            \tikzfig{qudit-additive-negation-proof}\qedhere
    \end{equation*}
    \end{proof}
    
    \renewcommand{\citehack}{\cite[Lem. 2.5]{backens2021}}
    \begin{proof}[Proof of \ref{lem:inner-prod} \citehack]
        Unchanged from the qubit case via \hcomprule.
    \end{proof}
    \renewcommand{\citehack}{\cite[Lem. 2.3, Lem 2.4]{backens2021}}
    \begin{proof}[Proof of \ref{lem:dim} \citehack]
    \begin{equation*}
            \tikzfig{qudit-dim-cancel-proof}\qedhere
    \end{equation*}
    \end{proof}
    
    \begin{remark}
        Note that we can also express $\sdi$ using just $Z$- and $X$-spiders in a slightly more intuitive way via 
        \[
            \sdi = \tikzfig{qudit-alt-1-over-sqrt-d}.
        \]
        The proof of this is slightly more complicated. It is based upon the identity 
        \begin{equation}
            \tag{$*$}\tikzfig{qudit-dim}
        \end{equation}
        We then have 
        \ctikzfig{qudit-dim-cancel-catalysis}
        and adding a $\sdi$ to both sides yields the desired identity. Furthermore, Booth and Carette \cite[Fig 1.]{booth2022} give 
        \[
            \sdi = \tikzfig{qudit-alt-alt-1-over-sqrt-d}
        \]
        For this version, we have 
        \ctikzfig{qudit-alt-alt-1-over-sqrt-d-proof}
    \end{remark}



    
    \section{Building Diagrams}\label{appendix:construct}
    
    In this appendix, we apply the algorithm outlined in Section \ref{universality-1} to construct a phase-free ZH-diagram for the operator $R$ we used in Section \ref{universality-2} to construct our successor gadget. We also outline how one would go about constructing a phaseless ZH-diagram for $M$, our multiplexer map. 
    
    Recall that we defined $R|i\rangle = |i\rangle + |i+_d1\rangle$, meaning we have 
    \[
        R = \begin{pmatrix}
                1&0&\cdots&0& 1
                \\1&\ddots&\ddots&&0
                \\0&\ddots&\ddots&\ddots&\vdots
                \\\vdots&\ddots&\ddots&\ddots&0
                \\0&\cdots&0&1&1
            \end{pmatrix}.
    \]
    This means that the formula $\varphi_R$ describing the locations of the ones in this matrix is 
    \[
        \varphi_R(x, y) = (y=x) \vee (y=x+_d1).
    \]
    Following the construction of Proposition \ref{formula-to-polynomial}, the polynomial
    \[
        p_R(x,y) = (x-y) \cdot (x+1-y)
    \]
    has roots whenever $\varphi_R(x,y)$ is true. To turn this polynomial into a ZH-diagram realizing $R$, we first realize the map $|x,y\rangle \mapsto |p_R(x,y)\rangle$ via the inductive procedure presented in Lemma \ref{lem:polynomial} to get
    \[
        p_R(x,y) = x^2 + x - xy - yx - y + y^2 = y^2 + (-2x-1)y + (x^2 + x).
    \]
    We thus need to start by constructing diagrams for the polynomials $p_1(x) = -2x-1$ and $p_0(x) = x^2 + x$ (technically, the proof of Lemma \ref{lem:polynomial} starts by constructing diagrams for 0-ary polynomials, e.g. constants such as $-1$ and $-2$. For brevity, we inline this step into the construction of $p_0$ and $p_1$):
    \[
        p_0: \tikzfig{polynomials/p0}\qquad\text{and}\qquad p_1:\tikzfig{polynomials/p1}
    \]
    This then leads to the following diagram for $p_R$
    \[
        \tikzfig{polynomials/pr}
    \]
    Applying $|x\rangle\mapsto|x^{d-1}\rangle$, post-selecting with $H(0)^T$ and bending up the $y$-wire then yields a diagram for $R$. We could now simplify this diagram into something more manageable. Alternatively, we can observe that we do not need to follow the construction of $|x,y\rangle \mapsto |p_R(x,y)\rangle$ given in the proof of Lemma \ref{lem:polynomial}. Often, the structure of a polynomial allows a much simpler ad-hoc construction based on the gadgets given in (\ref{useful}) and (\ref{eq:qudit-correspondence}). In our case, we get 
    \[
        \tikzfig{polynomials/pr_simple}
    \]
    which is significantly easier to simplify, in particular due to the absence of exponentiator gadgets. 
    
    Now, recall our definition of $M$:
    \begin{equation*}
        |x_0...x_{d-1}\rangle \otimes |c\rangle \ \mapsto\  \begin{cases}
                                                            |x_c\rangle & x_j=0\text{ for all } j\ne c
                                                            \\0&\text{otherwise}.
                                                        \end{cases}
    \end{equation*}
    Here, we get the formula 
    \[
        \varphi_M(x_0,...,x_{d-1}, c, y) = \bigvee_{i=0}^{d-1} \left((c = i) \wedge (y = x_i) \wedge \bigwedge_{\substack{j=0\\j\ne i}}^{d-1} (x_j = 0)\right).
    \]
    To translate $\varphi_M$ into a polynomial, we first need to apply deMorgan's law ($A\wedge B= \neg (\neg A \vee \neg B)$) to turn the conjunctions into disjunctions, to get
    \[
        \varphi_M(x_1,...,x_{d-1},c,y) = \bigvee_{i=0}^{d-1} \neg \left((c\ne i) \vee (y\ne x_i) \vee \bigvee_{\substack{j=0\\j\ne i}}^{d-1} (x_j \ne 0)\right)
    \]
    which then translates into the polynomial
    \[
        p_M(x_0,...,x_{d-1},c,y) = \prod_{i=0}^{d-1} \left(1- \left( \left( 1 - (c -i)^{d-1} \right) \cdot \left(1 - (y-x_i)^{d-1}\right) \cdot\prod_{\substack{j=0\\j\ne i}}^{d-1} \left(1-x_j^{d-1}\right)\right)^{d-1}\right).
    \]
    We omit the construction of the associated phase-free ZH-diagram, due to the immense size of the resulting diagram.


\end{document}